\documentclass[11pt,a4paper,twoside]{article}

\usepackage[utf8]{inputenc}
\usepackage[margin=2cm]{geometry}
\usepackage[shortlabels]{enumitem}
\usepackage[page,toc]{appendix}
\usepackage{amsmath,amssymb,graphicx,xcolor,subcaption,float,amsthm,mathrsfs,mathtools,bbold,framed,url,amscd,soul,mathabx,tikz,tikz-cd}
\setlength{\marginparwidth}{2cm}
\usepackage[color=yellow]{todonotes}
\usepackage{comment}
\setstcolor{red}

\usepackage{hyperref}
\hypersetup{
colorlinks=true,
}
\numberwithin{equation}{section}

\newcommand{\nocontentsline}[3]{}
\newcommand{\tocless}[2]{\bgroup\let\addcontentsline=\nocontentsline#1{#2}\egroup}

\newtheorem{theorem}{Theorem}
\newtheorem{corollary}[theorem]{Corollary}

\newtheorem{proposition}{Proposition}

\theoremstyle{definition}

\theoremstyle{remark}
\newtheorem{remark}{Remark}[section]

\usepackage{fancyhdr}
\pagestyle{fancy}
\headheight=15pt
\fancyhead{}
\fancyfoot{}

\fancyhead[LO,RE]{\MakeUppercase \rightmark}
\fancyhead[LE,RO]{\textbf{\thepage}}

\tolerance=1
\emergencystretch=\maxdimen
\hyphenpenalty=10000
\hbadness=10000

\newcommand{\bs}[1]{\boldsymbol{#1}}
\newcommand{\wh}[1]{\widehat{#1}}

\newcommand{\negphantom}[1]{\settowidth{\dimen0}{#1}\hspace*{-\dimen0}}
\def\p{{\partial}}
\def\zh{{\bs{\widehat{z}}}}
\def\rmd{{{\rm d}}}

\setcounter{tocdepth}{1}

\begin{document}
	\title{Wave-current interaction on a free surface}
	\author{Dan Crisan\thanks{Department of Mathematics, Imperial College London.} \and Darryl D. Holm\footnotemark[1] \and Oliver D. Street\footnotemark[1]}
	\date{}
	\maketitle
	\begin{center}
	\emph{The rest \dots is a series of speculations, which we hope to verify eventually.}-- H. Segur et al. \cite{HMSS1995}
	\end{center}
	\begin{abstract}
	    The classic evolution equations for potential flow on the free surface of a fluid flow are not closed because the pressure and the vertical velocity dynamics are not specified on the free surface. Moreover, their wave dynamics does not cause circulation of the fluid velocity on the free surface. The equations for free-surface motion we derive here are closed and they are not restricted to potential flow. Hence, true wave-current interaction dynamics can occur. In particular, the Kelvin-Noether theorem demonstrates that wave activity can induce fluid circulation and vorticity dynamics on the free surface. The wave-current interaction equations introduced here open new vistas for both the deterministic and stochastic analysis of nonlinear waves on free surfaces.
	\end{abstract}

\tableofcontents

\section{Introduction}

\paragraph{Background.}Waves are disturbances in a medium which propagate due to a restoring force, such as gravity. Currents are flows which transport physical properties, such as mass and heat. When waves propagate in a moving medium, the motion of the medium can affect the waves, and vice versa, the waves can affect the motion of the medium, as they both respond to the same force. The primary example is wave-current interaction on the free surface of a fluid flow under the influence of gravity. This mutual wave-current interaction is the province of nonlinear water-wave theory. { In nonlinear water-wave dynamics on a free surface, the distinction between waves and currents is clear: the vertical velocity and surface elevation are wave variables; while the horizontal fluid velocity components and areal mass density are current variables. This is particularly clear in the Hamiltonian formulation of wave-current interaction dynamics, in which the symplectic Poisson operator for the two independent degrees of freedom separates into block diagonal form.}

Water waves -- waves on the surface of a body of water -- have fascinated observers over the ages, not only because water waves are so easily observed, and not only because they move; but primarily because they form coherent moving deformations of the water surface which can interact with each other in a multitude of ways. Any disturbance -- even scooping your hand in a narrow channel of shallow water, for example -- will resolve itself into a train of coherent solitary waves with a few extra ripples which are left behind as the coherent solitary waves propagate away from the disturbance. The fascination in observing the creation of coherent water waves from arbitrary disturbances was captured in the famous report by the Victorian engineer John Scott Russell in August 1834, when he saw a solitary wave create itself from an impulse of current and then start propagating along a Scottish canal. As he wrote \cite{JSRussell-1834},  
\begin{quote}
    I followed it on a horseback, and \dots after a chase of one or two miles I lost it in the windings of the channel. Such, \dots was my first chance interview with that singular and beautiful phenomenon.
\end{quote}
Although the classic water-wave theory introduced in 1847 by Stokes \cite{Stokes1847} now has a long history, see, e.g., \cite{Craik2004,Craik2005}, the excitement in the chase for mathematical understanding of water waves still continues. In particular, John Scott Russell's ``singular and beautiful phenomenon'' is now called a \emph{soliton}. The  word `soliton' was coined in a 1965 paper by Zabusky and Kruskal \cite{ZK1965} and this word has more than 6 million Google hits, as of this writing. The sequence of approximate shallow water equations exhibiting soliton behaviour includes the Korteweg-de Vries (KdV) equation in 1D \cite{KdV1895}, as well as the Kadomtsev-Petviashvili (KP) \cite{KP1970} equation, which extends the KdV equation to allow weak transverse spatial dependence. Indeed, the solution behaviour of soliton water wave equations still inspires mathematical progress in the theory of integrable Hamiltonian systems of nonlinear partial differential equations and their discretizations in space and time. For a good summary of the early developments of soliton theory, see Ablowitz and Segur \cite{Ablowitz-Segur1981}. {For historical discussions of water wave theory see \cite{Craik2004, Craik2005, Darrigol2003}.} For modern mathematical discussions of the classic water-wave theory introduced by Stokes \cite{Stokes1847}, see, e.g., \cite{Ablowitz-etal2006, CastroLannes2015, Lannes2005, LannesBook2013, Lannes2020}. {A few references among modern treatments of water wave theory which are similar in spirit to the present work are \cite{Benney1973,CastroLannes2015,Choi1995,Wu1999,Wu2001}.
A classic review of the various historical formulations of the wave-current interaction problem is \cite{Peregrine1976}.
}

\paragraph{Objectives and methodology.} { 



Within the framework of wave-current interaction, one notes that `waves' can propagate along the free surface either with the flow, or as a travelling shift in the phase of the elevation which does not carry mass as it propagates on the surface of the flow. For example, if one were to place dye within a wave elevation, the wave need not carry that dye along with it (although waves with this property certainly can exist). Indeed, the wave reported by Russell in 1834 was propagating along the canal at a speed which required a horse to keep up with it, although it is a safe assumption that the `current' flow velocity in the canal was much slower. In fact, modern water-wave experiments such as those of T.Y. Wu \cite{Wu1987} report observations of periodic emission of waves which propagate in the opposite direction of the current in shallow-water flow over a submerged obstacle. In a situation where the free surface elevation follows the currents, the waves would be associated with mass transport and their rate of propagation would equal the fluid transport velocity. The present work will develop models in which the free surface waves can either propagate with the fluid transport velocity as in the classic water wave (CWW) theory, or propagate in either direction on the background flow of the fluid transport velocity.
}

The classic water wave equations (CWWE) for potential flow on a free surface comprise the kinematic constraint at the free boundary and the horizontal gradient of Bernoulli's Law for the case of potential flow restricted to the surface. This paper has two primary objectives based on the CWWE. 

The first objective is to augment the CWWE to include fundamental physical aspects of wave-current interaction on a free surface (WCIFS). These physical aspects include vorticity, wave-current coupling in which the wave activity creates fluid circulation, non-hydrostatic pressure, incompressibility, and horizontal gradients of buoyancy. 

The multi-scale, fast-slow aspects of the wave-current interaction comprise a grand challenge for modern computational simulation. This challenge is particularly important in computational simulations of global ocean circulation.  In ocean physics, the fast-slow aspects of wave-current interaction tend to introduce irreducible imprecision even beyond computational uncertainty  because of unresolvable, or even unobservable, processes \cite{McW-irred2007}. This situation leads to the paper's second objective.

The paper's second objective aims to introduce stochastic transport of wave activity by fluid circulation which is intended to be used in combination with data assimilation to model uncertainty due to the effects of fast, computationally unresolvable, or unknown effects of WCI on its slower, computationally resolvable aspects. 

To pursue these two objectives, we will begin by using the Dirichlet-Neumann operator (DNO) for 3D potential flow of a homogeneous Euler fluid to impose the kinematic and dynamic boundary conditions of CWWE as \emph{constraints} on the motion of the free surface in the Euler-Poincar\'e (EP) variational principle for ideal fluids, \cite{HMR1998}. The EP formulation is an extension of earlier variational principles for the CWWE, \cite{Bateman1929,Luke1967}. In using the CWWE as constraints, the EP variational principle introduces additional dynamical equations for the Lagrange multipliers. The Lagrange multipliers are interpreted as the vertical velocity $\wh{w}$ and the areal mass density $D$, arising  as Hamiltonian variables canonically conjugate to the elevation $\zeta$ and the surface velocity potential $\wh{\phi}$, respectively. The resulting Hamiltonian equations are referred to as extended CWWE, abbreviated ECWWE. 

After a discussion of alternative formulations which elicit a variety of properties of the ECWWE solutions, we use the EP approach to add further aspects of wave-current interaction, which include vorticity, as well as non-hydrostatic pressure and buoyancy gradients in the free surface flow. 

We also introduce a wave-current minimal coupling (WCMC) term into the action integral for the EP variational principle which generates fluid circulation from wave activity and vice-versa. The EP variational equations are referred to here as wave-current interaction on a free surface (WCIFS) equations. Finally, to model the uncertainties associated with the computations of these multi-scale fast-slow WCIFS equations, we introduce stochastic advection by Lie transport (SALT) of the wave activity by the current flow, again following the EP variational approach, as in \cite{Holm2015}. In the analytical sections of the paper, both the deterministic and SALT versions of our WCIFS equations are shown to be locally well-posed, in the sense of existence, uniqueness, and continuous dependence on initial conditions.

\paragraph{Plan of the paper.} 
\begin{itemize}

\item 
Section \ref{sec: Problem statement} reviews the problem statement, boundary conditions and key relations in the classical framework of three-dimensional fluid flows under gravity with a free surface. The free surface elevation is measured from its rest position, which defines the origin of the vertical coordinate $z = 0$. The elevation, $z = \zeta(\bs{r}, t)$, is a function of the horizontal position vector $\bs{r}= (x, y, 0)$ and time $t$. A key relation is stated in equation \eqref{hat-advectionX}. Namely, when evaluated on the free surface, the material time derivative of a function $f(\bs{r}, z, t)$ and its projection onto the free surface $f(\bs{r}, \zeta(\bs{r}, t), t)$ are equal. The projection relation in equation \eqref{hat-advectionX} then leads to \emph{Choi's relation} \eqref{ChoiEqnX} for the dynamics of the free surface, \cite{Choi1995}.

\item 
Section \ref{subsec:CWWE} reviews the formulation of the classic water-wave equations (CWWE) for free surface dynamics  in terms of the Dirichlet-Neumann operator (DNO). Section \ref{subsec:derivingCWWE}  then derives the \emph{extended} CWWE (ECWWE) which include equations for the vertical velocity $\wh{w}$ and preserved area measure $D$ on the free surface. The derivation of ECWWE proceeds by regarding the CWWE as constraints imposed by Lagrange multipliers $\wh{w}$ and $D$ in a new variational principle in equation \eqref{ActionIntegral-FS}, defined in terms of functions on the horizontal mean level of the free surface. The Lie-Poisson Hamiltonian form of the ECWWE is derived in section \ref{sec: HamForm-CWWE}. In section \ref{subsec:CWWEPressure}, non-hydrostatic pressure is incorporated into the ECWWE and the comparison to the key relation in \eqref{ChoiEqnX} is shown in Theorem \ref{Thm: pressure projection}. The ECWW theory satisfies a sort of time-dependent non-acceleration theorem, by which the the fluid and wave circulations are preserved separately. Hence, the ECWWE does not really qualify as genuine wave-current interaction, since the time-dependent flows of real fluids allow exchange of circulation between waves and currents. 

{The non-acceleration property of the ECWWE may be rectified by inserting a term into the Lagrangian which represents the dependence of the kinetic energy on the wave slope. The new kinetic energy term results in a system where the transport velocity is unaffected and the wave dynamics creates circulation in the current flow when the gradients of vertical velocity and surface elevation are not aligned.}

\item 
Section \ref{sec: ACWWE} introduces the \emph{augmented} CWW system (ACWW) which includes a different wave-current interaction which rectifies the non-acceleration property of the ECWWE via a minimal coupling (WCMC) construction. {The additional term involved in this construction introduces a shift in the transport velocity for the wave elevation which depends on the wave slope. The result is that the waves can move relative to the fluid parcels on the free surface, and hence the system can support waves which do not transport mass.} The Kelvin-Noether circulation theorem for the ACWW model in Theorem \ref{KNthm-CFS} shows that the wave dynamics of the ACWW model can create circulation, as shown locally in Theorem \ref{KelThmACWWE} and Corollary \ref{cor: wave gen circ}. In addition, as shown in Theorem \ref{Erg-conserv}, the ACWW model with genuine wave-current interaction preserves the same physical energy as the ECWW model does, for which the fluid and wave circulations are preserved separately. That is, the ACWW model with its wave-current minimal coupling (WCMC) term preserves the same energy as the ECWW model.

Section \ref{sec: FECWWE} includes additional physical properties into the ACWWE to produce our final model of wave-current interaction on a free surface (WCIFS).  
The WCIFS model is derived by modifying Hamilton's principle for ACWW with its wave-fluid coupling term, to add an advected scalar buoyancy variable $\rho$ with nonzero horizontal gradients, as well as non-hydrostatic pressure in \eqref{eq: WCIFS}. The Kelvin-Noether circulation theorem for the WCIFS model is given in Theorem \ref{KNthm-WCIFS}.

\item 
Section \ref{sec: Stoch WCIFS eqns} introduces a method of incorporating stochastic noise into this theory which preserves its variational structure. This is achieved by applying the method of Stochastic Advection by Lie Transport (SALT) \cite{Holm2015}. Within this section, we derive a stochastically perturbed version of the classical water-wave equations, as well as a stochastic variational models of the ECWW, ACWW, and WCIFS models of wave-current interaction of free surfaces. 

\item 
Section \ref{sec: Stoch analysis} lays out the analytical framework for dealing with the compressible and incompressible ECWWE equations discussed in sections \ref{subsec:CWWE} and \ref{subsec:CWWEPressure}, respectively.

\item 
Section \ref{sec: Conclude} discusses potential future research directions and identifies new problems opened up by this work.

\item
Appendix \ref{app: FluidTransTheory}  discusses transformation theory for ideal fluid dynamics and derives the Kelvin circulation theorem by using the Lie chain rule. 

\item 
Appendix \ref{app:ActionIntBdy-1} reviews the 3D inhomogeneous Euler equations for incompressible fluid flow under gravity with a free upper surface and a fixed bottom topography. This is done by deriving these equations using a constrained variational principle. 

\item
Appendix \ref{app:LegXform} treats the reduced Legendre transformation from Hamilton's principle to the Hamiltonian formulation for ECWWE in the Eulerian fluid representation. 

\end{itemize}

\begin{center}
\begin{figure}[H]
\begin{center}
\textbf{Roadmap of the paper}
\end{center}
\bigskip

\begin{tikzcd}
[row sep = 3em, 
column sep=2em, 
cells = {nodes={top color=green!20, bottom color=blue!10,draw=blue!90}},
arrows = {draw = black, rightarrow, line width = .03cm}]
& & \hyperref[sec: Problem statement]{\begin{matrix}
\text{3D free-surface Euler}\\ \text{equations and} \\ \text{Choi's relation}
\end{matrix}} \arrow[d,"{\begin{matrix} \text{2D variational principle using a}\\ \text{Dirichlet-Neumann operator}\end{matrix}}", shorten <= 1mm, shorten >= 1mm] & &\\
& & \hyperref[subsec:derivingCWWE]{\begin{matrix} \text{A variational 2D field}\\\text{theory containing the `Classical }\\\text{water-wave equations' (CWWE)}\end{matrix}} \arrow[dl, "{\begin{matrix} \text{incompressibility of flow of}\\ \text{currents on the free surface}\end{matrix}}" swap, shorten <= 4mm, shorten >= 4mm] \arrow[dr, "{\begin{matrix} \text{wave-current minimal}\\ \text{coupling (WCMC) } \end{matrix}}", shorten <= 4mm, shorten >= 4mm] \arrow[dd,"{\begin{matrix} \text{including pressure,}\\ \text{WCMC, and} \\ \text{buoyancy} \end{matrix}}", shorten <= 2mm, shorten >= 2mm] \arrow[bend right = 62,ddd,crossing over,shorten <= 4mm, shorten >= 4mm]& &\\
& \hyperref[subsec:CWWEPressure]{\begin{matrix} \text{Variational CWWE}\\ \text{with pressure} \end{matrix}} \arrow[dr, swap,crossing over, shorten <= 4mm, shorten >= 4mm]\arrow[bend right = 35,ddr,dashed,shorten <= 4mm, shorten >= 4mm] & & \hyperref[sec: ACWWE]{\begin{matrix} \text{Augmented CWWE with}\\ \text{wave-current coupling} \end{matrix}} \arrow[dl, shorten <= 4mm, shorten >= 4mm]\arrow[bend left = 35,ddl,dashed,shorten <= 4mm, shorten >= 4mm] & \\
& & \hyperref[sec: FECWWE]{\begin{matrix} \text{A variational wave-current}\\ \text{interaction model with} \\ \text{buoyancy (WCIFS)}\end{matrix}}\arrow[d,"{\begin{matrix}\text{Stochastic advection}\\ \text{by Lie transport} \end{matrix}}", shorten <= 2mm, shorten >= 2mm] & & \\
& & \hyperref[sec: Stoch WCIFS eqns]{\begin{matrix} \text{Stochastic variational} \\ \text{wave models.}  \end{matrix}}
\end{tikzcd}
\caption{The figure sketches the relationships among the different models which are derived in the remainder of this article. Dashed arrows represent connections of these exact models to their stochastic versions which are not derived explicitly. However, the missing derivations follow the same patterns as that described in full for the stochastic wave-current interaction model in section \ref{subsec:Stochastic_WCIFS}.}
\label{Roadmap}
\end{figure}
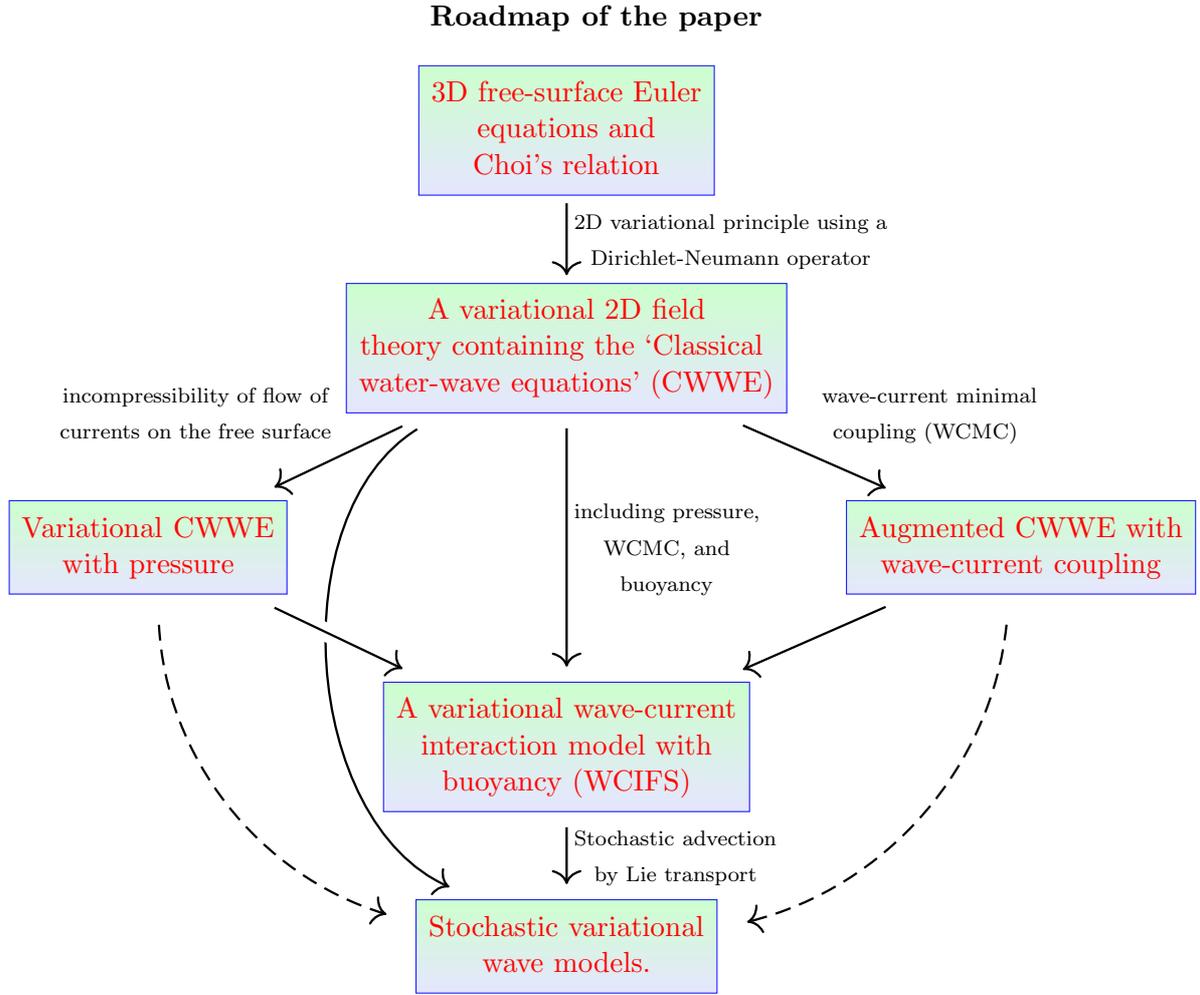
\end{center}

\paragraph{List of abbreviations.}
\begin{itemize}
\item Classical water-wave equations (CWWE)
\item Wave-current minimal coupling (WCMC)
\item Extended CWWE (ECWWE)
\item Augmented CWWE (ACWWE)
\item Wave-current interaction (WCI) 
\item Wave-current interaction on a free surface (WCIFS)
\item Dirichlet-Neumann operator (DNO)
\item Stochastic Advection by Lie Transport (SALT)
\item Euler-Poincar\'e (EP)
\end{itemize}

\section{Problem statement, Boundary Conditions and Key Relations}\label{sec: Problem statement}

\paragraph{Problem statement.}
We study the dynamics of fluid parcels which are constrained to remain on the free surface of a three dimensional fluid with coordinates $\bs{x}=(\bs{r},z)$. Here $\bs{r}=(x,y)$ (resp. $z$) denotes horizontal (resp. vertical) Eulerian spatial coordinates in an inertial (fixed) domain. The fluid domain is bounded below by a rigid bottom at $z = -B(\bs{r})$ and is bounded above by the free surface of the fluid at $z = \zeta(\bs{r}, t)$, which is measured from its rest position at $z = 0$ as a function of the horizontal position vector $\bs{r}= (x, y, 0)$ and time $t$.  

\paragraph{Three-dimensional fluid equations.}
The fluid moves in three dimensions with velocity $\bs{u}(\bs{x},t) = (\bs{v}(\bs{x},t),w(\bs{x},t))$ in which $\bs{v}(\bs{x},t)$ and $w(\bs{x},t)$ denote, respectively, the horizontal and vertical velocity fields.
Incompressible and inviscid fluid motion is governed by the Euler equations of horizontal and vertical momentum dynamics under the constant acceleration of gravity, $g$. The equations are given by
\begin{align}
\begin{split}
\mathcal{D}\bs{v}
&:= \bs{v}_t + \bs{v}\cdot\nabla_{\bs{r}}\bs{v} + w{\bs{v}}_z 
= -\,\frac{1}{\rho}\nabla_{\bs{r}}\pi
\,,\\
\mathcal{D}w&:= w_t + \bs{v}\cdot\nabla_{\bs{r}}w + ww_z 
= -\,\frac{1}{\rho}\pi_z - g 
\,,\\
\hbox{with}\quad
\mathcal{D} &:= \p_t + \bs{v}\cdot\nabla_{\bs{r}} + w\p_z
\quad\hbox{and}\quad 
\nabla_{\bs{r}}\cdot \bs{v} + w_z = 0
\,.
\end{split}
	\label{motioneqn:componentsX}
\end{align}
We denote by $\pi$ the pressure with three dimensional spatial dependence. The volume element is $d^3x=d^2r\wedge dz$, and its measure $Dd^3x$ is preserved under the incompressible fluid flow. The mass density is given by $\rho=\rho_0(1+b(\bs{x},t))>0$, in which $b(\bs{x},t) \coloneqq (\rho-\rho_0)/\rho_0$ is the fluid buoyancy and $\rho_0>0$ is the (constant, positive) reference value of mass density.  The mass in each fluid volume element is given by $\rho Dd^3x$. The condition $\zeta(\bs{r},t)-z = 0$, which defines the free surface, is assumed to be preserved under the flow. This condition ensures that a particle initially on the free surface will remain on it. 

These three preservation relationships may be expressed as the following three \emph{advection relations}, 
\begin{align}
\begin{split}
(\p_t + \mathcal{L}_{\bs{u}})(Dd^3x) = \big(\partial_tD+\nabla\cdot (D{\bs{u}})\big)d^3x&=0
\,,\\
(\p_t + \mathcal{L}_{\bs{u}})\rho = \partial_t\rho+{\bs{u}}\cdot\nabla\rho&=0
\,,\\
(\p_t + \mathcal{L}_{\bs{u}})(\zeta(\bs{r},t) - z) = (\partial_t+{\bs{u}}\cdot\nabla)(\zeta(\bs{r},t)-z) &= 0
\,,
\end{split}
	\label{advectioneqns}
\end{align}
where the operator $(\p_t + \mathcal{L}_{\bs{u}})$ is the advection operator (see Appendix \ref{app: FluidTransTheory}).
{ Requiring the volume measure $Dd^3x$ to remain constant in the first advection relation in \eqref{advectioneqns} implies that the flow velocity remains divergence-free, $\nabla\cdot {\bs{u}} =0$. The preservation of the divergence-free condition under the fluid flow then implies a Poisson equation for the fluid pressure $\pi$ in the motion equation \eqref{motioneqn:componentsX}. }

The motion equations in \eqref{motioneqn:componentsX} and the initial values for the advected quantities $\rho(\bs{r},z,t)$ and $(\zeta(\bs{r},t)-z)$ in the advection relations in \eqref{advectioneqns} provide a complete specification of the initial value problem for the fluid motion with appropriate boundary conditions in three dimensions. 

\paragraph{Boundary conditions.}
Whilst the horizontal boundary conditions are yet to be specified and can be chosen to suit specific problems, the vertical boundary conditions must be carefully defined. 


The kinematic boundary condition on the free surface is given by 

\begin{equation}\label{bdyconditions:componentsX}
    \wh{w} = \wh{\mathcal{D}}\zeta \qquad \big(\wh{\mathcal{D}} = \p_t + \bs{\wh{v}}\cdot\nabla_{\bs{r}}\big)\,,
\end{equation}
where the $\wh{f}$ notation in $\wh{\mathcal{D}}$, $\wh{\bs{v}}$, and $\wh{w}$ is defined for an arbitrary flow variable $f$ to represent evaluation on the free surface, namely,
\begin{align}
\wh{f}(\bs{r},t) = f(\bs{r},z,t)\quad \hbox{on}\quad z = \zeta(\bs{r},t)\,.
	\label{hat-notationX}
\end{align}
Notice that evaluating on the free surface before taking derivatives is not equivalent to taking the derivative before evaluating. In particular,  $\p_t\wh{f}(\bs{r},t)\ne \widehat{\p_t f}(\bs{r},t)$ and 
$\nabla_{\bs{r}}\wh{f}(\bs{r},t) \ne \widehat{\nabla_{\bs{r}}f}(\bs{r},t)$, 
where 
\begin{equation}
\widehat{\p_t f}(\bs{r},t) = [\p_t f(\bs{r},z,t)]_{z = \zeta(\bs{r},t)}
\quad\hbox{and}\quad 
\widehat{\nabla_{\bs{r}}f}(\bs{r},t) = [\nabla_{\bs{r}} f(\bs{r},z,t)]_{z = \zeta(\bs{r},t)}
\,.
\end{equation}
Instead, from the chain rule we have 
\begin{equation}
\begin{aligned}
\p_t\wh{f}(\bs{r},t) 
&= \Big[\p_t f + f_z \p_t\zeta \Big]\Big|_{z = \zeta(\bs{r},t)} = \widehat{\p_t f} + \widehat{\p_zf}\p_t\zeta 
\,,\\
\nabla_{\bs{r}}\wh{f}(\bs{r},t) 
&= \Big[\nabla_{\bs{r}} f + f_z  \nabla_{\bs{r}}\zeta \Big]\Big|_{z = \zeta(\bs{r},t)} = \widehat{\nabla_{\bs{r}} f} + \widehat{\p_zf}\nabla_{\bs{r}}\zeta
\,.
	\label{hat-chainruleX}
\end{aligned}
\end{equation}
Consequently, we have the following remarkable proposition.
\begin{proposition}[T.Y. Wu \cite{Wu2001}]
    The advection operator on the free surface satisfies the identity
    \begin{align}
        \wh{\,\mathcal{D}f\,} = \wh{\mathcal{D}}\wh{f}
        \,,
    	\label{hat-advectionX}
    \end{align}
    where $\mathcal{D} \coloneqq (\p_t + \bs{u}\cdot\nabla)$ and $\wh{\mathcal{D}}\coloneqq (\p_t + \bs{\wh v}\cdot\nabla_{\bs{r}})$.
\end{proposition}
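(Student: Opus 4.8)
The plan is to prove the identity by direct computation: expand the right-hand side $\wh{\mathcal{D}}\wh{f}$ using the two chain-rule relations in \eqref{hat-chainruleX}, then match the result against the hatted left-hand side $\wh{\mathcal{D}f}$, with the kinematic boundary condition \eqref{bdyconditions:componentsX} supplying the single nontrivial identification. The observation that makes everything go through is that the hat operation is pointwise evaluation at $z=\zeta(\bs{r},t)$, so it distributes over products of flow variables; in particular $\wh{\bs{v}\cdot\nabla_{\bs{r}}f} = \wh{\bs{v}}\cdot\wh{\nabla_{\bs{r}}f}$ and $\wh{w\,\partial_z f} = \wh{w}\,\wh{\partial_z f}$.

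First I would expand the surface advection operator acting on $\wh{f}$, namely $\wh{\mathcal{D}}\wh{f} = \partial_t\wh{f} + \wh{\bs{v}}\cdot\nabla_{\bs{r}}\wh{f}$, and substitute the chain-rule formulas from \eqref{hat-chainruleX} for $\partial_t\wh{f}$ and $\nabla_{\bs{r}}\wh{f}$. Collecting the terms proportional to $\wh{\partial_z f}$ yields
\[
\wh{\mathcal{D}}\wh{f} = \wh{\partial_t f} + \wh{\bs{v}}\cdot\wh{\nabla_{\bs{r}}f} + \wh{\partial_z f}\,\big(\partial_t\zeta + \wh{\bs{v}}\cdot\nabla_{\bs{r}}\zeta\big)\,.
\]
The factor in parentheses is exactly $\wh{\mathcal{D}}\zeta$, which by the kinematic boundary condition \eqref{bdyconditions:componentsX} equals $\wh{w}$. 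This is the step where the geometry of the free surface enters: the rate at which the elevation is carried by the horizontal surface flow must coincide with the vertical velocity evaluated there, else a fluid parcel would leave the surface.

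Next I would compute the left-hand side by applying the hat operation to $\mathcal{D}f = \partial_t f + \bs{v}\cdot\nabla_{\bs{r}}f + w\,\partial_z f$ term by term, using that evaluation at $z=\zeta$ distributes over the products, to obtain $\wh{\mathcal{D}f} = \wh{\partial_t f} + \wh{\bs{v}}\cdot\wh{\nabla_{\bs{r}}f} + \wh{w}\,\wh{\partial_z f}$. Comparing this with the expression above, after replacing the parenthetical factor by $\wh{w}$, shows the two sides are identical, which completes the proof.

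I do not expect a genuine obstacle here: the computation is short, and the only input beyond bookkeeping is the kinematic boundary condition. The one point requiring care is keeping the distinction emphasised in the text between $\wh{\partial_t f}$ and $\partial_t\wh{f}$ (and likewise for the horizontal gradient); the entire content of the proposition is that the ``extra'' chain-rule terms $\wh{\partial_z f}\,\partial_t\zeta$ and $\wh{\partial_z f}\,\wh{\bs{v}}\cdot\nabla_{\bs{r}}\zeta$ reassemble, via \eqref{bdyconditions:componentsX}, into precisely the vertical-transport term $\wh{w}\,\wh{\partial_z f}$ that is absent from $\wh{\mathcal{D}}$ but present in $\wh{\mathcal{D}f}$.
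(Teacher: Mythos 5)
Your proof is correct and follows essentially the same route as the paper's: both rely solely on the chain-rule relations \eqref{hat-chainruleX} together with the kinematic boundary condition \eqref{bdyconditions:componentsX} to reassemble the $\wh{\p_z f}$ terms into $\wh{w}\,\wh{\p_z f}$. The only cosmetic difference is direction — the paper expands $\wh{\,\mathcal{D}f\,}$ and reduces it to $\wh{\mathcal{D}}\wh{f}$, whereas you expand $\wh{\mathcal{D}}\wh{f}$ and build up to $\wh{\,\mathcal{D}f\,}$ — which does not change the substance of the argument.
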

That is, on the free surface, the 3D material time derivative of $f(\bs{r}, z, t)$ equals the 2D surface material derivative of the surface evaluation of $f$.
\begin{proof}
    Applying the chain rules in \eqref{hat-chainruleX} leads to
    \begin{align*}
        \wh{\,\mathcal{D}f\,} &= \wh{\p_tf} + \bs{\wh v}\cdot\wh{\nabla_{\bs{r}}f} + \wh{w}\wh{\p_zf} \\
       \hbox{By \eqref{hat-chainruleX} } &= \p_t\wh f - \wh{\p_zf}\p_t\zeta + \bs{\wh v}\cdot(\nabla_{\bs{r}}\wh f - \wh{\p_zf}\nabla_{\bs{r}}\zeta) + \wh w \wh{\p_zf} \\
        &= \p_t\wh f + \bs{\wh v}\cdot\nabla_{\bs{r}}\wh f + \wh{\p_zf}\big(\wh w - \p_t\zeta - \bs{\wh v}\cdot\nabla_{\bs{r}}\zeta \big) \\
        &= \p_t\zeta + \bs{\wh v}\cdot\nabla_{\bs{r}}\wh f =: \wh{\mathcal{D}}\wh f\,,
    \end{align*}
    in which the final line is implied by the kinematic condition \eqref{bdyconditions:componentsX}.
\end{proof}

\paragraph{Choi's relation at the free surface.}\label{Remark-Open}
One may use the relation \eqref{hat-advectionX} to evaluate the horizontal and vertical coordinates of the motion equation \eqref{motioneqn:componentsX} onto the free surface. Hence, one finds for constant buoyancy $\rho=\rho_0$ on the free surface $\zeta(x,y,t)-z=0$ that
\begin{align}
\begin{split}
\wh{\mathcal{D}}\wh{\bs{v}} = -\left[\frac{1}{\rho_0}\nabla_{\bs{r}} \pi\right]\bigg|_{z=\zeta}
&= -\,
\Big[\frac{1}{\rho_0} \nabla_{\bs{r}} \wh{\pi} 
- \frac{1}{\rho_0} \pi_z\nabla_{\bs{r}}\zeta \Big]\Big|_{z=\zeta}
\\&= -\,\frac{1}{\rho_0}\nabla_{\bs{r}} \wh{\pi} 
- (\wh{\mathcal{D}}\wh{w} +g)\nabla_{\bs{r}}\zeta
\,,
\end{split}
	\label{projectedmotioneqnX}
\end{align}
where, in the last step, we have used the vertical motion equation in \eqref{motioneqn:componentsX} to evaluate $\pi_z$ for $\rho=\rho_0$ and the relation \eqref{hat-advectionX} for the vertical acceleration of the free surface, $dw/dt_{z=\zeta(\bs{r},t)} = \wh{\,\mathcal{D}w\,} = \wh{\mathcal{D}}\wh{w} $. In conclusion, upon using the boundary condition $\wh{w}=\wh{\mathcal{D}}\zeta$ in \eqref{bdyconditions:componentsX} we find \emph{Choi's relation} at the free surface \cite{Choi1995},
\begin{align}
\wh{\mathcal{D}}\wh{\bs{v}} + (\wh{\mathcal{D}}^2\zeta +g) \nabla_{\bs{r}}\zeta
= -\,\frac{1}{\rho_0}\nabla_{\bs{r}} \wh{\pi}
\,.\label{ChoiEqnX}
\end{align}
{
\begin{remark}[Closing Choi's relation \eqref{ChoiEqnX}]
The fundamental relation in \eqref{ChoiEqnX} is not restricted to irrotational flows. 
However, at this stage, the dynamical system comprising equations \eqref{bdyconditions:componentsX} and \eqref{ChoiEqnX} for the motion of the free surface \emph{is not yet closed} since, (i) the pressure gradient $\nabla_{\bs{r}}\wh\pi$ is still unknown and, (ii) an evolutionary equation for $\wh{\mathcal{D}}\wh{w}$ is missing.
\end{remark}
}

\paragraph{Choi's relation at the bottom boundary.}
One may also consider boundary conditions at either a lower free surface,  $z=-B(\bs{r},t)$, or at fixed  bathymetry, $z=-B(\bs{r})$. Denote by $\widecheck f$ the evaluation on the bathymetry, i.e.
\begin{equation}
    \widecheck f(\bs{r},t) = f(\bs{r},z,t)\quad\hbox{on}\quad z = -B(\bs{r},t)\,.
\end{equation}
The bottom boundary condition is
\begin{equation}
    -\widecheck{\mathcal{D}} B(\bs{r},t) = -(\p_t+\widecheck{\bs{v}}\cdot\nabla_{\bs{r}})B(\bs{r},t) = \widecheck{w}\,, \quad\hbox{on}\quad z = -B(\bs{r},t)\,.
\label{BottomBCX}
\end{equation}
and we have, by the same chain-rule calculations as on the upper free surface,
\begin{equation}
    \widecheck{\,\mathcal{D}f\,} = \widecheck{\mathcal{D}} \widecheck{f}\,.
\end{equation}
We may now evaluate equations \eqref{motioneqn:componentsX} onto the lower surface $z = -B(\bs{r},t)$ in the same manner as we have evaluated onto the upper surface $z = \zeta(\bs{r},t)$ to give
\begin{equation}
    \widecheck{\mathcal{D}}\widecheck{\bs{v}} 
    + (\widecheck{\mathcal{D}}\widecheck w + g)\nabla_{\bs{r}}B(\bs{r},t)
    = -\, \frac{1}{\rho_0}\nabla_{\bs{r}}\widecheck \pi \,,
    \quad\hbox{on}\quad z = -B(\bs{r},t)\,.
\end{equation}
By the bottom boundary condition \eqref{BottomBCX} we then find 
\begin{equation}
    \widecheck{\mathcal{D}}\widecheck{\bs{v}} 
    + (- \widecheck{\mathcal{D}}^2\widecheck B(\bs{r},t) + g)\nabla_{\bs{r}}B(\bs{r},t)
    = -\, \frac{1}{\rho_0}\nabla_{\bs{r}}\widecheck \pi \,,
    \quad\hbox{on}\quad z = -B(\bs{r},t)\,.
\label{BottomBDynX}
\end{equation}
When $B(\bs{r},t)$ a time-dependent variable, then equation \eqref{BottomBDynX} is not closed. However, when the bottom boundary is taken to be time-independent, so that $z = -B(\bs{r})$ and $\widecheck{\mathcal{D}}  = \widecheck{\bs{v}}\cdot\nabla_{\bs{r}}$ there, then equation \eqref{BottomBDynX} would be closed, provide either the bottom pressure $\widecheck \pi $ were prescribed, or the bottom velocity $\widecheck{\bs{v}}$ were taken to be divergence-free.

\paragraph{Mean continuity relation.}
Another exact result about the dynamics of the free surface elevation $\zeta (x,y,t)$ should be mentioned. This result is the following mean continuity relation for the elevation in terms of the vertically averaged horizontal velocity components, \cite{Wu1999}.
\begin{proposition}[Mean continuity relation]\label{prop: mean FS cont eqn}
For uniform mass density, $\rho_0$, the boundary conditions \eqref{bdyconditions:componentsX}, as well as incompressibility and equation \eqref{hat-advectionX} for advection of the free surface  together imply  the following \emph{vertically integrated} continuity relation for the wave elevation on the free surface, $\zeta$, 
\begin{align}
\partial_t \zeta (\bs{r},t)
	+ \partial_x\int_{-B}^{\zeta(\bs{r},t)} u(\bs{r},z,t)dz + \partial_y\int_{-B}^{\zeta(\bs{r},t)} v(\bs{r},z,t)dz
= 0
\,.
\label{mean-continuity-relationX}
\end{align}
In terms of vertically averaged quantities, denoted by
\begin{align}
\overline{f}(\bs{r},z,t) := \frac{1}{\zeta+B}\int_{-B}^{\zeta(\bs{r},t)} f(\bs{r},z,t)dz
\,,
\label{mean-vert-avgX}
\end{align}
equation \eqref{mean-continuity-relationX} may be written equivalently as a mean (i.e., vertically-averaged) continuity equation,
\begin{align}
\partial_t \big(\zeta (\bs{r},t) + B \big)
	+ \partial_x\big((\zeta+B)\overline{u}\big) 
	+ \partial_y\big((\zeta+B)\overline{v}\big)
= 0
\,.
\label{mean-continuity--ubarX}
\end{align}

\end{proposition}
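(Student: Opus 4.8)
The plan is to integrate the incompressibility constraint $\nabla_{\bs{r}}\cdot\bs{v}+w_z=0$ from \eqref{motioneqn:componentsX} vertically across the fluid column, from the bottom $z=-B$ up to the free surface $z=\zeta(\bs{r},t)$, and then to commute the horizontal derivatives with the vertical integral by means of the Leibniz integral rule. The boundary terms produced by the Leibniz rule, together with the elementary vertical integral of $w_z$, will combine into exactly the surface and bottom evaluations $\wh{w}$ and $\widecheck{w}$ of the vertical velocity, which the kinematic condition \eqref{bdyconditions:componentsX} and the bottom condition \eqref{BottomBCX} then let me trade for $\p_t\zeta$ and $\p_t B$.

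First I would integrate the divergence-free condition over the column to obtain
\[
\int_{-B}^{\zeta}\!\big(u_x+v_y\big)\,\rmd z + \big(\wh{w}-\widecheck{w}\big)=0\,,
\]
where the second group follows from the fundamental theorem of calculus applied to $\int_{-B}^{\zeta} w_z\,\rmd z$ together with the evaluation notation \eqref{hat-notationX}. Next I would apply the Leibniz rule to each horizontal term, for example
\[
\int_{-B}^{\zeta} u_x\,\rmd z = \p_x\!\int_{-B}^{\zeta}\! u\,\rmd z - \wh{u}\,\p_x\zeta - \widecheck{u}\,\p_x B\,,
\]
and analogously for $\int_{-B}^{\zeta} v_y\,\rmd z$ with $\p_y$; the signs of the two boundary contributions arise from differentiating through the upper limit $\zeta$ and the lower limit $-B$ respectively.

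The final step is to substitute these identities and invoke the two kinematic conditions. Writing $\bs{\wh v}\cdot\nabla_{\bs{r}}\zeta=\wh{u}\,\p_x\zeta+\wh{v}\,\p_y\zeta$, the free-surface condition \eqref{bdyconditions:componentsX} gives $\wh{w}-\wh{u}\,\p_x\zeta-\wh{v}\,\p_y\zeta=\p_t\zeta$, while the bottom condition \eqref{BottomBCX} gives $-\widecheck{w}-\widecheck{u}\,\p_x B-\widecheck{v}\,\p_y B=\p_t B$. After substitution the pointwise surface and bottom evaluations cancel in pairs, leaving
\[
\p_t(\zeta+B) + \p_x\!\int_{-B}^{\zeta}\! u\,\rmd z + \p_y\!\int_{-B}^{\zeta}\! v\,\rmd z = 0\,.
\]
For fixed bathymetry, $\p_t B=0$, this is precisely \eqref{mean-continuity-relationX}; and rewriting each column integral as $(\zeta+B)\overline{u}$ and $(\zeta+B)\overline{v}$ through the definition \eqref{mean-vert-avgX} immediately produces the averaged form \eqref{mean-continuity--ubarX}.

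I expect the only genuine pitfall to be bookkeeping rather than any conceptual difficulty: one must keep the signs of the Leibniz boundary terms straight (the lower limit $-B$ contributes with the sign opposite to the upper limit $\zeta$), and one must pair the surface term coming from $\int w_z\,\rmd z$ with the correct kinematic condition. The advection identity \eqref{hat-advectionX} enters only conceptually here, since it certifies that the kinematic condition $\wh{w}=\wh{\mathcal{D}}\zeta$ is equivalent to the advection of the material surface $\zeta-z=0$ in \eqref{advectioneqns}, which is what makes the surface evaluations from the Leibniz rule consistent with the advected free-surface picture.
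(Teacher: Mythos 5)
Your proposal is correct and follows essentially the same route as the paper: vertically integrate the divergence-free condition, commute the horizontal derivatives with the column integral via the Leibniz rule, and absorb the resulting boundary evaluations using the kinematic conditions at the surface and bottom. The only (harmless) difference is that you keep the general bottom condition \eqref{BottomBCX}, recovering $\partial_t(\zeta+B)$ directly, whereas the paper's proof assumes a flat bottom with tangential flow so that the lower boundary contributes nothing.
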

\begin{remark}[Physical interpretation]
Essentially, the continuity equation \eqref{mean-continuity--ubarX} arises because the incompressible flow conserves the fluid volume measure, $Dd^3x$. 
In particular, the vertically integrated continuity relation \eqref{mean-continuity--ubarX} in Proposition \ref{prop: mean FS cont eqn} proved below represents volume preservation of the divergence-free 3D Euler fluid equations in \eqref{EulerPoincareX} of appendix \ref{app:ActionIntBdy-1} for the advective boundary relations in \eqref{advectioneqns} and \eqref{ChoiEqnX}, see  e.g., \cite{Benney1973, Choi1995, Wu1999}.

\end{remark}

\begin{proof}
By direct computation, using the advection condition for $\zeta$ and the vertical integral of the divergence free condition ${\rm div}\bs{u}=0$, and upon noticing that no contribution arises from the flat bottom boundary, one finds that
\begin{align}
\begin{split}
	\partial_t \zeta (\bs{r},t)
	&+ \partial_x\int_{-B}^{\zeta(\bs{r},t)} u(\bs{r},z,t)dz + \partial_y\int_{-B}^{\zeta(\bs{r},t)} v(\bs{r},z,t)dz
	\\&= \partial_t\zeta (\bs{r},t) + u(\bs{r},z,t)\partial_x\zeta (\bs{r},t) + v(\bs{r},z,t)\partial_y \zeta(\bs{r},t)
	\\& \qquad  + \int_{-B}^{\zeta(\bs{r},t)} \Big( u_x + v_y + w_z\big)(\bs{r},z,t) dz 
	     - w (\bs{r},\zeta (\bs{r},t),t) 
	\\& = 0\,,
\end{split}	
\label{mean-elevation-proofX}
\end{align}
where we have added and subtracted $w (\bs{r},\zeta (\bs{r},t),t)$ and applied a tangential flow condition at the bottom boundary.
Thus, the boundary conditions and the divergence-free nature of the three dimensional flow combine to produce the mean continuity relation in \eqref{mean-continuity-relationX}.
\end{proof}

\paragraph{The route to a closure scheme for  \eqref{bdyconditions:componentsX} and \eqref{ChoiEqnX} which includes the CWWE.}
The closure problem for equations \eqref{bdyconditions:componentsX} and \eqref{ChoiEqnX} will be resolved in section \ref{subsec:derivingCWWE} in the context of the CWWE, which will imply $\wh{\mathcal{D}}\wh{w} = -\,g$. In section \ref{sec: HamForm-CWWE}, the pair of wave variables $\wh w$ and $\zeta$ will be understood as a canonically conjugate subset of a Hamiltonian system of Eulerian equations for planar fluid motion. This system will also contain the hydrostatic CWWE introduced in section \ref{subsec:CWWE}. In section \ref{subsec:CWWEPressure}, non-hydrostatic pressure $\wh{\pi}$ will be incorporated into the CWW problem to complete the closure of Choi's relation in \eqref{ChoiEqnX}. The rest of the paper will then build additional physics into the resulting system of planar fluid equations, e.g., by including horizontal gradients of buoyancy on the free surface. Refer to Figure \ref{Roadmap} for more perspective.

\section{Free surface dynamics}\label{sec:FreeSurfDyn}

\subsection{The Classic Water-Wave equations (CWWE)}\label{subsec:CWWE}

\paragraph{The Dirichlet-Neumann operator (DNO).}
In this section, we consider the much studied potential flow governed by the CWWE. The qualitative information obtained here from the CWWE will inspire our derivation of a constrained variational principle below. The CWWE are derived from the free surface three dimensional Euler equations via the \emph{Dirichlet-Neumann} operator (DNO). The DNO maps the solution of Laplace's equation in an external domain with a Dirichlet boundary conditions to its solution on the boundary with a Neumann flux condition (see e.g. \cite{Lannes2005}). In particular, the CWWE  assume that the flow is incompressible and irrotational and, thus, there exists some $\phi(\bs{r},z,t)$ such that $\bs{u} = \nabla\phi$, where $\bs{u}$ is the three-dimensional velocity field throughout the domain. 

In the hat-notation of \eqref{hat-notationX}, the variable $\wh{\phi}(\bs{r},t) = \phi(\bs{r},\zeta(\bs{r},t),t)$ evaluates the velocity potential $\phi(\bs{r},z,t)$ on the free surface $z=\zeta(\bs{r},t)$. 
The action of the DNO $G(\zeta)$ on $\wh{\phi}(\bs{r},t)$  is defined as the normal component of the three-dimensional velocity field for the potential flow $\bs{u} = \nabla\phi$ evaluated at the free surface $z=\zeta(\bs{r},t)$. Namely, 
\begin{equation}
    G(\zeta)\widehat\phi  \coloneqq (- \nabla_{\bs{r}} \zeta, 1) \cdot \widehat{\,\nabla\phi\,}
    \coloneqq - \nabla_{\bs{r}}\zeta(\bs{r},t) \cdot \widehat{\nabla_{\bs{r}} \phi} + \widehat w\,,
    \label{def-DNO}
\end{equation}
in which the horizontal gradient of the velocity potential $\phi(\bs{r},z,t)$ is first taken, then evaluated at the surface $z=\zeta (\bs{r},t)$, cf. equation \eqref{projectedmotioneqnX}.

Thus, the DNO in \eqref{def-DNO} takes Dirichlet data for $\widehat\phi$ on $z=\zeta(\bs{r},t)$, solves Laplace's equation $\Delta\phi=0$ for $\phi(\bs{r},z,t)$ together with the condition that the velocity $\bs{u} = \nabla\phi$ have no normal component on the fixed parts of the boundary of the full domain volume, and then returns the corresponding Neumann data, i.e., the three-dimensional fluid normal velocity on the free surface, $z=\zeta(\bs{r},t)$. 

\paragraph{The classical water-wave equations (CWWE).}
The classical water-wave equations (CWWE), as stated in \cite{Lannes2005}, can be written in terms of the DNO as
\begin{align}
    \p_t\zeta - G(\zeta)\widehat\phi &= 0 \,, \label{LannesZetaEquationX} \\
    \p_t\widehat\phi + g\zeta + \frac12 |\nabla_{\bs{r}}\widehat\phi|^2 - \frac{1}{2\big(1+|\nabla_{\bs{r}}\zeta|^2\big)}\big(G(\zeta)\widehat\phi + \nabla_{\bs{r}}\zeta \cdot \nabla_{\bs{r}}\widehat\phi \,\,\big)^2 &= 0 \label{LannesPsiEquationX}\,.
\end{align}
From the chain rules in \eqref{hat-chainruleX}, we have, in the hat notation of equation \eqref{hat-notationX}, that
\begin{align}
    \widehat{\nabla_{\bs{r}}\phi} &= \nabla_{\bs{r}}\widehat\phi (\bs{r},t) - \widehat{\p_z\phi}\nabla_{\bs{r}}\zeta\,, \label{ChainRuleNablaX} \\
    \widehat{\p_t\phi} &= \p_t\widehat\phi - \widehat{\p_z\phi}\p_t\zeta\,. \label{ChainRuleTX}
\end{align}
In terms of the Dirichlet-Neumann operator these are expressed as
\begin{equation}
    G(\zeta)\widehat\phi \coloneqq 
    - \nabla_{\bs{r}}\zeta(\bs{r},t) \cdot \nabla_{\bs{r}}\widehat\phi (\bs{r},t) 
    + \widehat{\p_z\phi}|\nabla_{\bs{r}}\zeta|^2 + \widehat w\,.
    \label{D_N operatorX}
\end{equation}
We consider \eqref{ChainRuleNablaX} together with $\widehat w = \widehat{\mathcal{D}} \zeta$. 
Observe that in the hat notation $\bs{\wh{v}} := \widehat{\nabla_{\bs{r}}\phi}$,  
$\bs{V} := \nabla_{\bs{r}}\widehat\phi$ and $\widehat{w} :=\widehat{\p_z\phi}$ we have%
\footnote{The distinction between velocities $\bs{\wh{v}}$ and $\bs{V}$ is standard \cite{CastroLannes2015,Lannes2020}.}
\begin{equation}
     \widehat{\nabla_{\bs{r}}\phi} 
    = \nabla_{\bs{r}}\widehat\phi - \widehat{w} \nabla_{\bs{r}}\zeta
    \quad\Longrightarrow\quad
    \bs{\wh{v}} = \bs{V} -  \widehat{w} \nabla_{\bs{r}}\zeta
\,,
    \label{v-hatX}
\end{equation}
and hence
\begin{equation}\label{w-hat-eqnX}
    \widehat w = \widehat{\mathcal{D}} \zeta = \p_t\zeta + \bs{\wh{v}}\cdot\nabla_{\bs{r}}\zeta = \p_t\zeta + (\nabla_{\bs{r}}\widehat\phi - \widehat w \nabla_{\bs{r}}\zeta)\cdot\nabla_{\bs{r}}\zeta\,,
\end{equation}
which after rearranging is equivalent to
\begin{equation}\label{u3-zeta1X}
    \p_t\zeta + \bs{\wh{v}}\cdot\nabla_{\bs{r}}\zeta
    = \widehat{w} 
    = \frac{\p_t\zeta + \nabla_{\bs{r}}\widehat\phi\cdot\nabla_{\bs{r}}\zeta}{1+|\nabla_{\bs{r}}\zeta|^2}
    = \frac{\p_t\zeta + \bs{V}\cdot\nabla_{\bs{r}}\zeta}{1+|\nabla_{\bs{r}}\zeta|^2}
    \,.
\end{equation}
Thus, applying the chain rule in the Dirichlet-Neumann operator appearing in the kinematic boundary condition has implied the alternative expression for $\widehat{w}$ in equation \eqref{u3-zeta1X}. The alternative equations for $\wh{w}$ in \eqref{u3-zeta1X} will be used next in section \ref{subsec:derivingCWWE} to close the system defined by Choi's relation \eqref{ChoiEqnX} by using a variational principle reminiscent of the approach in \cite{Luke1967} to derive an evolutionary equation for $\widehat{w}$. The alternative expressions in \eqref{u3-zeta1X} obtained from the Dirichlet-Neumann operator will also inspire a wave-current coupling term in section \ref{subsec:CWWE_with_Coupling}.

\begin{remark}[Direct derivation of the CWWE]
The CWWE  \eqref{LannesZetaEquationX} and \eqref{LannesPsiEquationX} may be derived from the standard three-dimensional form of the Euler motion equation with constant $\rho=\rho_0$. In the case of three-dimensional irrotational flow, one finds Bernoulli's integrated form of the Euler equation
\begin{equation}\label{BernoulliEulerX}
    \p_t\phi + \frac{1}{2}|\nabla\phi|^2 + gz = -\frac{1}{\rho_0}\pi\,.
\end{equation}
Evaluating \eqref{BernoulliEulerX} on $z=\zeta(\bs{r},t)$ with the constraint $\pi|_{z=\zeta}=0$ that the non-hydrostatic pressure vanishes on the free surface yields
\begin{equation}
    \p_t\widehat\phi - \widehat{\p_z\phi}\,\p_t\zeta + \frac{1}{2}|\nabla_{\bs{r}}\widehat\phi|^2 + \frac{1}{2}\widehat{\p_z\phi}^2(1+|\nabla_{\bs{r}}\zeta|^2) - \widehat{\p_z\phi}\nabla_{\bs{r}}\zeta\cdot\nabla_{\bs{r}}\widehat\phi + g\zeta = 0\,.
\end{equation}
Upon adding and subtracting $\widehat{\p_z\phi}^2|\nabla_{\bs{r}}\zeta|^2$ in the previous equation, one finds
\begin{equation}
    \p_t\widehat\phi + g\zeta + \frac{1}{2}|\nabla_{\bs{r}}\widehat\phi|^2 + \frac{1}{2}\widehat{\p_z\phi}^2(1+|\nabla_{\bs{r}}\zeta|^2) - \widehat{\p_z\phi}(\p_t\zeta + \nabla_{\bs{r}}\widehat\phi\cdot\nabla_{\bs{r}}\zeta - \widehat{\p_z\phi}|\nabla_{\bs{r}}\zeta|^2) - \widehat{\p_z\phi}^2|\nabla_{\bs{r}}\zeta|^2 = 0\,.
\end{equation}
Considering this in tandem with equation \eqref{w-hat-eqnX} for $\wh{w}$ yields
\begin{equation}
    \p_t\widehat\phi + g\zeta +\frac{1}{2}|\nabla_{\bs{r}}\widehat\phi|^2 - \frac{1}{2}\widehat{\p_z\phi}^2(1+|\nabla_{\bs{r}}\zeta|^2) = 0\,,
    \label{Bernoulli-eqn}
\end{equation}
which one observes is equivalent to \eqref{LannesPsiEquationX}.
\end{remark}

\begin{remark}
Equation \eqref{u3-zeta1X} expresses $\widehat w$ in terms of the time derivative of $\zeta$ in the frame of reference moving with horizontal velocity $\nabla_{\bs{r}}\widehat\phi$ rather than with velocity $\widehat{\nabla_{\bs{r}}\phi}$. We recall the relation \eqref{v-hatX} and write
\begin{equation}
\bs{V} := \nabla_{\bs{r}}\widehat\phi =  \widehat{\nabla_{\bs{r}}\phi}  + \widehat{w} \nabla_{\bs{r}}\zeta 
=: \bs{\wh{v}} +  \bs{s} \,,
    \label{slipvelocity-defX}
\end{equation}
Physically, $\bs{\wh{v}} := \widehat{\nabla_{\bs{r}}\phi}$ may be interpreted as the fluid transport velocity relative to a Galilean frame moving with velocity $\bs{s}:=\widehat{w}\nabla_{\bs{r}}\zeta$, while the quantity $\bs{V} := \nabla_{\bs{r}}\widehat\phi $ is the total fluid velocity in the inertial frame of the Eulerian fluid description. In fact, the variational formulation taken below will show that  the quantity $\bs{V}$ is the momentum per unit mass given by the variational derivative with respect to transport velocity $\bs{\wh{v}} $ of the Lagrangian in Hamilton's principle for the wave-current dynamics. Likewise,  the quantity $ \bs{s} =  \widehat{w} \nabla_{\bs{r}}\zeta$ will turn out to be the CWW wave momentum per unit fluid mass derived from Hamilton's principle. 
\end{remark}

The surface boundary condition \eqref{u3-zeta1X} yields the evolution equation for the elevation $\zeta$ written in the two different frames of motion as,
\begin{align}
    \p_t\zeta + \bs{\wh{v}}\cdot\nabla_{\bs{r}}\zeta &= \widehat w\, 
    \quad\hbox{and}
\label{FSdyn-v-hat}\\
    \p_t\zeta + \bs{V}\cdot\nabla_{\bs{r}}\zeta &= \widehat{w}(1+|\nabla_{\bs{r}}\zeta|^2)
    = \widehat{w} - \bs{s}\cdot\nabla_{\bs{r}}\zeta
    \,.
\label{FSdyn-V}
\end{align}
Equating $\widehat w$ in these two expressions then  yields 
\begin{equation}\label{u3-zetaX}
    \p_t\zeta + \bs{\wh{v}}\cdot\nabla_{\bs{r}}\zeta 
    = \widehat w = \frac{\p_t\zeta + \bs{V}\cdot\nabla_{\bs{r}}\zeta}{1+|\nabla_{\bs{r}}\zeta|^2}\,.
\end{equation}
Likewise, equations \eqref{Bernoulli-eqn} and \eqref{slipvelocity-defX} yield the Bernoulli evolution equation for zero non-hydrostatic pressure in terms of the rotational fluid velocities $\bs{\wh{v}}$ and $\bs{V}$,
\begin{equation}
    \p_t\widehat\phi + g\zeta +\frac{1}{2}|\bs{V}|^2 - \frac{1}{2}\widehat{w}^2(1+|\nabla_{\bs{r}}\zeta|^2) 
    = 0
    =  \p_t\widehat\phi + \bs{\wh{v}}\cdot\nabla_{\bs{r}}\widehat\phi + g\zeta  - \frac{1}{2}|\bs{\wh{v}}|^2 
    - \frac{1}{2}\widehat{w}^2
    \,.
    \label{Bernoulli-eqn-rot}
\end{equation}
This completes the direct derivation of the CWWE.

\subsection{Imposing CWWE as constraints in Hamilton's principle}\label{subsec:derivingCWWE}

Let us introduce the following \emph{dimension-free} action integral for a variational principle, $\delta S = 0$, which comprises the sum of the kinematic boundary condition for the elevation $\zeta$ in \eqref{FSdyn-v-hat} and the Bernoulli equation for zero non-hydrostatic pressure in \eqref{Bernoulli-eqn-rot}. These two CWWE conditions are constrained to hold by the two Lagrange multipliers $\lambda$ and $D$, respectively, 
\begin{align}
\begin{split}
	S &=\int \ell(\bs{\wh{v}}, D,\widehat{\phi},{\wh w}, \zeta, \lambda)dt
	\\&= \int \int  
	(\sigma^2\lambda)\Big( \partial_t\zeta  + \bs{\wh{v}}\cdot\nabla_{\bs{r}}\zeta - \widehat{w} \Big) 
	- D\left(\p_t\widehat\phi + \bs{\wh{v}}\cdot\nabla_{\bs{r}}\widehat\phi + \frac{\zeta}{Fr^2}  - \frac{1}{2}|\bs{\wh{v}}|^2 
    - \frac{\sigma^2}{2}\widehat{w}^2\right) \,d^2r\,dt
	\\&= \int \int  
	D\left(\frac{1}{2}\big(|\bs{\wh{v}}|^2+\sigma^2\widehat{w}^2\big)  - \frac{\zeta}{Fr^2}\right)   
	+ \sigma^2\lambda\Big( \partial_t\zeta  + \bs{\wh{v}}\cdot\nabla_{\bs{r}}\zeta - \widehat{w} \Big)
	\\
	&\qquad \quad
	+ \widehat{\phi}\big(\partial_t D + \text{div}_{\bs{r}}(D\bs{\wh{v}})\big)
	\,d^2r\,dt\,.
\end{split}
	\label{ActionIntegral-FS}
\end{align}
For spatial integration by parts, we take natural boundary conditions so the boundary terms vanish. The temporal integration by parts introduces a total time derivative, so it also does not contribute to the equations of motion. 
In equation \eqref{ActionIntegral-FS}, $\bs{r}=(x,y)$ denotes horizontal Eulerian spatial coordinates in an inertial (fixed) domain. 
We have integrated by parts in time and in space after the second line, dropping boundary terms both times.
The constants $\sigma^2$ and $Fr^2$ here are squares of the aspect ratio and the Froude number, respectively, which are obtained in making the expression dimension-free. Finally, we make the distinction between the wave variables $\wh w$ and $\zeta$, and the current variables $D$ and $\bs{\wh v}$. The remaining variables $\lambda$ and $\wh\phi$ in the final form of the action integral \eqref{ActionIntegral-FS} are Lagrange multipliers, to be determined from the others.

\begin{remark}[Non-dimensional parameters]\label{non-dim-scales}
Explicitly, the action integral for free surface motion in \eqref{ActionIntegral-FS} has been cast into dimension-free form by introducing natural units for horizontal length, $[L]$, horizontal velocity, $[V]$, time, $[T]=[L]/[V]$, vertical velocity, $[W]$ and vertical wave elevation, $[\zeta]$. In terms of these units we have defined the following dimension-free parameters: aspect ratio, $[W]/[V]=\sigma = [\zeta]/[L]$ and Froude number, $Fr^2=[V]^2/([g][\zeta])$, for typical wave elevation scale $[\zeta]$.
\end{remark}

\paragraph{Interpreting the two equivalent forms of the action integral in \eqref{ActionIntegral-FS}.}
\begin{itemize}
\item
The second line of the action integral in \eqref{ActionIntegral-FS} may be regarded as a variant of the action integral in Luke \cite{Luke1967}. An action integral for CWWE in \cite{Luke1967} was derived in terms of vertically integrated expressions. In contrast, here the action integral in \eqref{ActionIntegral-FS} has been made two-dimensional by using the Dirichlet-Neumann operator relation to project out the third (vertical) dimension. 
The Lagrange multiplier $\lambda$ enforces the kinematic boundary condition for the elevation $\zeta$ in \eqref{FSdyn-v-hat}. Likewise, 
the Lagrange multiplier $D$ enforces the zero-pressure Bernoulli law \eqref{Bernoulli-eqn-rot} obtained from the Dirichlet-Neumann operator. 
\item
In the last line of \eqref{ActionIntegral-FS}, we rearrange the constraints in the action integral in the second line into the standard Clebsch advection form for two-dimensional fluid motion, by integrating by parts in time and (horizontal) space. We may then regard the quantity $D\,d^2r$ as the area measure on the horizontal domain. That is, the area measure $D\,d^2r$ is advected by $\bs{\wh{v}}$, which is imposed in the last line by regarding the trace of the velocity potential on the free surface $\widehat{\phi}$ as a Lagrange multiplier. 

\end{itemize}

\begin{remark}[The velocity $\bs{\wh v}$ can have non-zero vorticity]
The momentum map in equation \eqref{eq: momap-FS} makes it clear that the advective transport velocity $\bs{\wh v}$ has non-zero vorticity
\begin{align}
{\wh \omega}:={\rm curl}_{\bs{r}}\,\bs{\wh v} = -\,\bs{\wh{z}}\cdot \nabla_{\bs{r}}\wh{w}\times \nabla_{\bs{r}}\zeta =: -J({\wh w},\zeta)\ne0\,.
	\label{eq: omega-hat-FS}
\end{align}
Consequently, the canonical constraint equations appearing in the last line of \eqref{ActionIntegral-FS} are not potential flows. In contrast, equation \eqref{slipvelocity-defX} shows that $\bs{V}=\nabla_{\bs{r}}\wh{\phi}$ is indeed a potential velocity.
\end{remark}

\paragraph{Variational formulas.}
Taking variations of the action integral \eqref{ActionIntegral-FS} yields 
\begin{align}
\begin{split}
	\delta \bs{\wh{v}}:&\quad 
	D \bs{\wh{v}}\cdot d\bs{r} + \sigma^2\lambda\,d\zeta 
	=  Dd\widehat{\phi} 
	\quad\Longrightarrow\quad 
	\bs{V}\cdot d\bs{r} :=
	\bs{\wh{v}}\cdot d\bs{r} + \sigma^2\widehat{w}\,d\zeta 
	=  d\widehat{\phi}
	\,,\\ 
	\delta \widehat{w} :&\quad 
	D\widehat{w} - \lambda = 0
	\,,\\
	\delta \lambda:&\quad 
	 \partial_t\zeta  + \bs{\wh{v}}\cdot\nabla_{\bs{r}}\zeta =  \widehat{w}
	\,,\\
	\delta \zeta:&\quad 
	\partial_t \lambda + \text{div}_{\bs{r}}(\lambda\bs{\wh{v}})
	=
	 - \,\frac{D}{\sigma^2Fr^2}
	\quad\Longrightarrow\quad 
	\partial_t \widehat{w} + \bs{\wh{v}}\cdot\nabla_{\bs{r}}\widehat{w}
	=
	 - \,\frac{1}{\sigma^2Fr^2}
	\,,\\
	\delta \widehat{\phi} :& \quad 
	\partial_t D + \text{div}_{\bs{r}}(D\bs{\wh{v}}) =0 
	\,,\\
	\delta D:&\quad 
	\big(\partial_t + \bs{\wh{v}}\cdot \nabla_{\bs{r}}\big)\widehat{\phi} 
	= \frac{1}{2}\big(|\bs{\wh{v}}|^2+\sigma^2\widehat{w}^2\big)  - \frac{\zeta}{Fr^2} =:\varpi 
	\,.
\end{split}
	\label{var-eqns-FS-CWW}
\end{align}
Applying the Lagrangian time derivative $(\p_t +\mathcal{L}_{\bs{\wh{v}}})$ to the first relation in \eqref{var-eqns-FS-CWW} yields the ECWW motion equation,
\begin{equation}\label{EP-WW-motion-eqn}
    (\p_t +\mathcal{L}_{\bs{\wh{v}}})(\bs{\wh{v}}\cdot d\bs{r} + \sigma^2\widehat{w}d\zeta) 
    = (\p_t +\mathcal{L}_{\bs{\wh{v}}})(\bs{V}\cdot d\bs{r})
    = (\p_t +\mathcal{L}_{\bs{\wh{v}}})d\widehat{\phi}
    = d\varpi \,.
\end{equation}
See appendix \ref{app: FluidTransTheory} for more discussion of the Lie derivative notation (as in $\mathcal{L}_{\bs{\wh{v}}}$) which is defined by the Lagrangian time derivative. The quantity $d\varpi$ is the spatial differential (i.e., the gradient) of Bernoulli's law in the last line of \eqref{var-eqns-FS-CWW}. 

\paragraph{Kelvin circulation theorems for ECWWE in their dimensional form.}
Moving to the dimensional form and continuing to calculate from the ECWW motion equation in \eqref{EP-WW-motion-eqn}, we have
\begin{equation}
    \begin{aligned}
        0 &= (\p_t +\mathcal{L}_{\bs{\wh{v}}})(\bs{\wh{v}}\cdot d\bs{r} + \widehat{w}d\zeta) - d\varpi
        \\&= (\p_t +\mathcal{L}_{\bs{\wh{v}}})(\bs{\wh{v}}\cdot d\bs{r}) - \frac12d|\bs{\wh{v}}|^2
        \\&\qquad 
        + (\p_t +\mathcal{L}_{\bs{\wh{v}}}) (\wh{w}d\zeta) - d \Big(\frac12 \wh{w}^2 - g\, \zeta\Big)
        \,.
    \end{aligned}
    \label{EP-WW-motion-eqn-dim}
\end{equation}
Remarkably, the $(\wh{w},\zeta)$ equations in \eqref{var-eqns-FS-CWW} imply that the previous equation separates into two transport equations, namely,
\begin{equation}
    \begin{aligned}
        (\p_t &+\mathcal{L}_{\bs{\wh{v}}})(\bs{\wh{v}}\cdot d\bs{r}) - \frac12d|\bs{\wh{v}}|^2 = 0
        \,,\\
        (\p_t &+\mathcal{L}_{\bs{\wh{v}}}) (\wh{w}d\zeta) - d \Big(\frac12 \wh{w}^2 - g\, \zeta\Big) = 0
        \,.
    \label{eq: 2 KelThms}
    \end{aligned}
\end{equation}
Thus, the wave and current circulations are conserved separately, in a \emph{mutual non-acceleration pact},
\begin{equation}
    \begin{aligned}
        \frac{d}{dt}\oint _{c(\bs{\wh{v}})}\bs{\wh{v}}\cdot d\bs{r} &=  \oint _{c(\bs{\wh{v}})} \frac12d|\bs{\wh{v}}|^2 = 0
        \,,\\
        \frac{d}{dt}\oint _{c(\bs{\wh{v}})} \wh{w}d\zeta &=  \oint _{c(\bs{\wh{v}})} d \Big(\frac12 \wh{w}^2 - g\, \zeta\Big) = 0
        \,.
    \label{eq: 2 KelCircThms}
    \end{aligned}
\end{equation}
The separation of conservation laws in \eqref{eq: 2 KelCircThms} means that the two degrees of freedom do not influence each other's circulation. 
Actually, this separation is a general feature of wave-current interaction theories which arise from Hamilton's principle with a phase-space Lagrangian, \cite{Holm2020}.  

\paragraph{Reduction of the ECWW motion equation to the pressureless Euler fluid equation.}
Because of a cancellation of $\frac12d|\bs{\wh{v}}|^2$ in equation \eqref{eq: 2 KelThms} with the Lie derivative term, the $\bs{\wh{v}}$-equation simplifies further to produce the following \emph{pressureless Euler fluid equation} for the transport velocity $\bs{\wh v}=\nabla\wh{\phi} -  \wh{w} \nabla \zeta $,
\begin{equation}\label{WW-Pressureless2DEuler}
    \p_t\bs{\wh v} + \bs{\wh v}\cdot\nabla_{\bs{r}}\bs{\wh v} = 0
    \quad\hbox{and}\quad
    \partial_t D + \text{div}_{\bs{r}}(D\bs{\wh{v}})\,.
\end{equation}
Thus, while the vector $\bs{\wh v}$ transports the density $D$, it also transports itself as though it were an array of two advected scalars, $(\wh{v}_1,\wh{v}_2)$. This feature further simplifies the interpretation of the $\bs{\wh{v}}$-equation, because it can now be seen as an \emph{inviscid Burgers equation}.
However, note that the compressible ``Burgers velocity'' $ \bs{\wh{v}}$ in \eqref{WW-Pressureless2DEuler} has vorticity $\wh{\omega} := \bs{\wh{z}}\cdot{\rm curl \bs{\wh v}} = - J(\wh{w} , \zeta)$ which does not vanish, in general. However, the relation $\bs{\wh v}=\nabla\wh{\phi} -  \wh{w} \nabla \zeta $ and the second equation in \eqref{eq: 2 KelThms} do imply that 
\begin{equation}\label{WW-Pressureless2DEulervort}
    \p_t\wh{\omega} + \bs{\wh v}\cdot\nabla_{\bs{r}}\wh{\omega} = 0\,.
\end{equation}
Hence, if the vorticity $\wh{\omega}$ vanishes initially, it will remain so. In this case, the pressureless 2D Euler equation in \eqref{WW-Pressureless2DEulervort} reduces to the well-studied two-dimensional Hamilton-Jacobi equation for $\wh{\phi}$. See, e.g., \cite{John1953,FrischBec2001} for reviews. 

\paragraph{Back to the ECWWE in their dimensional forms.}
We may restore the ECWWE to their dimensional forms as
\begin{align}
\begin{split}
	\partial_t\widehat{\phi}  + \bs{\wh v}\cdot \nabla_{\bs{r}}\widehat{\phi} 
	&= \frac{1}{2}\big(|\bs{\wh v}|^2
	+ \widehat{w}^2\big)  
	- g\zeta
	\,,
	\\
	\partial_t\zeta  + \bs{\wh v}\cdot\nabla_{\bs{r}}\zeta 
	&=  \widehat{w}
	\,,
	\\
	\partial_t \widehat{w} + \bs{\wh v}\cdot\nabla_{\bs{r}}\widehat{w}
	&=
	 - \,g
	 \,,
	 \\
	 \partial_t D + \text{div}_{\bs{r}}(D\bs{\wh{v}}) &=0\,,
\end{split}
	\label{WW-eqns1}
\end{align}
where $\bs{\wh v}$ evolves according to \eqref{WW-Pressureless2DEuler}.
One observes that the first two equations in \eqref{WW-eqns1} are equivalent to the CWWE discussed in section \ref{subsec:CWWE}, since $\bs{\wh v} = \widehat{\nabla_{\bs{r}}\phi}$.

\subsection{Derivation of ECWW equations with non-hydrostatic pressure}\label{subsec:CWWEPressure}

In the standard derivation of the CWW equations \eqref{LannesZetaEquationX} and \eqref{LannesPsiEquationX}, the three dimensional pressure $\pi$ is taken to be zero on the surface and thus the resulting equations of motion have no pressure term. In order for the variational equations we have derived in section \ref{subsec:derivingCWWE} to match equations \eqref{LannesZetaEquationX} and \eqref{LannesPsiEquationX}, we have derived \emph{compressible} equations and thus the system also contains the additional equation for $D$.
Should we want to model an incompressible flow, and avoid having an equation for $D$, we must introduce pressure as a Lagrange multiplier which enforces that $D=1$. Of course, such a non-hydrostatic pressure would be incompatible with assuming the pressure is zero on the surface. We derive the ECWW equations with non-hydrostatic pressure and incompressible transport velocity by varying the action integral defined  in its dimensional form by

\begin{align}
\begin{split}
	S &= \int \int  
	D\left(\frac{1}{2}\big(|\bs{\wh{v}}|^2+\widehat{w}^2\big)  - g\zeta\right)   
	+ \lambda\Big( \partial_t\zeta  + \bs{\wh{v}}\cdot\nabla_{\bs{r}}\zeta - \widehat{w} \Big)
	\\
	&\qquad 
	+ \widehat{\phi}\big(\partial_t D + \text{div}_{\bs{r}}(D\bs{\wh{v}})\big) - p(D-1)
	\,d^2r\,dt\,.
\end{split}
	\label{ActionIntegral-FS_pressure}
\end{align}

Proceeding in the same manner as in section \ref{subsec:derivingCWWE}, and omitting the calculations since they are very much alike, we derive the following system of equations  in their dimensional forms

\begin{align}
\begin{split}
	\wh{\mathcal{D}}\wh\phi \coloneqq \partial_t\widehat{\phi}  + \bs{\wh v}\cdot \nabla_{\bs{r}}\widehat{\phi} 
	&= \frac{1}{2}\big(|\bs{\wh v}|^2
	+ \widehat{w}^2\big)  
	- g\zeta - p
	\,,
	\\
	\wh{\mathcal{D}}\zeta \coloneqq \partial_t\zeta  + \bs{\wh v}\cdot\nabla_{\bs{r}}\zeta 
	&=  \widehat{w}
	\,,
	\\
	\wh{\mathcal{D}}\wh w \coloneqq \partial_t \widehat{w} + \bs{\wh v}\cdot\nabla_{\bs{r}}\widehat{w}
	&=
	- \,g
	\,.
\end{split}
	\label{WW-eqns-D}
\end{align}
Here, the divergence-free transport velocity $\bs{\wh v}$ satisfies a two dimensional Euler equation
\begin{equation}\label{TransportEuler2D}
    \p_t\bs{\wh v} + \bs{\wh v}\cdot\nabla_{\bs{r}}\bs{\wh v} = -\nabla_{\bs{r}} p
    \,,\quad\hbox{where}\quad D=1 \quad\hbox{implies}\quad 
    \text{div}_{\bs{r}}\bs{\wh v} = 0\,.
\end{equation}

We may understand the structure of this problem further by comparing it to equation \eqref{projectedmotioneqnX} with $\rho_0 = 1$. Noting that $\wh{\mathcal{D}}\wh{w} =-g$, equations \eqref{projectedmotioneqnX} and \eqref{TransportEuler2D} together imply
\begin{equation}
    \wh{\mathcal{D}}\bs{\wh v} = - \nabla_{\bs{r}}\wh\pi = - \nabla_{\bs{r}}p\,.
\end{equation}\label{WW-motion-equation}
This comparison implies the following remarkable observation, which turns out to be one of our main conclusions about this approach because it provides a closure of the CWWE. 
\begin{theorem}\label{Thm: pressure projection}
The pressure, $p$, in the two dimensional model \eqref{WW-eqns-D} is equivalent to the pressure of the three dimensional fluid evaluated on the free surface, $\pi$, up to the addition of a spatial constant.  
\end{theorem}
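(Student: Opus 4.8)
The plan is to prove the identity by placing two independent expressions for the surface transport acceleration $\wh{\mathcal{D}}\bs{\wh v}$ side by side: one obtained by evaluating the full three-dimensional Euler momentum balance on the free surface (Choi's relation), and one produced directly by the two-dimensional variational model. First I would specialize Choi's relation \eqref{projectedmotioneqnX} to the constant-density case $\rho_0=1$ treated here, which reads
\[
\wh{\mathcal{D}}\wh{\bs{v}} = -\,\nabla_{\bs{r}}\wh{\pi} - (\wh{\mathcal{D}}\wh{w} + g)\nabla_{\bs{r}}\zeta\,.
\]
The key observation is that the third equation of the model \eqref{WW-eqns-D} states precisely $\wh{\mathcal{D}}\wh w = -g$, so the coefficient $(\wh{\mathcal{D}}\wh w + g)$ multiplying $\nabla_{\bs{r}}\zeta$ vanishes identically. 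Hence the surface-projected three-dimensional momentum balance collapses to $\wh{\mathcal{D}}\bs{\wh v} = -\nabla_{\bs{r}}\wh\pi$.

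Next I would invoke the two-dimensional transport Euler equation \eqref{TransportEuler2D} generated by the variational principle, namely $\wh{\mathcal{D}}\bs{\wh v} = -\nabla_{\bs{r}} p$, exactly as already assembled in the displayed equation immediately preceding the theorem. Equating the two expressions for the common quantity $\wh{\mathcal{D}}\bs{\wh v}$ gives
\[
\nabla_{\bs{r}}\big(\wh\pi - p\big) = 0\,,
\]
so that the difference $\wh\pi - p$ has vanishing horizontal gradient. Since $\nabla_{\bs{r}}$ is the gradient in the horizontal variables $\bs{r}$, this forces $\wh\pi - p$ to be independent of $\bs{r}$, i.e. at most a function of time; this is precisely the asserted equality up to a spatial constant, completing the argument.

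The algebra is immediate once the two momentum balances are written together, so there is no genuine analytical obstacle. The one point requiring care — and really the only substantive step — is justifying that the transport velocity $\bs{\wh v}$ appearing in \eqref{TransportEuler2D} is genuinely the surface trace $\wh{\bs{v}}$ of the three-dimensional horizontal velocity, so that the same material derivative operator $\wh{\mathcal{D}} = \p_t + \bs{\wh v}\cdot\nabla_{\bs{r}}$ sits on the left of both equations. This identification, established at the end of section \ref{subsec:derivingCWWE}, is what licenses equating Choi's relation, a statement about the three-dimensional flow evaluated on $z=\zeta$, with the intrinsically two-dimensional variational equation, and hence what allows the two pressures $\wh\pi$ and $p$ to be compared at all.
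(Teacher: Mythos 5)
Your argument is exactly the paper's: specialize Choi's relation \eqref{projectedmotioneqnX} to $\rho_0=1$, use $\wh{\mathcal{D}}\wh w=-g$ from \eqref{WW-eqns-D} to kill the $\nabla_{\bs{r}}\zeta$ term, and compare with \eqref{TransportEuler2D} to get $\nabla_{\bs{r}}(\wh\pi-p)=0$. Correct, and the same route as the text preceding the theorem.
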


\subsection{Conservation laws for the compressible ECWW dynamical system}
From here, we return to the compressible ECWW equations by removing the incompressibility constraint imposed by the pressure.
\subsubsection{Eulerian conservation laws for the ECWWE}
The system of ECWWE in \eqref{WW-Pressureless2DEuler} and \eqref{WW-eqns1} possesses the following fundamental Eulerian conservation laws in a domain $\Omega$ with fixed boundaries. 
\begin{enumerate}
    \item 
    The last equation in \eqref{WW-eqns1} implies conservation of \emph{mass}, $\mathbb{D} := \int_{\Omega} D\,d^2r$,
    \[
    \frac{d\mathbb{D}}{dt} :=\frac{d}{dt} \int_{\Omega} D\,d^2r = \int_{\Omega} \p_t D \,d^2r 
    = - \int_{\Omega} \text{div}_{\bs{r}}(D\bs{\wh{v}})\,d^2r 
    = - \oint_{\p\Omega} D\bs{\wh{v}}\cdot\bs{\wh{n}}\,ds = 0
    \,,\]
    for $\bs{\wh{v}}\cdot\bs{\wh{n}}$ on the boundary $\p\Omega$ with normal vector $\bs{\wh{n}}$.
    \item 
    Equation \eqref{WW-Pressureless2DEuler} and the last equation in \eqref{WW-eqns1} imply conservation of \emph{momentum}, defined by
    \[
    \frac{d\mathbb{M}_j}{dt} :=
    \frac{d}{dt} \int_{\Omega} D\wh{v}_j\,d^2r =  - \int_{\Omega} \p_k (D\wh{v}_j\wh{v}^k)\,d^2r 
    = - \oint_{\p\Omega} D\wh{v}_j \bs{\wh{v}}\cdot\bs{\wh{n}}\,ds = 0
    \,.\]
    \item
    Combining the curl$_{\bs{r}}$ of equation \eqref{WW-Pressureless2DEuler} and the last equation in \eqref{WW-eqns1} 
    implies conservation of \emph{mass-weighted enstrophy}, defined by 
    \[
    \mathbb{C}_{\Phi}:= \int_{\Omega} D \Phi(\wh{\omega}) \,d^2r\,,
    \] 
    for any differentiable function $\Phi$ of vorticity, $\wh{\omega}$, which itself is defined by
    \[
    \wh{\omega} := \bs{\wh{z}}\cdot {\rm curl}_{\bs{r}}\bs{\wh{v}} = - \bs{\wh{z}}\cdot \nabla_{\bs{r}}\wh{w}\times\nabla_{\bs{r}}\zeta
    =: - J(\wh{w},\zeta)
    \,.\]
    Thus, upon noticing that vorticity $\wh{\omega}$ is advected as a scalar by the flow of $\bs{\wh{v}}$, 
    we find advection of any function of $\Phi(\wh{\omega})$, as well, by the chain rule and linearity of the advection operator for scalars. Namely,
    \[
    (\p_t + \bs{\wh{v}}\cdot \nabla_{\bs{r}})\Phi(\wh{\omega}) = 0\,.
    \]
    Thus, we obtain conservation of mass-weighted enstrophy from the continuity equation, the chain rule and integration by parts, as follows,
    \[
    \frac{d\mathbb{C}_{\Phi}}{dt} := 
    \int_{\Omega}\p_t (D \Phi(\wh{\omega})) \,d^2r =  - \int_{\Omega} \p_k (D\Phi(\wh{\omega})\wh{v}^k )\,d^2r 
    = - \oint_{\p\Omega} D\Phi(\wh{\omega})\,\bs{\wh{v}}\cdot\bs{\wh{n}}\,ds = 0
    \,,
    \]
    for $\bs{\wh{v}}\cdot\bs{\wh{n}}$ on the fixed boundary $\p\Omega$. 
    Thus, the $D$-weighted $L^p$ norm of the vorticity $\wh{\omega}={\rm curl}\bs{\wh{v}}$ is controlled. 
\item
The corresponding conserved energy is given by
\begin{align}
\begin{split}
E(\bs{\wh{v}},\wh{w},\zeta , D) &=
\int \left(\frac{1}{2} |\bs{\wh{v}}|^2 + \frac{1}{2} \wh{w}^2
 + g\zeta \right) \,D\,d^2r
\\&=
\int \left(\frac{1}{2} |\widehat{\nabla_{\bs{r}}\phi}|^2 + \frac{1}{2} \wh{w}^2
 + g\zeta \right) \,D\,d^2r
\,.
\end{split}
	\label{WW-LP-Erg-}
\end{align}
This expression follows quite easily from the Legendre transformation of the Lagrangian in 
\eqref{ActionIntegral-FS_pressure}.

\end{enumerate}

\subsubsection{Moment dynamics of a Lagrangian fluid blob under the ECWWE}\label{Blob-dyn-ECWW}

We re-write the continuity equation in \eqref{WW-eqns1} and its associated motion equation in \eqref{WW-Pressureless2DEuler} as 
Lagrangian conservation laws for mass and momentum,
\begin{align}
\begin{split}
(\p_t + \mathcal{L}_{\wh{v}})(D \,d^2r) &= (\p_t D +  \p_k P^k)\,d^2r = 0
\quad\hbox{with}\quad
P^k := D\wh{v}^k\,,
\\
(\p_t + \mathcal{L}_{\wh{v}})(P^j \,d^2r) &=
\big( \p_t P^j +  \p_k (P^j\wh{v}^k) \big)\,d^2r = 0\,.
\end{split}
	\label{eqn-virial1}
\end{align}
Consider a two dimensional `blob' of fluid mass occupying a Lagrangian domain of fluid $\Omega(t)$ which is deforming under the ECWWE flow of the free-surface fluid velocity $\bs{\wh{v}}$ so that no fluid material enters or leaves through its moving boundary $\p\Omega(t)$. In this situation, we have the following Reynolds transport relations for the dynamics of the spatial moments of the mass distribution within the blob. 
\begin{enumerate}
    \item 
The total mass of a Lagrangian blob is conserved:
\[
\frac{d}{dt}\int_{\Omega(t)} D \,d^2r = \int_{\Omega(t)} (\p_t + \mathcal{L}_{\wh{v}})(D \,d^2r) 
=  \int_{\Omega(t)} \big(\p_t D + \p_k(D\wh{v}^k)\big)\,d^2r  = 0\,.
\]
    \item 
The rate of change of the centre of mass of the blob is its conserved momentum:
\[
\frac{d}{dt}\int_{\Omega(t)} r^j D \,d^2r = \int_{\Omega(t)} (\p_t + \mathcal{L}_{\wh{v}})(r^j D \,d^2r) 
= \int_{\Omega(t)} \wh{v}^j\, D \,d^2r = \int_{\Omega(t)} P^j\,d^2r \,.
\]
Conservation of the blob momentum is shown by a direct computation, 
\[
\frac{d^2}{dt^2}\int_{\Omega(t)} r^j D \,d^2r = \frac{d}{dt}\int_{\Omega(t)} P^j\,d^2r
= \int_{\Omega(t)} (\p_t + \mathcal{L}_{\wh{v}})(P^j \,d^2r) = 0\,.
\]
    \item
The moment of inertia $I^{ij}=\int_{\Omega(t)} r^i r^j D \,d^2r $ represents the elliptical shape of the blob. 
Its rate of change may be computed as
\[
\frac{d}{dt} I^{ij} = \frac{d}{dt}\int_{\Omega(t)} r^i r^j D \,d^2r 
= \int_{\Omega(t)} (\wh{v}^ir^j + r^i\wh{v}^j) D \,d^2r  \,.
\]
    \item 
The acceleration of the elliptical shape of the blob is governed by
\[
\frac{d^2}{dt^2} I^{ij} = \frac{d}{dt}\int_{\Omega(t)} (P^i\,d^2r) r^j + r^i(P^j\,d^2r) 
= \int_{\Omega(t)} (P^i \wh{v}^j + \wh{v}^i P^j)\,d^2r = \int_{\Omega(t)} (\wh{v}^i \wh{v}^j + \wh{v}^i\wh{v}^j)\,D\,d^2r \,.
\]
    \item 
Remarkably, the acceleration of the trace of the moment of inertia ${\rm tr}(I)$ is positive-definite
\[
\frac{d^2}{dt^2} {\rm tr}(I) = 2\int_{\Omega(t)} D |\bs{\wh{v}}|^2 \,d^2r > 0\,.
\]
This is a simple version of the tensor virial theorem \cite{Chandra1989}.  Here, the tensor virial theorem implies that under ECWWE flow equations in \eqref{eqn-virial1} any initial distribution of mass will expand outward at an acceleration rate proportional to the kinetic energy within its Lagrangian boundary. Because this result holds for every Lagrangian blob of fluid undergoing this motion it follows that the mass density cannot become singular in an infinite flow domain. This means that the measure $Dd^2r$ cannot become a Dirac measure.
    \item 
Finally, we notice that blob angular momentum $L^{ij}:= \int_{\Omega(t)}(\wh{v}^i r^j - r^i \wh{v}^j)D\,d^2r$ is conserved under the ECWWE flow, since
\[
\frac{d}{dt}L^{ij} := \frac{d}{dt}\int_{\Omega(t)} (P^i\,d^2r) r^j - r^i(P^j\,d^2r)
= \int_{\Omega(t)} (\wh{v}^i \wh{v}^j - \wh{v}^i\wh{v}^j)\,D\,d^2r = 0.
\]
\end{enumerate}


\subsection{Three Hamiltonian formulations of the ECWWE using free-surface variables}\label{sec: HamForm-CWWE}

This section derives three equivalent Hamiltonian formulations of the system of ECWWE in \eqref{WW-Pressureless2DEuler} and \eqref{WW-eqns1}. To set the stage, let us first remark on the previous literature concerning Hamiltonian formulations of fluid dynamics with free-boundaries.

{
\begin{remark}[Previous Hamiltonian formulations of fluid dynamics with free-boundaries.]\label{Rem: PrevHamForms}
    The ECWWE model extends the CWW model to permit rotational flow. Before investigating its Hamiltonian formulation we recall here the result of Zakharov \cite{Zakharov1968} that the CWWE also have a Hamiltonian structure with similarities to the Hamiltonian \eqref{CFS-Ham1}. Indeed, the water wave equations have canonical variables $\zeta$ and $\phi$, and a Hamiltonian defined by
    \begin{equation*}
        \frac{1}{2}\int\int |\nabla\phi|^2\,d^2x\,dz + \frac{g}{2}\int \zeta^2 \,d^2x\,,
    \end{equation*}
    in the case of zero surface tension. There are some similarities between this Hamiltonian structure of the water-wave equations and the full system of equations we have derived. However, the Hamiltonian for the water-wave equations and one of the canonical variables are vertically integrated compared to \eqref{WW-LP-Erg-}, which is evaluated on the free surface.
    
    Lewis et al.  \cite{LMMR1986} generalized the previous canonical structure of Zakharov \cite{Zakharov1968} for irrotational flow to obtain Hamiltonian structures for 2- or 3-dimensional incompressible flows with a free boundary. The Poisson bracket in \cite{LMMR1986} was determined using reduction from canonical variables in the Lagrangian (material) description. The corresponding Hamiltonian form for the equations of a liquid drop with a free boundary having surface tension was also demonstrated, as was the structure of the bracket in terms of a reduced cotangent bundle of a principal bundle was explained. In the case of two-dimensional flows, a vorticity bracket was determined and the generalized enstrophy was shown to be a Casimir function. 
    
A Hamiltonian description of free boundary fluids has also been studied in  Mazer and Ratiu \cite{MazerRatiu1989}. 
In \cite{MazerRatiu1989}, the Hamiltonian formulation of adiabatic free boundary inviscid fluid flow using only physical variables was presented in both the material and spatial formulation. By using the symmetry of particle relabeling, the noncanonical Poisson bracket in Eulerian representation was derived as a reduction from the canonical bracket in the  Lagrangian representation. When the free boundary of the fluid was specified as the zero level set of an array of advected functions (e.g., Lagrangian labels carried by the fluid flow), the formulation of \cite{MazerRatiu1989} recovered the Lie Poisson bracket of \cite{HDIA-etal1988}, as well as the corresponding potential vorticity and other conserved quantities found in \cite{HDIA+Holm1987}.

In a tour-de-force, Gay-Balmaz, Marsden and Ratiu \cite{GBMR2012} carried out Lagrangian reduction for free boundary fluids and deduced both the equations of motion and their associated constrained variational principles in both the convective and spatial representations. To follow up, Gay-Balmaz and Vizman  constructed dual pairs for free boundary fluids in \cite{GBViz2015}. 

Finally, we mention that Castro and Lannes \cite{CastroLannes2015} proposed what they called a \emph{formal} Poisson bracket for an extension of CWWE which differs from the present work by combining vertically integrated variables with free-surface variables possessing a noncanonical Poisson bracket. They expressed reservations about whether their formal Poisson bracket would satisfy the Jacobi identity and they cited \cite{Kolev2007} which describes potentially problematic technical pitfalls in this regard. However, the \emph{formal} Poisson bracket in Castro and Lannes \cite{CastroLannes2015} actually does satisfy the Jacobi identity, because their Poisson bracket is equivalent to that in Lewis et al.  \cite{LMMR1986}, which does satisfy the Jacobi identity for admissible functionals $A$ such that for every triple of functionals $f,g,h \in A$, the bracket of any of two of them lies in $A$.

None of the previous Hamiltonian formulations of fluid dynamics with free-boundaries described above have represented the free-boundary dynamics in terms of projection/evaluation properties of the Dirichlet-Neumann operator (DNO) representation of CWW theory, as is done in the present approach.  

In contrast, the late Walter Craig and his collaborators in \cite{CG,CGNS} used asymptotic expansions of the DNO for CWWE to derive Hamiltonian formulations of certain soliton equations. 
The efforts of Craig et al. \cite{CG,CGNS} took advantage of the DNO representation of the CWWE to formulate Hamiltonian equations which do not involve vertically integrated variables. These Hamiltonian equations also enabled the study of interesting bathymetry by introducing a more general DNO. See also \cite{AP2015} for a review and bibliography of previous work in Hamiltonian formulations of the wave-current interaction based on the DNO. In contrast, the present work uses the DNO map to extend the CWWE to ECWWE. 

\end{remark}
}

\subsubsection{Canonical Hamiltonian formulation of the ECWWE }
{
In order to consider the Hamiltonian formulation of this problem, we define a \emph{third} form of the Lagrangian \eqref{ActionIntegral-FS} by performing a Legendre transform as follows
\begin{equation}
\begin{aligned}
    S &= \int \int  
	(\sigma^2\lambda)\Big( \partial_t\zeta  + \bs{\wh{v}}\cdot\nabla_{\bs{r}}\zeta - \widehat{w} \Big) 
	- D\left(\p_t\widehat\phi + \bs{\wh{v}}\cdot\nabla_{\bs{r}}\widehat\phi + \frac{\zeta}{Fr^2}  - \frac{1}{2}|\bs{\wh{v}}|^2 
    - \frac{\sigma^2}{2}\widehat{w}^2\right) \,d^2r\,dt
 	\\&= \int \int  
	(\sigma^2\lambda) \partial_t\zeta + \widehat{\phi} \partial_t  D  
    - \left(\frac{1}{2D}\Big( |D\bs{\wh{v}}|^2 + (\sigma\lambda)^2 \Big)+ \frac{D\zeta}{Fr^2}\right) 
	\,d^2r\,dt\,,
\end{aligned}
\end{equation}
where, to go from the first line to the second, we have integrated by parts in time and made use of the first two relations in \eqref{var-eqns-FS-CWW}.

We have now expressed the Lagrangian within the action integral \eqref{ActionIntegral-FS}, $\ell(\bs{\wh{v}}, D,\widehat{\phi},{\wh w}, \zeta; \lambda)$, as a phase-space Lagrangian, by re-writing it as a Legendre transform. The phase-space form of the Lagrangian immediately identifies the canonically conjugate pairs of field variables $(\wh{\phi},D)$ and $(\sigma^2\lambda,\zeta)$ and determines the Hamiltonian as 
\begin{align}
H (\wh{\phi},D;\lambda,\zeta) = \int 
\frac{1}{2D}\Big( |D\nabla\wh{\phi} - (\sigma^2\lambda) \nabla \zeta|^2 + (\sigma\lambda)^2 \Big)+ \frac{D\zeta}{Fr^2}
\,d^2r\,.
	\label{eq: Hamiltonian-FS}
\end{align}
The variation of the Lagrangian in any of its equivalent representations in \eqref{ActionIntegral-FS} with respect to the vector-field velocity $\bs{\wh{v}}$ yields the momentum density relation
\begin{align}
\bs{m} :=
D\bs{\wh{v}} = D\nabla\wh{\phi} - D \wh{w} \nabla \zeta  = D\nabla\wh{\phi} - (\sigma^2\lambda) \nabla \zeta
\,.
	\label{eq: momap-FS}
\end{align}
This expression provides a (cotangent lift) momentum map from the canonically conjugate pairs of field variables $(\wh{\phi},D)$ and $(\sigma^2\lambda,\zeta)$ to the momentum density $\bs{m} :=D\bs{\wh{v}}$ which is in concert with equation \eqref{v-hatX}.
}

{
Restoring the dimensions, in the canonical Hamiltonian field variables for currents $(\wh{\phi},D)$ and for waves $(\lambda,\zeta)$, the Bernoulli function ${\varpi}$ in \eqref{var-eqns-FS-CWW} is expressed as
\begin{equation}
	\varpi
	:= \frac{1}{2}\Big(\big|\bs{\wh{v}}\big|^2 + \wh{w}^2\Big) - g\zeta
	= \frac{1}{2} \Big|\nabla\wh{\phi} - \frac{\lambda}{D} \nabla \zeta \Big|^2
    + \frac{\lambda^2}{2D^2}  
	- g\zeta
	\,.
	\label{Bernoulli-varpi-1}
\end{equation}
The corresponding energy Hamiltonian in these variables is given by
\begin{equation}
H(\bs{\wh{v}},\wh{w},\zeta , D) =
\int \left(\frac{1}{2} \Big|\nabla\wh{\phi} - \frac{\lambda}{D} \nabla \zeta \Big|^2
    + \frac{\lambda^2}{2D^2}  
	+ g\zeta \right) \,D\,d^2r
	\label{Hamiltonian-canonical}
\end{equation}
The canonical Hamiltonian equations for ECWWE in terms of the two degrees of freedom comprising wave variables $(\lambda,\zeta)$ and current variables $(\wh{\phi},D)$ are given by
\begin{align}
\begin{split}
\frac{\p}{\p t}
\begin{bmatrix}
\wh{\phi} \\ D \\ \lambda \\ \zeta 
\end{bmatrix}
= -
\begin{bmatrix}
0 & 1 & 0 & 0
\\
-1	    & 0  & 0 & 0
\\
0	& 0 & 0  & 1
\\
0	 & 0  & -1 & 0
\end{bmatrix}
\begin{bmatrix}
 {\delta h}/{\delta \wh{\phi}} = - \,{\rm div}_{\bs{r}}(D\bs{\wh{v}})
 \\ {\delta h}/{\delta D} = \bs{\wh{v}}\cdot\nabla_{\bs{r}}\wh{\phi} -\,\varpi
 \\ 
{\delta h}/{\delta \lambda} = - \bs{\wh{v}}\cdot\nabla_{\bs{r}}\zeta + \lambda/D 
\\ {\delta h}/{\delta \zeta} =  {\rm div}_{\bs{r}} (\lambda\bs{\wh{v}}) + g D 
\end{bmatrix}
\,.
\end{split}
	\label{FS-diag-brkt-CWW-canon}
\end{align}
One observes that the symplectic Poisson operator for the two independent degrees of freedom in \eqref{FS-diag-brkt-CWW-canon} 
appears in the canonical block-diagonal form. The wave-current interactions between these two independent degrees of freedom (waves $(\lambda,\zeta)$ with $ \lambda = D\widehat{w} $ and currents $(\wh{\phi},D)$) are determined by the Hamiltonian in \eqref{Hamiltonian-canonical}.
}

\subsubsection{Entangled Hamiltonian formulation of the ECWWE }
In terms of the canonical Hamiltonian field variables $(\wh{\phi},D)$ and $(\lambda,\zeta)$, 
the total momentum density of the fluid $\bs{\wh{m}}:=D\bs{\wh{v}}$ is defined as the sum
\begin{align}
\bs{m} = D\bs{\wh{v}} = D\nabla\wh{\phi} - \lambda \nabla \zeta\,,
	\label{WW-total-momap-CWW-1}
\end{align}
of both wave and current variables.

The Legendre transform with respect to both pairs of canonical wave variables defining the momentum density $\bs{m}$ leads to the following Hamiltonian,
\begin{align}
\begin{split}
h(\bs{m},D,\lambda,\zeta) 
&= \int \bs{m}\cdot \bs{\wh{v}}\, d^2r - \ell(\bs{\wh{v}}, D,\widehat{\phi},{\wh w}, \zeta; \lambda)
\\&=
\int \frac{1}{2D} \big|\bs{m} \big|^2
+ \frac{\lambda^2}{2D} + gD\zeta \,d^2r
\,.
\end{split}
	\label{WW-LP-Ham-CWW-m}
\end{align}
The corresponding conserved energy was already mentioned in \eqref{WW-LP-Erg-} as
\begin{align}
\begin{split}
E(\bs{\wh{v}},\wh{w},\zeta , D) &=
\int \left(\frac{1}{2} |\bs{\wh{v}}|^2 + \frac{1}{2} \wh{w}^2
 + g\zeta \right) \,D\,d^2r
\\&=
\int \left(\frac{1}{2} |\widehat{\nabla_{\bs{r}}\phi}|^2 + \frac{1}{2} \wh{w}^2
 + g\zeta \right) \,D\,d^2r
\,.
\end{split}
	\label{WW-LP-Erg-CWW-m}
\end{align}

This change of variables leads to the following Lie-Poisson Hamiltonian formulation,
\begin{align}
\begin{split}
\frac{\p}{\p t}
\begin{bmatrix}
m_i \\ D \\ \lambda \\ \zeta 
\end{bmatrix}
= -
\begin{bmatrix}
\partial_j m_i + m_j \partial_i & D\partial_i  & \lambda\partial_i & -\zeta_{,i}
\\
\partial_jD 	    & 0  & 0 & 0
\\
\partial_j\lambda	& 0 & 0  & 1
\\
\zeta_{,j}	 & 0  & -1 & 0
\end{bmatrix}
\begin{bmatrix}
 {\delta h}/{\delta {m_j}} = \wh{v}^j
 \\ {\delta h}/{\delta D} = -\,\varpi
 \\ 
{\delta h}/{\delta \lambda} =  \lambda/D 
\\ {\delta h}/{\delta \zeta} =   g D 
\end{bmatrix}
\,,
\end{split}
	\label{FS-diag-brkt-CWW-1}
\end{align}
where $\varpi$ is defined in equation \eqref{Bernoulli-varpi-1}.
Here, the Poisson operator is the direct sum of the usual semidirect-product Lie-Poisson bracket for ideal fluids \cite{HMR1998} and 
a symplectic Poisson bracket for the canonical wave variables, $(\lambda,\zeta)$. 

The Poisson operator in \eqref{FS-diag-brkt-CWW-1} is said to \emph{entangle}
the dynamics of the wave variables $(\lambda,\zeta)$ with the combined variables $(m_i,D)$. Next, we will \emph{untangle} the entangled Poisson operator to put it back into 
block-diagonal form as before in \eqref{FS-diag-brkt-CWW-canon} by considering only the momentum density corresponding to the potential part of the fluid flow. 

\subsubsection{Untangled Hamiltonian formulation of the ECWWE }

The momentum density of only the purely potential part of the fluid flow is given in terms of the canonical wave variables and the transport velocity $\bs{\wh{v}}:=\wh{\nabla_{\bs{r}}\phi}$ by
\begin{align}
\bs{M} = D\bs{\wh{v}} + \lambda \nabla_{\bs{r}} \zeta = D\nabla\wh{\phi} 
= D \bs{V}
\,.
	\label{WW-total-momap-CWW}
\end{align}
The Legendre transform with respect to only the $D$ and $\wh{\phi}$ variables corresponding to the potential part of the 
flow leads to the following Hamiltonian,
\begin{align}
\begin{split}
h(\bs{M},D,\lambda,\zeta) 
&= \int \bs{M}\cdot \bs{\wh{v}} + \lambda\p_t \zeta\, d^2r - \ell(\bs{\wh{v}}, D,\widehat{\phi},{\wh w}, \zeta; \lambda)
\\&=
\int \frac{1}{2D} \big|\bs{M} - \lambda \nabla_{\bs{r}} \zeta \big|^2
+ \frac{\lambda^2}{2D} + gD\zeta \,d^2r
\\&= E(\bs{\wh{v}},\wh{w},\zeta , D)
\,.
\end{split}
	\label{WW-LP-Ham-CWW}
\end{align}
where the energy $E(\bs{\wh{v}},\wh{w},\zeta , D)$ is defined in \eqref{WW-LP-Erg-CWW-m}. 
Thus, the Hamiltonian in \eqref{WW-LP-Ham-CWW} is yet another representation of the conserved energy for 
the ECWWE system in \eqref{LannesZetaEquationX} and \eqref{LannesPsiEquationX}.

\paragraph{Variations of the Hamiltonian in \eqref{WW-LP-Ham-CWW}.}
 In the Hamiltonian variables, the Bernoulli function ${\varpi}$ in \eqref{var-eqns-FS-CWW} is denoted as
\begin{equation}
	\varpi
	:= \frac{1}{2}\Big(\big|\bs{\wh{v}}\big|^2 + \wh{w}^2\Big) - g\zeta
	= \frac{1}{2D^2} \big|\bs{M} 
	- \lambda \nabla_{\bs{r}} \zeta \big|^2
    + \frac{\lambda^2}{2D^2}  
	- g\zeta
	\,.
	\label{Bernoulli-varpi-M}
\end{equation}
After evaluating the corresponding variational derivatives of the Hamiltonian in \eqref{WW-LP-Ham-CWW}, 
the system of equations in \eqref{FS-diag-brkt-CWW-1} may be written in block-diagonal form, as
\begin{align}
\begin{split}
\frac{\p}{\p t}
\begin{bmatrix}
M_i \\ D \\ \lambda \\ \zeta 
\end{bmatrix}
= -
\begin{bmatrix}
\partial_j M_i + M_j \partial_i & D\partial_i  & 0 & 0
\\
\partial_jD 	    & 0  & 0 & 0
\\
0	& 0 & 0  & 1
\\
0	 & 0  & -1 & 0
\end{bmatrix}
\begin{bmatrix}
 {\delta h}/{\delta {M_j}} = \wh{v}^j
 \\ {\delta h}/{\delta D} = -\,\varpi
 \\ 
{\delta h}/{\delta \lambda} = - \bs{\wh{v}}\cdot\nabla_{\bs{r}}\zeta + \lambda/D 
\\ {\delta h}/{\delta \zeta} =  {\rm div}_{\bs{r}} (\lambda\bs{\wh{v}}) + g D 
\end{bmatrix}
\,.
\end{split}
	\label{FS-diag-brkt-CWW}
\end{align}
This \emph{untangled} form of the Poisson operator comprises a direct product of the standard Lie-Poisson bracket for fluid variables $(\bs{M},D)$ and 
a symplectic Poisson bracket for the canonical wave variables, $(\lambda,\zeta)$. 

\begin{remark}[Physical meaning of the model]
The dual entangled and untangled forms of the Lie-Poisson brackets seen in \eqref{FS-diag-brkt-CWW-1} and \eqref{FS-diag-brkt-CWW} are familiar in Hamiltonian formulations of wave-current interactions and other compound Eulerian-Lagrangian fluid systems, as well as body-space mechanical systems. These dual formulations are particularly well-known in the investigations of systems whose dynamics is governed by variational principles which are averaged over time, phase, or some other fluctuating or stochastic parameter. See, e.g., \cite{HolmSWCI2021} for a recent review and bibliography relevant to the current investigation. 
\end{remark}

{
\paragraph{The Non-acceleration Theorem for ECWWE.}
The Lie-Poisson Hamiltonian structure in \eqref{FS-diag-brkt-CWW} provides insight into the physical interactions occurring in the ECWWE. Namely, the Eulerian fluid variables are Lie-Poisson in the total momentum in \eqref{WW-total-momap-CWW} and the area element $D$, while the canonically conjugate wave variables undergo symplectic dynamics in the elevation $\zeta$ and its canonical momentum density  $\lambda=D\wh{w}$. Since the Poisson structure block-diagonalises for the two types of fields, both fields are seen to contribute on the same footing to the Hamiltonian formulation of the combined motion.

This dual wave-current contribution is already clear from the coordinate-free form of the motion equation in \eqref{EP-WW-motion-eqn}, since it immediately implies conservation of a two-component Kelvin circulation integral involving both types of fields present in the momentum density, 
\begin{align}
\begin{split}
    \frac{d}{dt} \oint_{c(\bs{\wh{v}})} (\bs{\wh{v}}\cdot d\bs{r} + \widehat{w}d\zeta)
    &= \oint_{c(\bs{\wh{v}})}(\p_t +\mathcal{L}_{\bs{\wh{v}}})
    \big(\bs{\wh{v}}\cdot d\bs{r} + \widehat{w}d\zeta\big)
    \\
    &= \oint_{c(\bs{\wh{v}})} (\p_t +\mathcal{L}_{\bs{\wh{v}}})(\bs{V}\cdot d\bs{r})
    = \oint_{c(\bs{\wh{v}})} (\p_t +\mathcal{L}_{\bs{\wh{v}}})d\widehat{\phi}
    = \oint_{c(\bs{\wh{v}})} d\varpi = 0\,,
\end{split}
	\label{EP-WW-circ-thm}
\end{align}
where $c(\bs{\wh{v}})$ denotes a closed material loop moving with the fluid transport velocity, $\bs{\wh{v}}$. 
However, a closer look at the separate equations of motion for $\lambda$ and $\zeta$ in the Hamiltonian form in equation \eqref{FS-diag-brkt-CWW} shows that the wave dynamics takes place independently in the moving frame the fluid flow, without actually influencing the flow. Indeed, a closer look at the circulation dynamics in \eqref{EP-WW-circ-thm} verifies that the two components of the circulation are \emph{conserved separately}, as we already know from equation \eqref{eq: 2 KelThms}. In particular, this means that in solutions of ECCWE the waves cannot generate circulation of the currents. This is known in the literature as the ``Non-acceleration Theorem'' \cite{AM1978}.

\paragraph{Modifying the ECWWE to rectify the Non-acceleration Theorem.}
The Non-acceleration Theorem arises in time-dependent solutions of ECCWE because the model has no dependence on the combination of wave slope $\nabla \zeta$ and vertical velocity $\wh{w}$ which would characterise a wave-slope dependence of the energy of the wave field. 
One natural proposal for rectifying this situation would be to introduce an additional energy density as in the last term in the following,
\begin{align}
\begin{split}
E(\bs{\wh{v}},\wh{w},\zeta , D) 
&=
\int \left(\frac{1}{2} |\widehat{\nabla_{\bs{r}}\phi}|^2 + \frac{1}{2} \wh{w}^2
 + g\zeta 
 + \frac{\epsilon}{2} \big| \nabla_{\bs{r}}\wh{\phi}-\wh{\nabla_{\bs{r}}\phi} \big|^2 \right) \,D\,d^2r
\\&=
\int \left(\frac{1}{2} |\bs{\wh{v}}|^2 + \frac{1}{2} \wh{w}^2
 + g\zeta + \frac{\epsilon}{2} \big| \bs{V}-\bs{\wh{v}}\big|^2 \right) \,D\,d^2r 
\\&=
\int \left(\frac{1}{2} |\bs{\wh{v}}|^2 + \frac{1}{2} \wh{w}^2
 + g\zeta + \frac{\epsilon}{2} | \wh{w}\nabla\zeta |^2 \right) \,D\,d^2r 
\,.
\end{split}
	\label{WW-LP-Erg-CWW-m}
\end{align}
Physically, this would mean that the wave-field energy density $\epsilon D | \wh{w}\nabla\zeta |^2/2$ would increase with mass density $D$, wave slope $\nabla \zeta$, and vertical velocity $\wh{w}$. The multiplier $\epsilon$ could be varied $0\le\epsilon\le1$ to test the sensitivity of the model solutions to the proposed wave-field energy cost.

In nondimensional form, the wave-field energy cost could be included in the action integral as 
\begin{align}
\begin{split}
	S &=\int \ell(\bs{\wh{v}}, D,\widehat{\phi},{\wh w}; \zeta, \lambda)dt
	\\&= \int \int  
	D\left(\frac{1}{2}\big(|\bs{\wh{v}}|^2+\sigma^2\widehat{w}^2
	+ \epsilon\sigma^4 |\wh{w}\nabla_{\bs{r}}\zeta|^2 \big)  - \frac{\zeta}{Fr^2}\right)   
	+ \sigma^2\lambda\Big( \partial_t\zeta  + \bs{\wh{v}}\cdot\nabla_{\bs{r}}\zeta - \widehat{w} \Big)
	\\
	&\qquad 
	+ \widehat{\phi}\big(\partial_t D + \text{div}_{\bs{r}}(D\bs{\wh{v}})\big)
	\ d^2r\,dt\,,
\end{split}
	\label{ActionIntegral-WaveErgMod}
\end{align}
Stationarity of the modified action integral in \eqref{ActionIntegral-WaveErgMod} yields the variational equations,
\begin{align}
\begin{split}
	\delta \bs{\wh{v}}:&\quad 
	D \bs{\wh{v}}\cdot d\bs{r} + \sigma^2\lambda\,d\zeta 
	=  Dd\widehat{\phi} 
	\quad\Longrightarrow\quad 
	\bs{V}\cdot d\bs{r} :=
	\bs{\wh{v}}\cdot d\bs{r} + \sigma^2(\lambda/D)\,d\zeta 
	=  d\widehat{\phi}
	\,,\\ 
	\delta \widehat{w} :&\quad 
	D\widehat{w} - \lambda + \epsilon\sigma^2 D\wh{w}|\nabla_{\bs{r}}\zeta|^2 = 0
	\quad\Longrightarrow\quad 
	(\lambda/D) = \wh{w} \big(1+\epsilon\sigma^2|\nabla_{\bs{r}}\zeta|^2\big)
	\,,\\
	\delta \lambda:&\quad 
	 \partial_t\zeta  + \bs{\wh{v}}\cdot\nabla_{\bs{r}}\zeta
	 -  \widehat{w} = 0
	\,,\\
	\delta \zeta:&\quad 
	\partial_t \lambda + \text{div}_{\bs{r}}
	\big(\lambda\bs{\wh{v}} \big)
	=
	 - \,\frac{D}{\sigma^2Fr^2} 
	 - \text{div}_{\bs{r}}\big( \epsilon \sigma^2 D\wh{w}^2  \nabla_{\bs{r}}\zeta\big)
	\,,\\
	\delta \widehat{\phi} :& \quad 
	\partial_t D + \text{div}_{\bs{r}}\big(D\bs{\wh{v}}\big) =0 
	\,,\\
	\delta D:&\quad 
	\big(\partial_t + \bs{\wh{v}}\cdot \nabla_{\bs{r}}\big)\widehat{\phi} 
	= \frac{1}{2}\Big(|\bs{\wh{v}}|^2+\sigma^2\widehat{w}^2
	\big(1+\epsilon\sigma^2|\nabla_{\bs{r}}\zeta|^2\big)  \Big)
	- \frac{\zeta}{Fr^2} =:\widetilde{\varpi}
	\,.
\end{split}
	\label{var-eqns-WaveErg}
\end{align}
The previous wave system in \eqref{var-eqns-FS-CWW} is recovered when one sets $\epsilon=0$. 

\begin{theorem}\label{KelThmModCWWE}
The system of equations in \eqref{var-eqns-WaveErg} implies the following Kelvin theorem for the fluid circulation
\begin{equation}\label{ModCWW-WCIKelvin}
 \frac{d}{dt} \oint_{c(\bs{\wh{v}})} \bs{\wh v}\cdot d\bs{r}  
 = \epsilon\sigma^4\oint_{c(\bs{\wh{v}})}
 	  \frac{1}{D}\text{div}_{\bs{r}}\big( D\wh{w}^2\nabla\zeta\big)   d\zeta
	 +  |\nabla_{\bs{r}}\zeta|^2 d\wh{w}^2/2
\,,
\end{equation}
in which $c(\bs{\wh{v}})$ is a closed loop moving with the material velocity $\bs{\wh{v}}$.
\end{theorem}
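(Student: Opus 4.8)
The plan is to reproduce the circulation calculation \eqref{EP-WW-circ-thm} that was used for the ECWWE, now retaining the $\epsilon$-dependent coupling. The starting point is the $\delta\bs{\wh{v}}$ relation in \eqref{var-eqns-WaveErg}, which states that the circulation one-form is exact, $\bs{\wh{v}}\cdot d\bs{r} + \sigma^2(\lambda/D)\,d\zeta = d\wh{\phi}$. Applying the advective operator $(\p_t + \mathcal{L}_{\bs{\wh{v}}})$, which commutes with the spatial differential $d$, and using $\wh{\mathcal{D}}\wh{\phi} = \widetilde{\varpi}$ from the $\delta D$ equation, the right-hand side collapses to the exact form $d\widetilde{\varpi}$. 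Since $\oint_{c(\bs{\wh{v}})} d\widetilde{\varpi}=0$ on any closed material loop, the circulation transport theorem gives
\begin{equation}
\frac{d}{dt}\oint_{c(\bs{\wh{v}})}\bs{\wh{v}}\cdot d\bs{r} = -\oint_{c(\bs{\wh{v}})}(\p_t + \mathcal{L}_{\bs{\wh{v}}})\big(\sigma^2(\lambda/D)\,d\zeta\big)\,,
\end{equation}
so everything reduces to evaluating the advected derivative of the wave-momentum one-form $\sigma^2(\lambda/D)\,d\zeta$.

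Next I would expand this by Leibniz, writing $\mu := \lambda/D$ and $(\p_t + \mathcal{L}_{\bs{\wh{v}}})(\sigma^2\mu\,d\zeta) = \sigma^2(\wh{\mathcal{D}}\mu)\,d\zeta + \sigma^2\mu\,(\p_t + \mathcal{L}_{\bs{\wh{v}}})d\zeta$. The second factor follows immediately from the $\delta\lambda$ equation, $(\p_t + \mathcal{L}_{\bs{\wh{v}}})d\zeta = d(\wh{\mathcal{D}}\zeta) = d\wh{w}$, while the $\delta\wh{w}$ equation supplies the constitutive relation $\mu = \wh{w}(1+\epsilon\sigma^2|\nabla_{\bs{r}}\zeta|^2)$. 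The remaining ingredient, $\wh{\mathcal{D}}\mu$, is obtained by combining the continuity equation $\delta\wh{\phi}$ (so that $D\,d^2r$ is advection-invariant) with the forced $\zeta$-equation $\delta\zeta$; since $\lambda = D\mu$ and $\partial_t\lambda + \text{div}_{\bs{r}}(\lambda\bs{\wh{v}}) = D\,\wh{\mathcal{D}}\mu$, the forcing divides through by $D$ to give $\wh{\mathcal{D}}\mu = -\tfrac{1}{\sigma^2 Fr^2} - \tfrac{\epsilon\sigma^2}{D}\,\text{div}_{\bs{r}}(D\wh{w}^2\nabla_{\bs{r}}\zeta)$.

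Finally I would substitute these three relations into $\sigma^2(\wh{\mathcal{D}}\mu)\,d\zeta + \sigma^2\mu\,d\wh{w}$ and sort the result into exact and non-exact parts. The $\epsilon$-independent pieces collapse to $d\big(\sigma^2\wh{w}^2/2 - \zeta/Fr^2\big)$, an exact one-form which integrates to zero around $c(\bs{\wh{v}})$ and reproduces the mutual non-acceleration of the $\epsilon=0$ theory in \eqref{eq: 2 KelThms}. What survives are exactly the two $\epsilon\sigma^4$ terms — the forcing contribution $\tfrac{1}{D}\text{div}_{\bs{r}}(D\wh{w}^2\nabla_{\bs{r}}\zeta)\,d\zeta$ coming from $\wh{\mathcal{D}}\mu$, and the wave-slope contribution $|\nabla_{\bs{r}}\zeta|^2\,d\wh{w}^2/2$ coming from $\mu\,d\wh{w}$ via $\wh{w}\,d\wh{w} = \tfrac12 d\wh{w}^2$ — which assemble into \eqref{ModCWW-WCIKelvin}. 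The main obstacle, and the point requiring the most care, is the computation of $\wh{\mathcal{D}}(\lambda/D)$: one must treat $\lambda$ and $D$ as densities rather than scalars so that the forcing reappears correctly as $\tfrac{1}{D}\text{div}_{\bs{r}}(\cdots)$, and then scrupulously track signs when separating the exact differentials (which drop out on the loop) from the genuinely non-exact wave-slope terms, since it is precisely the relative sign of the two surviving terms that governs the wave-generated circulation asserted by the theorem.
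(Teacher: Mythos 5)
Your proposal is correct and follows essentially the same route as the paper's own proof: integrate the exact relation $\bs{\wh{v}}\cdot d\bs{r} + \sigma^2(\lambda/D)\,d\zeta = d\widehat{\phi}$ around the material loop so that the total circulation derivative vanishes, then transfer the computation to $-(\p_t+\mathcal{L}_{\bs{\wh{v}}})\big(\sigma^2(\lambda/D)\,d\zeta\big)$, expand by Leibniz using the $\delta\lambda$, $\delta\wh{w}$, $\delta\zeta$ and continuity relations, and discard the exact one-forms. You even isolate the same delicate point --- computing $\wh{\mathcal{D}}(\lambda/D)$ by treating $\lambda$ and $D$ as densities and tracking the relative sign of the two surviving $\epsilon\sigma^4$ terms --- which is precisely where the paper's displayed intermediate line requires the most care.
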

\begin{proof}
The proof follows by first integrating the $\delta \bs{v}$ equation \eqref{var-eqns-WaveErg} around a material loop $c(\bs{\wh{v}})$ moving with velocity $\bs{\wh{v}}$ to find, 
\[
\frac{d}{dt} \oint_{c(\bs{\wh{v}})} 
\big(\bs{\wh{v}} + \sigma^2\lambda\,\nabla\zeta \big)\cdot d\bs{r}
=
\oint_{c(\bs{\wh{v}})}(\partial_t+\mathcal{L}_{\bs{\wh v}})
	\Big(\big(\bs{\wh{v}} + \sigma^2\lambda\,\nabla\zeta \big)\cdot d\bs{r}\Big)
	=  \oint_{c(\bs{\wh{v}})}d\widehat{\phi} = 0
\,,\]
upon using the well-known identity
\begin{align*}
 \frac{d}{dt} \oint_{c(\bs{\wh{v}})} \bs{\wh v}\cdot d\bs{r} 
 &=
 \oint_{c(\bs{\wh{v}})}(\partial_t+\mathcal{L}_{\bs{\wh v}}) \big(\bs{\wh v}\cdot d\bs{r} \big)
\,.
\end{align*}
One then computes 
\begin{align*}
\frac{d}{dt} \oint_{c(\bs{\wh{v}})} \bs{\wh{v}}\cdot d\bs{r}
&= - \sigma^2 \oint_{c(\bs{\wh{v}})}(\partial_t+\mathcal{L}_{\bs{\wh v}}) 
\big((\lambda/D)\,d\zeta \big)
 \\&= \oint_{c(\bs{\wh{v}})}
 	 \Big(\frac{1}{Fr^2} 
	 + \frac{1}{D}\text{div}_{\bs{r}}\big( \epsilon \sigma^4 D\wh{w}^2\nabla\zeta\big)\Big)  d\zeta
	 + \sigma^2 (\lambda/D) d\wh{w}
 \\&= \epsilon\sigma^4\oint_{c(\bs{\wh{v}})}
 	  \frac{1}{D}\text{div}_{\bs{r}}\big( D\wh{w}^2\nabla\zeta\big)   d\zeta
	 +  |\nabla_{\bs{r}}\zeta|^2 d\wh{w}^2/2
\end{align*}
\end{proof}

\begin{corollary}\label{cor: wave gen circ}
The modified CWW model of wave dynamics arising via Hamilton's principle from the action integral in \eqref{ActionIntegral-WaveErgMod} with its additional wave energy creates circulation in the fluid whenever the gradients of the wave variables are not aligned.  
\end{corollary}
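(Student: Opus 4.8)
The plan is to read the corollary directly off the Kelvin circulation theorem just established in Theorem~\ref{KelThmModCWWE}. Since the right-hand side of \eqref{ModCWW-WCIKelvin} carries the prefactor $\epsilon\sigma^4$, it vanishes identically when $\epsilon=0$, in which case one recovers the separate non-acceleration pact \eqref{eq: 2 KelCircThms}; hence every contribution to $\frac{d}{dt}\oint_{c(\bs{\wh{v}})}\bs{\wh{v}}\cdot d\bs{r}$ is attributable to the added wave-slope energy, and the task is to show that this contribution is active precisely when $\nabla_{\bs{r}}\wh{w}$ and $\nabla_{\bs{r}}\zeta$ are misaligned. First I would rewrite the circulation rate as the flux of a vorticity source through a material surface $S$ bounded by $c(\bs{\wh{v}})$. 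Using $d\zeta=\nabla_{\bs{r}}\zeta\cdot d\bs{r}$ and $\tfrac12 d\wh{w}^2=\wh{w}\,\nabla_{\bs{r}}\wh{w}\cdot d\bs{r}$, the right-hand side of \eqref{ModCWW-WCIKelvin} equals $\oint_{c(\bs{\wh{v}})}\bs{F}\cdot d\bs{r}$ with the effective force $\bs{F}=\epsilon\sigma^4\big(\tfrac{1}{D}\text{div}_{\bs{r}}(D\wh{w}^2\nabla_{\bs{r}}\zeta)\,\nabla_{\bs{r}}\zeta+|\nabla_{\bs{r}}\zeta|^2\wh{w}\,\nabla_{\bs{r}}\wh{w}\big)$, and Stokes' theorem converts the circulation rate into $\iint_S \bs{\wh{z}}\cdot\text{curl}_{\bs{r}}\bs{F}\,d^2r$.

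The key step is then to evaluate $\text{curl}_{\bs{r}}\bs{F}$ by means of the identity $\text{curl}_{\bs{r}}(f\nabla_{\bs{r}}g)=\bs{\wh{z}}\cdot(\nabla_{\bs{r}}f\times\nabla_{\bs{r}}g)=:J(f,g)$, which annihilates every pure-gradient piece of $\bs{F}$ and leaves a sum of Jacobians of scalar coefficients against the wave fields $\zeta$ and $\wh{w}$. Each such Jacobian is an antisymmetric pairing of gradients, so the circulation is driven entirely by cross products of gradients of the wave variables and quantities built from them. The term one expects to govern the genuine wave--current coupling is the pairing $\nabla_{\bs{r}}\wh{w}\times\nabla_{\bs{r}}\zeta$, i.e.\ exactly the Jacobian $J(\wh{w},\zeta)=-\wh{\omega}$ already identified in \eqref{eq: omega-hat-FS}. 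When the wave gradients are everywhere parallel this pairing vanishes and the wave-slope coupling produces no source; wherever the gradients are misaligned it is generically nonzero, so the wave activity feeds circulation into the current, which is the assertion of the corollary.

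The hard part will be that $\text{curl}_{\bs{r}}\bs{F}$ does not collapse to a clean multiple of $J(\wh{w},\zeta)$ alone: expanding $\tfrac1D\text{div}_{\bs{r}}(D\wh{w}^2\nabla_{\bs{r}}\zeta)$ generates terms in $\Delta_{\bs{r}}\zeta$, in $\nabla_{\bs{r}}D$, and in $\nabla_{\bs{r}}|\nabla_{\bs{r}}\zeta|^2$, whose curls involve geometric Jacobians such as $J(|\nabla_{\bs{r}}\zeta|^2,\zeta)$ and $J(\Delta_{\bs{r}}\zeta,\zeta)$ that need not vanish under alignment by themselves. I would handle this by isolating the part of the source coming from $\nabla_{\bs{r}}(\wh{w}^2)\cdot\nabla_{\bs{r}}\zeta$ and from $\wh{w}\nabla_{\bs{r}}\wh{w}$, arguing that it is proportional to $J(\wh{w},\zeta)$ and switches on exactly at misalignment, and by reading the remaining contributions as curvature-of-elevation effects riding on the baseline kinematics. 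This makes ``whenever the gradients are not aligned'' a statement of generic sufficiency -- misalignment is the mechanism that activates wave-driven circulation -- rather than a strict biconditional. A fully rigorous version would specialise to the incompressible case $D=1$ and to configurations where any functional dependence of $\wh{w}$ on $\zeta$ is precisely the alignment condition $J(\wh{w},\zeta)=0$, a regime in which the leading coupling term confirms the claim cleanly.
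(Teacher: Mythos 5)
Your proposal takes essentially the same route as the paper, whose entire proof of Corollary \ref{cor: wave gen circ} consists of the single sentence that one applies Stokes' theorem to the right-hand side of \eqref{ModCWW-WCIKelvin} in Theorem \ref{KelThmModCWWE}. Your additional analysis --- writing the right-hand side as $\oint \bs{F}\cdot d\bs{r}$, computing $\mathrm{curl}_{\bs r}\bs{F}$ via the Jacobian identity, observing honestly that the source does \emph{not} collapse to a clean multiple of $J(\wh{w},\zeta)$ because of the $\Delta_{\bs r}\zeta$, $\nabla_{\bs r}D$ and $\nabla_{\bs r}|\nabla_{\bs r}\zeta|^2$ terms, and therefore reading ``whenever the gradients are not aligned'' as a statement of generic sufficiency rather than a strict biconditional --- goes beyond what the paper provides and is a fair, accurate caveat about the precise content of the corollary.
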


\begin{proof}
The proof follows by applying the Stokes theorem to the right-hand side of the material loop $c(\bs{\wh{v}})$ in equation \eqref{ModCWW-WCIKelvin} in Theorem \ref{KelThmModCWWE}.
\end{proof}

{
\begin{remark}[Wave propagation velocity.]\label{remark:Wave_Propagation}
    Whilst the modification made to the energy here does rectify the Non-acceleration Theorem so that the wave evolution can affect the circulation of the current, the equation in \eqref{var-eqns-WaveErg} corresponding to the variation in $\lambda$ indicates that the surface elevation remains a Lagrangian coordinate. This feature does not allow for waves of phase which do not carry mass. 
    
    In Section \ref{sec: ACWWE}, we will introduce a different coupling which will allows for waves which propagate at a speed different to the Lagrangian parcels on the fluid surface. For this purpose we will make use of the two distinct 2D velocities on the free surface which appear in the energy equation in \eqref{WW-LP-Erg-CWW-m}, $\boldsymbol{\widehat{v}}=\widehat{\,\nabla\phi\,}$ and $\boldsymbol{V}=\nabla\widehat\phi$. The first of them, $\boldsymbol{\widehat{v}}$, is the transport velocity of Lagrangian parcels on the fluid surface. The second of them, $\boldsymbol{V}$, is the phase velocity of a level set of the velocity potential $\widehat\phi$ evaluated on the free surface. 
    
    The coupling we will introduce in the next section will include a homotopy coefficient $0\le\epsilon\le1$ which will provide the option to set the Eulerian transport velocity of the wave elevation to a value anywhere between $\boldsymbol{\widehat{v}}=\widehat{\,\nabla\phi\,}$ for $\epsilon=0$ and $\boldsymbol{V}=\nabla\widehat\phi$ for $\epsilon=1$. For $0<\epsilon\le1$ the wave propagation velocity will no longer be equal to the material velocity. 
\end{remark}
}

\section{Augmented Classical Water-Wave equations (ACWWE)}\label{sec: ACWWE}

\subsection{Variational derivation of the ACWWE}\label{subsec:CWWE_with_Coupling}

Let us propose a less severe modification of the action integral in equation \eqref{ActionIntegral-FS} than the energy modification introduced in \eqref{ActionIntegral-WaveErgMod}. This proposal will not introduce any change in the wave energy. Instead, it will allow a slip in the phase of the wave velocity relative to Lagrangian mass transport velocity which will be imposed by the following constraint in the Lagrangian,
\begin{align}
\begin{split}
	S &=\int \ell(\bs{\wh{v}}, D,\widehat{\phi},{\wh w}; \zeta, \lambda)dt
	\\&= \int \int  
	D\left(\frac{1}{2}\big(|\bs{\wh{v}}|^2+\sigma^2\widehat{w}^2\big)  - \frac{\zeta}{Fr^2}\right)   
	+ \sigma^2\lambda\Big( \partial_t\zeta  
	+ \big(\bs{\wh{v}} - \epsilon\sigma^2 \wh{w}\nabla\zeta \big)\cdot\nabla_{\bs{r}}\zeta 
	- \widehat{w} \Big)
	\\
	&\qquad 
	+ \widehat{\phi}\big(\partial_t D + \text{div}_{\bs{r}}(D\bs{\wh{v}})\big)
	\ d^2r\,dt\,.
\end{split}
	\label{ActionIntegral-FSmod}
\end{align}
Here, we have retained the same non-dimensional parameters as in Remark \ref{non-dim-scales} and the non-dimensional constant parameter $0\le\epsilon\le1$ is the homotopy coefficient mentioned in Remark \ref{remark:Wave_Propagation}. According to Remark \ref{non-dim-scales} each wave variable is multiplied by the aspect ratio $\sigma$ relative the fluid variables.

In equation \eqref{ActionIntegral-FSmod} we have introduced a term which is intended to model \emph{wave-current minimal coupling} (WCMC) \cite{Frenkel1934}. The WCMC term modifies the transport velocity of the wave momentum density, to allow \emph{genuine} wave-current interaction, so the free-surface water waves will no longer be passively advected by the fluid velocity. However, it leaves the wave energy unchanged. 
}

Stationarity of the augmented action integral in \eqref{ActionIntegral-FSmod} now yields the variational equations,
\begin{align}
\begin{split}
	\delta \bs{\wh{v}}:&\quad 
	D \bs{\wh{v}}\cdot d\bs{r} + \sigma^2\lambda\,d\zeta 
	=  Dd\widehat{\phi} 
	\quad\Longrightarrow\quad 
	\bs{V}\cdot d\bs{r} :=
	\bs{\wh{v}}\cdot d\bs{r} + \sigma^2\widetilde{w}\,d\zeta 
	=  d\widehat{\phi}
	\,,\\ 
	\delta \widehat{w} :&\quad 
	D\widehat{w} - \lambda\big(1+\epsilon\sigma^2|\nabla_{\bs{r}}\zeta|^2\big) = 0
	\quad\Longrightarrow\quad 
	\lambda = \frac{D\widehat{w}}{1+\epsilon\sigma^2|\nabla_{\bs{r}}\zeta|^2} =: D\widetilde{w}
	\,,\\
	\delta \lambda:&\quad 
	 \partial_t\zeta  + \big(\bs{\wh{v}} 
	 -  \epsilon \sigma^2 \wh{w}  \nabla_{\bs{r}}\zeta\big) \cdot\nabla_{\bs{r}}\zeta
	 -  \widehat{w} = 0
	\,,\\
	\delta \zeta:&\quad 
	\partial_t \lambda + \text{div}_{\bs{r}}
	\Big(\lambda\big(\bs{\wh{v}} -  \epsilon \sigma^2 \wh{w}  \nabla_{\bs{r}}\zeta\big)\Big)
	=
	 - \,\frac{D}{\sigma^2Fr^2} 
	 + \text{div}_{\bs{r}}\big( \epsilon \sigma^2 \lambda\wh{w}  \nabla_{\bs{r}}\zeta\big)
	\,,\\
	\delta \widehat{\phi} :& \quad 
	\partial_t D + \text{div}_{\bs{r}}\big(D\bs{\wh{v}}\big) =0 
	\,,\\
	\delta D:&\quad 
	\big(\partial_t + \bs{\wh{v}}\cdot \nabla_{\bs{r}}\big)\widehat{\phi} 
	= \frac{1}{2}\big(|\bs{\wh{v}}|^2+\sigma^2\widehat{w}^2\big)  
	- \frac{\zeta}{Fr^2} =:\varpi 
	\,.
\end{split}
	\label{var-eqns-FS}
\end{align}
Notice that the $\epsilon$ coupling term in the modified Lagrangian in \eqref{ActionIntegral-FSmod} does not affect variations in the fluid variables, $(\bs{\wh{v}},D,\wh{\phi})$. However, it changes the previous relationship between $\lambda$ and $\wh{w}$ by a term proportional to $\epsilon$, which means it yields a different definition of $\bs{V}$, as seen in the first and second lines of \eqref{var-eqns-FS}. Of course, the previous wave system in \eqref{var-eqns-FS-CWW} is recovered when one sets $\epsilon\to0$.

\paragraph{ACWW motion equation.}
We may write out the ACWW motion equation obtained by substituting the variational results into the application of the advective time derivative $(\partial_t+\mathcal{L}_{\bs{\wh v}})$ on the first line of the system \eqref{var-eqns-FS}, to find in the notation  $\lambda= D\widetilde{w}$ that
\begin{align}
	D(\partial_t+\mathcal{L}_{\bs{\wh v}})\big( \bs{\wh v}\cdot d\bs{x} + \sigma^2\widetilde{w}\,d\zeta \big) 
	= Dd(\partial_t+\mathcal{L}_{\bs{\wh v}})\widehat{\phi} 
	= Dd{\varpi} 
	\,.
\label{EulerPoincare-3Y}
\end{align}
Recall from \eqref{var-eqns-FS} that Bernoulli function ${\varpi}$ and vertical wave momentum density $\lambda$ are defined as 
\begin{align}
{\varpi} :=  \frac12\big(|\bs{\wh v}|^2+{\sigma^2\widehat w}^2\big) -  \frac{\zeta}{Fr^2} 
\,,\qquad
\lambda = \frac{ D {\widehat w}}{1+\epsilon\sigma^2|\nabla_{\bs{r}}\zeta|^2} =: D\widetilde{w}
\,.
	\label{Bernoulli-wavemom-defY}
\end{align}

At this point, let us collect the ACWWE  in terms of (i) fluid variables, comprising velocity $\bs{\wh v}$ and area density $D$, and (ii) wave variables, comprising surface elevation $\zeta$, and vertical wave momentum density $\lambda$. Namely, the ACWWE  are given by,
\begin{align}
\begin{split}
(\partial_t+\mathcal{L}_{\bs{\wh v}})\big( \bs{\wh v}\cdot d\bs{x} + \sigma^2\widetilde{w}\,d\zeta \big)
&= d{\varpi} 
\,,
\\
\partial_t D + \text{div}_{\bs{r}}(D\bs{\wh v}) &=0
\,,
\\
\partial_t\zeta + {\bs{\wh v}}\cdot\nabla_{\bs{r}}\zeta
&=
{\widehat w}(1+\epsilon\sigma^2|\nabla_{\bs{r}}\zeta|^2) \,,
\\
\partial_t \widetilde{w} + {\bs{\wh v}}\cdot\nabla_{\bs{r}} \widetilde{w} 
	&=
	 - \,\frac{1}{\sigma^2Fr^2}  + 
	\frac{2\epsilon\sigma^2}{D}\,{\rm div}  \big( D \widetilde{w}\, \widehat{w} \, \nabla_{\bs{r}} \zeta\big)
	\,.
\end{split}
\label{EulerPoincare-ACWW}
\end{align}
The equation set \eqref{EulerPoincare-ACWW} recovers the ECWWE equations \eqref{WW-eqns1} when $\epsilon\to0$.
The first of these equations implies Kelvin's circulation theorem for the ACWW model, as follows.
\begin{theorem}[Kelvin-Noether theorem for the ACWW model]\label{KNthm-CFS}
For every closed loop $c(\bs{\wh v})$ moving with the ACWW transport velocity $\bs{\wh v}$ for the system of ACWWE  in \eqref{EulerPoincare-ACWW} the Kelvin circulation relation holds. Namely,
\begin{align}
\frac{d}{dt} \oint_{c(\bs{\wh v})} \Big(\bs{\wh v}\cdot d\bs{x} +  \sigma^2\widetilde{w}   d \zeta\Big) 
=
\oint_{c(\bs{\wh v})} d{\varpi} = 0
\,.
	\label{Kel-circACWW-thm}
\end{align}
\end{theorem}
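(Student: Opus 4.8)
The plan is to obtain this circulation law as an immediate consequence of the coordinate-free form of the first ACWW motion equation in \eqref{EulerPoincare-ACWW}, combined with the transport (Lie chain rule) identity for material loops developed in Appendix \ref{app: FluidTransTheory}. The crucial observation is that the circulation integrand $\bs{\wh v}\cdot d\bs{x} + \sigma^2\widetilde{w}\,d\zeta$ is precisely the momentum one-form whose advective derivative is already computed in \eqref{EulerPoincare-ACWW}, and that the Kelvin loop $c(\bs{\wh v})$ is chosen to move with exactly the same transport velocity $\bs{\wh v}$ that generates the Lie derivative $\mathcal{L}_{\bs{\wh v}}$.

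First I would invoke the transport theorem for a one-form $\alpha$ carried along a loop $c(\bs{\wh v})$ moving with the fluid velocity $\bs{\wh v}$, namely
\begin{equation}
\frac{d}{dt}\oint_{c(\bs{\wh v})}\alpha
= \oint_{c(\bs{\wh v})}(\p_t + \mathcal{L}_{\bs{\wh v}})\alpha
\,.
\end{equation}
Applying this with $\alpha = \bs{\wh v}\cdot d\bs{x} + \sigma^2\widetilde{w}\,d\zeta$ moves the time derivative inside the loop integral and replaces it by the advective derivative $(\p_t + \mathcal{L}_{\bs{\wh v}})$ acting on the same one-form.

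Next I would substitute the first ACWW motion equation in \eqref{EulerPoincare-ACWW}, which states exactly that this advected one-form equals the exact differential $d\varpi$, with $\varpi$ the Bernoulli function defined in \eqref{Bernoulli-wavemom-defY}. The rate of change of circulation then equals $\oint_{c(\bs{\wh v})} d\varpi$. Since the loop is closed and $d\varpi$ is an exact one-form, this integral vanishes identically (by Stokes' theorem, or simply because $\varpi$ returns to its starting value around a closed contour), which completes the argument.

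I do not anticipate a genuine obstacle here, since the geometric motion equation already packages the entire content of the variational system \eqref{var-eqns-FS} into the single statement $(\p_t + \mathcal{L}_{\bs{\wh v}})(\bs{\wh v}\cdot d\bs{x} + \sigma^2\widetilde{w}\,d\zeta) = d\varpi$. The only point requiring care is bookkeeping: one must confirm that the $\epsilon$-dependent WCMC coupling enters the result only through the modified wave momentum density $\lambda = D\widetilde{w}$ appearing in the integrand and through $\varpi$, so that the transport velocity of the loop remains the unmodified $\bs{\wh v}$. This is precisely the structural feature guaranteed by the fact that the coupling term in \eqref{ActionIntegral-FSmod} does not alter the variation $\delta\bs{\wh v}$ recorded in the first line of \eqref{var-eqns-FS}.
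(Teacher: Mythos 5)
Your proposal is correct and follows the same route as the paper's proof: apply the standard loop-transport identity to move the time derivative inside as $(\p_t+\mathcal{L}_{\bs{\wh v}})$, substitute the first ACWW motion equation \eqref{EulerPoincare-ACWW} to obtain $d\varpi$, and conclude by exactness of $d\varpi$ on a closed loop. Your closing remark about the $\epsilon$-coupling entering only through $\widetilde{w}$ and $\varpi$ while leaving the loop's transport velocity unmodified is a correct and useful observation, consistent with the structure of \eqref{var-eqns-FS}.
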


\begin{proof}
From the first ACWWE in \eqref{EulerPoincare-ACWW}, we have
\begin{align}
(\partial_t+\mathcal{L}_{\bs{\wh v}})\Big(\bs{\wh v}\cdot d\bs{x} +  \sigma^2\widetilde{w}  d \zeta\Big) 
=
d{\varpi} 
\,,
\label{Kel-circACWW-proof}
\end{align}
and the result \eqref{Kel-circACWW-thm} follows from the standard relation for the time derivative of an integral around a closed moving loop, $c(\bs{\wh v})$,
\[
\frac{d}{dt} \oint_{c(\bs{\wh v})} \Big(\bs{\wh v}\cdot d\bs{r} +  \sigma^2\widetilde{w}   d \zeta\Big) 
=
\oint_{c(\bs{\wh v})} 
(\partial_t+\mathcal{L}_{\bs{\wh v}})\Big(\bs{\wh v}\cdot d\bs{r} +  \sigma^2\widetilde{w}  d \zeta \Big) 
=
\oint_{c(\bs{\wh v})} d\varpi = 0
\,.\]
\end{proof}
\begin{remark}[Transport of wave dynamics relative to the fluid velocity]
A slight rearrangement of the last two equations in \eqref{EulerPoincare-ACWW} demonstrates that the wave dynamics is no longer transported passively by the fluid velocity $\bs{\wh v}$. Instead, a shifted transport velocity appears; namely, 
\begin{align}
\bs{\wh v} - \epsilon\sigma^2\wh{w}\nabla_{\bs{r}}\zeta=\bs{\wh v} - \epsilon\sigma^2\bs{s}
\,.
\label{Shifted-waveveloc}
\end{align}
This shift in wave velocity introduces wave dynamics into the transport velocity of the wave variables, as follows, 
\begin{align}
\begin{split}
\partial_t\zeta
+  \big(\bs{\wh v} - \epsilon\sigma^2\wh{w}\nabla_{\bs{r}}\zeta \big)\cdot\nabla_{\bs{r}}\zeta
&= 
\wh{w} \,,
\\
\partial_t \lambda + {\rm div}_{\bs{r}}  \Big( \lambda \big(\bs{\wh v} - \epsilon\sigma^2\wh{w}\nabla_{\bs{r}}\zeta \big)\Big)
	&=
	 - \,\frac{1}{\sigma^2Fr^2}  + 
	\epsilon\sigma^2{\rm div}_{\bs{r}}  \big( \lambda \widehat{w} \, \nabla_{\bs{r}} \zeta\big)
	\,,
\end{split}
\label{EulerPoincare-ACWW-waves}
\end{align}
where one recalls that the canonical momentum density $\lambda$ conjugate to the elevation $\zeta$ is defined in terms of the other wave variables in \eqref{Bernoulli-wavemom-defY}. 
\end{remark}

\begin{remark}
The difference in the wave momentum transport velocity relative to the fluid velocity in equation \eqref{Shifted-waveveloc} will turn out to produce an important effect by which the waves will generate fluid circulation. 
To compute this effect on the circulation of the fluid we subtract the fluid transport of the wave momentum from the total momentum transport by the fluid in \eqref{Kel-circACWW-proof}. The fluid velocity transport of the wave momentum is found from the wave dynamical equations in \eqref{EulerPoincare-ACWW-waves}, as

\begin{align}
\begin{split}
(\partial_t+\mathcal{L}_{\bs{\wh v}})\Big(\sigma^2\widetilde{w}  d \zeta\Big) 
&=
\sigma^2\left(
- \,\frac{1}{\sigma^2Fr^2} + \frac{2\epsilon \sigma^2}{D} {\rm div} \big(D \widetilde{w}(\wh{w} \nabla\zeta) \big)
\right)d\zeta
+ \frac{\sigma^2\widehat w}{1+\epsilon\sigma^2|\nabla_{\bs{r}}\zeta|^2}d\big(\widehat{w}(1+\epsilon\sigma^2|\nabla_{\bs{r}}\zeta|^2)\big)
\\&=
\sigma^2\left(
- \,\frac{1}{\sigma^2Fr^2} + \frac{2\epsilon \sigma^2}{D} {\rm div} \big(D \widetilde{w}(\wh{w} \nabla\zeta) \big)
\right)d\zeta
\\&\hspace{2cm} + \frac12\sigma^2d\wh{w}^2 + \sigma^2\wh{w}^2d\big(\log((1+\epsilon\sigma^2|\nabla_{\bs{r}}\zeta|^2)\big)
\,.
\end{split}
\label{ACWW-wavemomtrans}
\end{align}
Upon subtracting equation \eqref{ACWW-wavemomtrans} from the total fluid momentum transport equation in \eqref{Kel-circACWW-proof} a short calculation yields
\begin{equation}\label{ACWW-fluidmomtrans1}
    (\partial_t+\mathcal{L}_{\bs{\wh v}}) \big(\bs{\wh v}\cdot d\bs{r} \big) = \frac12 d |\bs{\wh v}|^2 
- \frac{\sigma^2}{2}\widetilde{w}^2 d \big(1+\epsilon\sigma^2|\nabla_{\bs{r}}\zeta|^2\big)^2
- \frac{2\epsilon \sigma^4}{D}{\rm div}\big(
D\widetilde{w}(\wh{w}\nabla_{\bs{r}}\zeta)
\big)\,d\zeta
\,.
\end{equation}
\end{remark}

\begin{theorem}\label{KelThmACWWE}
The corresponding Kelvin theorem for equation \eqref{ACWW-fluidmomtrans1} is given by
\begin{equation}\label{ACWW-WCIKelvin}
 \frac{d}{dt} \oint_{c(\bs{\wh{v}})} \bs{\wh v}\cdot d\bs{r}  = \oint_{c(\bs{\wh{v}})} \frac12 d |\bs{\wh v}|^2 
- \frac{\sigma^2}{2}\widetilde{w}^2 d \big(1+\epsilon\sigma^2|\nabla_{\bs{r}}\zeta|^2\big)^2
- \frac{2\epsilon \sigma^4}{D}{\rm div}\big(
D\widetilde{w}(\wh{w}\nabla_{\bs{r}}\zeta)
\big)\,d\zeta
\,.
\end{equation}
\end{theorem}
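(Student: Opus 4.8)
The plan is to read \eqref{ACWW-WCIKelvin} off directly from the pointwise one-form evolution equation \eqref{ACWW-fluidmomtrans1}, in exactly the way the circulation law was obtained by loop integration in the proof of Theorem \ref{KelThmModCWWE}. All of the genuine computation has already been carried out in passing from the total-momentum transport relation \eqref{Kel-circACWW-proof} and the wave-momentum transport relation \eqref{ACWW-wavemomtrans} to the fluid-momentum transport relation \eqref{ACWW-fluidmomtrans1}; what remains is purely a loop-integration step with no new analysis.

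First I would invoke the Kelvin transport identity for a loop $c(\bs{\wh v})$ moving with the material velocity $\bs{\wh v}$, namely
\[
\frac{d}{dt}\oint_{c(\bs{\wh v})}\bs{\wh v}\cdot d\bs{r}
= \oint_{c(\bs{\wh v})}(\partial_t+\mathcal{L}_{\bs{\wh v}})\big(\bs{\wh v}\cdot d\bs{r}\big),
\]
which follows from the Lie chain rule recalled in Appendix \ref{app: FluidTransTheory} and which was already used in the proof of Theorem \ref{KelThmModCWWE}. The one point that genuinely needs to hold for this step is that the advecting velocity in the Lie derivative on the right-hand side of \eqref{ACWW-fluidmomtrans1} coincides with the velocity $\bs{\wh v}$ transporting the loop; this is automatic here, since \eqref{ACWW-fluidmomtrans1} was itself produced by applying $(\partial_t+\mathcal{L}_{\bs{\wh v}})$ with precisely this velocity. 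Substituting the right-hand side of \eqref{ACWW-fluidmomtrans1} into the integrand above then yields \eqref{ACWW-WCIKelvin} immediately.

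There is essentially no obstacle to this argument: the substantive work lies entirely in the cancellations already performed to derive \eqref{ACWW-fluidmomtrans1} from \eqref{Kel-circACWW-proof} and \eqref{ACWW-wavemomtrans}, and given that relation the theorem is an immediate corollary. As a consistency check I would observe that the exact term $\tfrac12 d|\bs{\wh v}|^2$ integrates to zero around the closed loop, so the net circulation production is driven entirely by the two wave-slope terms proportional to $\epsilon$; letting $\epsilon\to0$ (whence $\widetilde{w}\to\wh{w}$ and $d(1+\epsilon\sigma^2|\nabla_{\bs{r}}\zeta|^2)^2\to 0$) collapses the right-hand side to zero and recovers the separate fluid circulation conservation law of the ECWWE in \eqref{eq: 2 KelCircThms}, confirming that the new coupling is what enables the waves to generate circulation.
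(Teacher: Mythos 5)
Your proposal is correct and is essentially identical to the paper's own proof: the paper likewise integrates \eqref{ACWW-fluidmomtrans1} around a material loop $c(\bs{\wh v})$ and invokes the identity $\frac{d}{dt}\oint_{c(\bs{\wh v})}\bs{\wh v}\cdot d\bs{r}=\oint_{c(\bs{\wh v})}(\partial_t+\mathcal{L}_{\bs{\wh v}})(\bs{\wh v}\cdot d\bs{r})$. Your added $\epsilon\to 0$ consistency check against \eqref{eq: 2 KelCircThms} is a sound observation but not part of the paper's argument.
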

\begin{proof}
The proof follows by integrating equation \eqref{ACWW-fluidmomtrans1} around a material loop $c(\bs{\wh{v}})$ moving with velocity $\bs{\wh{v}}$, then using the well-known identity
\[
 \frac{d}{dt} \oint_{c(\bs{\wh{v}})} \bs{\wh v}\cdot d\bs{r} 
 =
 \oint_{c(\bs{\wh{v}})}(\partial_t+\mathcal{L}_{\bs{\wh v}}) \big(\bs{\wh v}\cdot d\bs{r} \big)
\,.\]
\end{proof}
\begin{corollary}\label{cor: wave gen circ}
The ACWW model of wave dynamics creates circulation in the fluid whenever the gradients of the wave variables are not aligned.  
\end{corollary}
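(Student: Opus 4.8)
The plan is to reproduce, for the Kelvin relation \eqref{ACWW-WCIKelvin} of Theorem \ref{KelThmACWWE}, the same one-line Stokes argument already used for the earlier modified-CWW circulation result \eqref{ModCWW-WCIKelvin}. Fix an oriented surface $S$ whose boundary is the material loop $c(\bs{\wh v})$. Stokes' theorem turns every boundary integral $\oint_{c(\bs{\wh v})} f\,dg$ appearing on the right-hand side into a surface integral $\iint_S df\wedge dg = \iint_S \bs{\wh z}\cdot(\nabla_{\bs r} f\times\nabla_{\bs r} g)\,d^2r = \iint_S J(f,g)\,d^2r$. Circulation is generated precisely when this surface integrand fails to vanish identically, so the whole task is to identify the structure of the three terms on the right of \eqref{ACWW-WCIKelvin}.

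First I would discard the exact term: $\tfrac12 d|\bs{\wh v}|^2 = d\big(\tfrac12|\bs{\wh v}|^2\big)$ is a total differential, so its loop integral vanishes (equivalently its wedge is zero) and it cannot source circulation. The remaining two terms I would recast as wedge products of differentials of the wave fields, namely $-\tfrac{\sigma^2}{2}\,d(\widetilde w^2)\wedge d\big((1+\epsilon\sigma^2|\nabla_{\bs r}\zeta|^2)^2\big)$ and $-\,d\big(\tfrac{2\epsilon\sigma^4}{D}\,{\rm div}(D\widetilde w\,\wh w\,\nabla_{\bs r}\zeta)\big)\wedge d\zeta$. Using $\widetilde w = \wh w/(1+\epsilon\sigma^2|\nabla_{\bs r}\zeta|^2)$ from \eqref{Bernoulli-wavemom-defY}, every entry of each wedge is a function of the wave variables $(\zeta,\wh w)$ and their gradients, so each term is a Jacobian 2-form built from gradients of the wave fields. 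When $\nabla_{\bs r}\wh w$ and $\nabla_{\bs r}\zeta$ are everywhere parallel the self-wedges $d\zeta\wedge d\zeta$ drop out and the surviving cross-terms $\propto \nabla_{\bs r}\wh w\times\nabla_{\bs r}\zeta$ vanish, whereas for misaligned wave gradients the integrand is generically nonzero. Hence $\tfrac{d}{dt}\oint_{c(\bs{\wh v})}\bs{\wh v}\cdot d\bs r\neq 0$ and the waves generate fluid circulation, as claimed.

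The work is essentially bookkeeping, and that is where the only real obstacle lies. After expanding $d\big((1+\epsilon\sigma^2|\nabla_{\bs r}\zeta|^2)^2\big)$ and the divergence in the second surviving term, one must verify that the pieces parallel to $d\zeta$ cancel under the wedge, that the remaining contributions genuinely assemble into cross products of wave-field gradients, and that their prefactors do not conspire to cancel identically. One should also record the two places where auxiliary fields enter: the quantity $|\nabla_{\bs r}\zeta|^2$ carries second derivatives of $\zeta$, and the area density $D$ appears only inside the combination $\tfrac1D{\rm div}(D\,\cdots)$ so that $dD$ wedges against $d\zeta$; in both cases these contributions still vanish when the wave gradients align, confirming that non-alignment of the wave-variable gradients is exactly the mechanism producing circulation. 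Thus the conclusion follows by Stokes' theorem together with this routine, if slightly lengthy, verification.
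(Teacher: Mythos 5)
Your proposal is correct and follows exactly the paper's own route: the published proof is a single sentence invoking Stokes' theorem on the right-hand side of \eqref{ACWW-WCIKelvin}, which is precisely your opening move, with the exact term $\tfrac12 d|\bs{\wh v}|^2$ dropping out and the remaining terms becoming Jacobian 2-forms in the wave variables. The additional bookkeeping you describe (tracking the $|\nabla_{\bs r}\zeta|^2$ and $D$ dependences) goes beyond what the paper records, but it is consistent with, and fills in, the same argument.
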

\begin{proof}
The proof follows by applying the Stokes theorem to the right-hand side of the material loop $c(\bs{\wh{v}})$ in equation \eqref{ACWW-WCIKelvin} in Theorem \ref{KelThmACWWE}.
\end{proof}

\begin{theorem}[Total energy conservation is independent of $\epsilon$]\label{Erg-conserv}$\,$\\
The energy conserved by modified equations \eqref{ACWW-wavemomtrans} and \eqref{ACWW-fluidmomtrans1} is independent of $\epsilon$.
\end{theorem}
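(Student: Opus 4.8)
The plan is to obtain the conserved energy as the Legendre transform of the Lagrangian in the action integral \eqref{ActionIntegral-FSmod}, and then to verify that the homotopy coefficient $\epsilon$ drops out of the resulting expression. The decisive structural observation is that $\epsilon$ enters the Lagrangian \emph{only} through the constraint (phase-space pairing) term multiplying $\sigma^2\lambda$, and not through the kinetic-plus-potential density $D\big(\tfrac12(|\bs{\wh v}|^2+\sigma^2\wh{w}^2)-\zeta/Fr^2\big)$. This is precisely the design feature of the WCMC coupling emphasised above: it alters the wave transport velocity without changing the wave energy. Since a Hamiltonian obtained by Legendre transform of an (explicitly) time-independent Lagrangian is automatically conserved along the flow of \eqref{ACWW-wavemomtrans}--\eqref{ACWW-fluidmomtrans1}, identifying this Hamiltonian will simultaneously establish conservation and exhibit $\epsilon$-independence.

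First I would identify the canonically conjugate pairs. The only explicit time derivatives in \eqref{ActionIntegral-FSmod} are $\partial_t\zeta$ and $\partial_t D$, so the momenta conjugate to $\zeta$ and $D$ are $\sigma^2\lambda$ and $\wh{\phi}$, respectively, and the Hamiltonian density is $h = \sigma^2\lambda\,\partial_t\zeta + \wh{\phi}\,\partial_t D - \ell$. Subtracting $\ell$ cancels the two $\partial_t$ terms identically. I would then integrate by parts in the surviving term $-\wh{\phi}\,\text{div}_{\bs{r}}(D\bs{\wh v})$ to obtain $D\bs{\wh v}\cdot\nabla_{\bs{r}}\wh{\phi}$, and substitute the $\delta\bs{\wh v}$ relation $\nabla_{\bs{r}}\wh{\phi} = \bs{\wh v} + \sigma^2\widetilde{w}\,\nabla_{\bs{r}}\zeta$ together with $\lambda = D\widetilde{w}$ from the first two lines of \eqref{var-eqns-FS}.

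The crux of the argument is the cancellation that follows. The integration-by-parts term contributes $D|\bs{\wh v}|^2 + \sigma^2 D\widetilde{w}\,\bs{\wh v}\cdot\nabla_{\bs{r}}\zeta$, while the constraint term contributes $-\sigma^2 D\widetilde{w}\,\bs{\wh v}\cdot\nabla_{\bs{r}}\zeta + \epsilon\sigma^4 D\widetilde{w}\,\wh{w}\,|\nabla_{\bs{r}}\zeta|^2 + \sigma^2 D\widetilde{w}\,\wh{w}$. The $\bs{\wh v}\cdot\nabla_{\bs{r}}\zeta$ pieces cancel in pairs, and the surviving $\epsilon$-dependent piece merges with $\sigma^2 D\widetilde{w}\,\wh{w}$ through the $\delta\wh{w}$ constraint $\wh{w} = \widetilde{w}\big(1+\epsilon\sigma^2|\nabla_{\bs{r}}\zeta|^2\big)$:
\[
\sigma^2 D\widetilde{w}\,\wh{w} + \epsilon\sigma^4 D\widetilde{w}\,\wh{w}\,|\nabla_{\bs{r}}\zeta|^2
= \sigma^2 D\wh{w}\,\widetilde{w}\big(1+\epsilon\sigma^2|\nabla_{\bs{r}}\zeta|^2\big)
= \sigma^2 D\wh{w}^2 \,.
\]
I expect this step---recognising that the modified constraint is exactly what reabsorbs the coupling term---to be the main (indeed the only real) obstacle; once it is made, every remaining occurrence of $\epsilon$ disappears.

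What is left is the manifestly $\epsilon$-free expression
\[
H = \int \Big(\tfrac12|\bs{\wh v}|^2 + \tfrac{\sigma^2}{2}\wh{w}^2 + \frac{\zeta}{Fr^2}\Big)\,D\,d^2r \,,
\]
which coincides, upon restoring dimensions, with the ECWW energy \eqref{WW-LP-Erg-}. This proves the claim, since the Legendre-transform Hamiltonian is conserved along solutions of \eqref{ACWW-wavemomtrans}--\eqref{ACWW-fluidmomtrans1}. As an independent cross-check one could instead compute $dH/dt$ directly from \eqref{ACWW-wavemomtrans} and \eqref{ACWW-fluidmomtrans1} and verify that the $\epsilon$-dependent divergence and gradient terms integrate to zero; but the Legendre-transform route makes the $\epsilon$-independence transparent with far less computation.
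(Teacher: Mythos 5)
Your proposal is correct and follows essentially the same route as the paper: both rewrite the modified action \eqref{ActionIntegral-FSmod} as a phase-space Lagrangian $\int \wh{\phi}\,\partial_t D + \sigma^2\lambda\,\partial_t\zeta - (\text{energy density})\,d^2r\,dt$ and observe that the WCMC term is reabsorbed through the $\epsilon$-modified relation $\wh{w}=\widetilde{w}\big(1+\epsilon\sigma^2|\nabla_{\bs{r}}\zeta|^2\big)$, leaving the same $\epsilon$-free energy as the ECWW case. Your write-up simply makes explicit the integration by parts and cancellations that the paper's proof asserts in a single rearrangement step.
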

\begin{proof}
The modified constrained action integral in equation \eqref{ActionIntegral-FSmod} may be rewritten equivalently as a phase-space Lagrangian, upon rearranging as follows,
\begin{align}
\begin{split}
	S &=\int \ell(\bs{\wh{v}}, D,\widehat{\phi},{\wh w}; \zeta, \lambda)dt
	\\&= \int \int  
	D\left(\frac{1}{2}\big(|\bs{\wh{v}}|^2+\sigma^2\widehat{w}^2\big)  - \frac{\zeta}{Fr^2}\right)   
	+ \sigma^2\lambda\Big( \partial_t\zeta  + \bs{\wh{v}}\cdot\nabla_{\bs{r}}\zeta - \widehat{w} \Big)
	\\
	&\qquad 
	+ \widehat{\phi}\big(\partial_t D + \text{div}_{\bs{r}}(D\bs{\wh{v}})\big)
	- \epsilon\sigma^4\wh{w} \lambda |\nabla_{\bs{r}}\zeta|^2
	\ d^2r\,dt\,.
	\\&= \int \int  
	\widehat{\phi}\partial_t D + \sigma^2\lambda \partial_t\zeta 
	- D \left(\frac{1}{2} |\bs{\wh{v}}|^2 + \frac{1}{2} \wh{w}^2 + \frac{\zeta}{Fr^2} \right)
		\ d^2r\,dt\,.
\end{split}
	\label{ActionIntegral-FSmod-PS}
\end{align}
The last term in this equation agrees with the definition of energy for the unmodified Lagrangian in equation \eqref{WW-LP-Erg-CWW-m} obtained by setting $\epsilon=0$ in the modified Lagrangian in equation \eqref{ActionIntegral-FSmod}. Thus, the modified and unmodified system conserve the same physical energy. 
\end{proof}

Theorem \ref{Erg-conserv} shows that the equations resulting from the modified Lagrangian in \eqref{ActionIntegral-FSmod} conserve the same energy as for the unmodified Lagrangian in \eqref{ActionIntegral-FS}. Thus, the modification in \eqref{ActionIntegral-FSmod} which was obtained by introducing the $\epsilon$ term produces the wave-current interaction in equations \eqref{ACWW-wavemomtrans} and \eqref{ACWW-fluidmomtrans1} while also preserving the original physical energy density. 
What depends on $\epsilon$ is the definition of the vertical wave momentum density $\lambda$ canonically conjugate to the elevation $\zeta$ depends on $\epsilon$, as well as the definition of the velocity $\bs{V}$ in terms of the wave variables, as seen in the first and second lines of \eqref{var-eqns-FS}. 

\begin{remark}[Tensor virial theorem for a Lagrangian fluid blob under the ACWWE]
Although the conserved energy remains the same for the ECWW and ACWW models for any value of the coupling constant $\epsilon$, the tensor virial theorem for a Lagrangian fluid blob under the ACWWE is considerably more intricate than in section \ref{Blob-dyn-ECWW} for the ECWW model.
\end{remark}

\subsection{Lie-Poisson Hamiltonian formulation of the ACWWE }

As discussed in appendix \ref{app:LegXform}, the Legendre transformation of the augmented Lagrangian in the action integral \eqref{ActionIntegral-FSmod} with respect to the sum of the fluid and wave momentum densities 
\begin{align}
\bs{M} = D\bs{\wh{v}} + \lambda \nabla_{\bs{r}} \zeta 
= D \bs{V}
	\label{WW-total-momap}
\end{align}
leads to the ECWW Hamiltonian defined now in dimensional units by 
\begin{align}
\begin{split}
h(\bs{M},D,\lambda,\zeta) 
&=
\int \frac{1}{2D} \big|\bs{M} - \lambda \nabla_{\bs{r}} \zeta \big|^2
+ \frac{\lambda^2}{2D}(1+\epsilon|\nabla_{\bs{r}}\zeta|^2)^2 + gD\zeta \,d^2r
\,,\\&=
\int \left(\frac{1}{2} |\bs{\wh{v}}|^2 + \frac{1}{2} \wh{w}^2
 + g\zeta \right) \,D\,d^2r
\,,\\&=
\int \left(\frac{1}{2} |\widehat{\nabla_{\bs{r}}\phi}|^2 
 + \frac{1}{2} \wh{w}^2
 + g\zeta \right) \,D\,d^2r
\,.\end{split}
	\label{WW-LP-Ham}
\end{align}
The Hamiltonian in \eqref{WW-LP-Ham} is also the conserved energy \eqref{WW-LP-Erg-CWW-m} for 
the system of ECWWE  in \eqref{WW-eqns1}, as proven in Theorem \ref{Erg-conserv}.

\paragraph{Variations of the Hamiltonian in \eqref{WW-LP-Ham}.}
 In the Hamiltonian variables, the Bernoulli function ${\varpi}$ in \eqref{var-eqns-FS} is denoted as
\begin{equation}
	\varpi
	= \frac{1}{2D^2} \big|\bs{M} 
	- \lambda \nabla_{\bs{r}} \zeta \big|^2
    + \frac{\lambda^2}{2D^2} (1+\epsilon|\nabla_{\bs{r}}\zeta|^2)
	- g\zeta
	\,.
	\label{FS-Bernoulli-varpi}
\end{equation}
After evaluating the corresponding variational derivatives of the Hamiltonian in \eqref{WW-LP-Ham}, 
the system of equations in \eqref{EulerPoincare-ACWW} may be written in the untangled block-diagonal form, as
\begin{align}
\begin{split}
\frac{\p}{\p t}
\begin{bmatrix}
M_i \\ D \\ \lambda \\ \zeta 
\end{bmatrix}
= -
\begin{bmatrix}
\partial_j M_i + M_j \partial_i & D\partial_i  & 0 & 0
\\
\partial_jD 	    & 0  & 0 & 0
\\
0	& 0 & 0  & 1
\\
0	 & 0  & -1 & 0
\end{bmatrix}
\begin{bmatrix}
 {\delta h}/{\delta {M_j}} = \wh{v}^j
 \\ {\delta h}/{\delta D} = -\,\varpi
 \\ 
{\delta h}/{\delta \lambda} = -  (\bs{\wh v} - \epsilon\wh{w}\nabla_{\bs{r}}\zeta )\cdot\nabla_{\bs{r}}\zeta + \wh{w}
\\ {\delta h}/{\delta \zeta} 
=  {\rm div}_{\bs{r}}  \Big( \lambda \big(\bs{\wh v} - 2\epsilon\wh{w}\nabla_{\bs{r}}\zeta \big)\Big)
	 + g D 
\end{bmatrix}
\,.
\end{split}
	\label{FS-diag-brkt}
\end{align}
\paragraph{Casimir functions.} The Casimir functions, conserved by the relation $\{ C_{\Phi},h \}=0$ with any Hamiltonian $h(\bs{M},D)$ for the block-diagonal Lie-Poisson bracket in equation \eqref{FS-diag-brkt} are given by
\begin{align}
C_{\Phi} := \int \Phi(q)\,D\,d^2r 
\quad\hbox{for}\quad 
q := D^{-1}\zh\cdot {\rm curl}(\bs{M}/D)
\quad\hbox{with}\quad 
\p_tq + \bs{\wh{v}}\cdot\nabla_{\bs{r}}q = 0\,.
	\label{FS-diag-brkt-Casimirs}
\end{align}
As one may verify, the $C_{\Phi}$ are conserved for any differentiable function, $\Phi$, provided the velocity $\bs{\wh{v}}$ is tangent on the two-dimensional boundary. 

The proof of the constancy of the family of functions $C_{\Phi}$ is straightforward and well-known. That the $C_{\Phi}$ comprise a family of Casimirs so that $\{ C_{\Phi},h \}=0$ for any Hamiltonian $h(\bs{M},D)$ is a standard result for semidirect-product Lie-Poisson brackets. See, e.g., \cite{HMRW1985}.\bigskip

\begin{remark}[Consequences of introducing the wave-current minimal coupling term (WCMC)]
The consequences of introducing the slip velocity $\bs{s}:=\widehat{w}\nabla_{\bs{r}}\zeta$ in \eqref{slipvelocity-defX} into the variational principle in \eqref{ActionIntegral-FSmod} as a WCMC term are evident in the Bernoulli function in \eqref{FS-Bernoulli-varpi} and in the transport velocities of the wave dynamics in \eqref{FS-diag-brkt}, upon comparing them with \eqref{Bernoulli-varpi-M} and \eqref{FS-diag-brkt-CWW}, respectively. In contrast to the complexity of the separate relations for wave and current circulation laws in \eqref{ACWW-wavemomtrans} and \eqref{ACWW-fluidmomtrans1}, the simplicity of the conservation of the total circulation in equation \eqref{Kel-circACWW-thm} for ACWW dynamics seems to be a more meaningful statement about WCI than in the ECWWE, where the wave and current circulations are conserved separately in equation \eqref{eq: 2 KelCircThms}, as a \emph{mutual non-acceleration pact}. In the next section of the paper, we will explore the further ramifications of introducing the WCMC term, by adding non-hydrostatic pressure, buoyancy and other physics to the ACWW system. 
\end{remark}



\section{Hamilton's principle for wave-current interaction on a free surface (WCIFS)}\label{sec: FECWWE}

\subsection{Adding buoyancy and other physics to the ACWW system}\label{sec: FECWWE-HP}

This section further augments the ACWWE set \eqref{EulerPoincare-ACWW} to add more physical aspects to the wave-current interaction on a free surface (WCI FS). These physical aspects include wave-current coupling, non-hydrostatic pressure, incompressibility, and horizontal gradients of buoyancy.

\paragraph{Hamilton's principle for WCIFS.}
Let us modify the action integral \eqref{ActionIntegral-FSmod} for the system of ACWWE  in \eqref{EulerPoincare-ACWW} to encompass the following aspects of wave-current interaction on a free surface. As in \eqref{ActionIntegral-FSmod}, we will impose the surface boundary condition \eqref{u3-zetaX} and the continuity equation for the areal density variable $D$ as constraints. We will also include the wave-current minimal coupling (WCMC) term via the slip velocity, as in section \ref{subsec:CWWE_with_Coupling}.  In addition, we will introduce an advected scalar buoyancy variable $\rho$ with nonzero horizontal gradients. Finally, we will allow non-hydrostatic pressure, $p$. To determine the pressure, $p$, we will constrain the two-dimensional fluid transport velocity $\bs{\wh v}$ to be divergence-free.\footnote{We will reserve the hat notation for aspects of velocity $(\bs{\wh v},{\wh w},{\wh \phi})$ evaluated on the free surface. Hence, we will refrain from gratuitously adding hats to the pressure $p$ and the buoyancy $\rho$, since it is understood that they are evaluated on the free surface. The meaning for pressure $p$ and the buoyancy $\rho$ will always be clear from the context. The assumption of incompressibility of the fluid flow will enable the Bernoulli law to admit finite non-hydrostatic pressure.} To include these various physical effects, we will apply Hamilton's principle with the following dimension-free action integral, 
\begin{align}
\begin{split}
	S &=\int \ell(\bs{\wh{v}}, D,\rho,\widehat{\phi},{\wh w}, \zeta; \widetilde{\mu},p)dt
	\\&= \int \int  
	D\rho\left(\frac{1}{2}\big(|\bs{\wh{v}}|^2+\sigma^2\widehat{w}^2\big)  - \frac{\zeta}{Fr^2}\right)   
	+ \sigma^2\widetilde{\mu}\Big( \partial_t\zeta  + \bs{\wh{v}}\cdot\nabla\zeta - \widehat{w} \Big)
	- \sigma^4\epsilon\bs{s}\cdot(\widetilde{\mu}\nabla\zeta)
	\\
	&\qquad - \frac{1}{Fr^2}\,p(D-1)
	+ \widehat{\phi}\big(\partial_t D + \text{div}(D\bs{\wh{v}})\big) 
	+ \gamma(\partial_t\rho + \bs{\wh{v}}\cdot\nabla\rho) \,d^2r\,dt\,.
\end{split}
	\label{ActionIntegral-3revX}
\end{align}
Here, we recall that the quantity $\bs{s}:=\widehat{w}\nabla_{\bs{r}}\zeta$ is the slip velocity, defined in equation \eqref{slipvelocity-defX}.

\begin{remark}
    Note that by including only certain terms in the above action integral, we may derive equations for the dynamics of subsystems with any combination of these additional properties (wave-current coupling, non-hydrostatic pressure, incompressibility, buoyancy).
\end{remark}

\paragraph{The passive wave case, $\epsilon=0$.}
Taking variations of the dimensional version of action integral in \eqref{ActionIntegral-3revX} with $\epsilon=0$ yields 
\begin{align}
\begin{split}
	\delta \bs{\wh{v}}:&\quad 
	D\rho \bs{\wh{v}}\cdot d\bs{x} + \widetilde{\mu}\,d\zeta 
	=  Dd\widehat{\phi} - {\gamma}d\rho
	\,,\\ 
	\delta \widehat{w} :&\quad 
	D\rho\widehat{w} - \widetilde\mu = 0
	\,,\\
	\delta \widetilde{\mu}:&\quad 
	 \partial_t\zeta  + \bs{\wh{v}}\cdot\nabla\zeta =  \widehat{w}
	\,,\\
	\delta \zeta:&\quad 
	\partial_t \widetilde{\mu} + \text{div}(\widetilde{\mu}\bs{\wh{v}})
	=
	 - \,D\rho g
	\,,\\
	\delta\rho :&\quad 	
	\big(\partial_t + \mathcal{L}_{\bs{\wh{v}}}\big)\left(\frac{\gamma}{D}\right) 
	= \frac{1}{2}\Big(|\bs{\wh{v}}|^2+\widehat{w}^2\Big)  - g\zeta  =: {\varpi}
	\,,\\
	\delta D:&\quad 
	\big(\partial_t + \mathcal{L}_{\bs{\wh{v}}}\big)\widehat{\phi} = \rho{\varpi} - p 
	\,,\\
	\delta \widehat{\phi} :& \quad 
	\partial_t D + \text{div}(D\bs{\wh{v}}) =0 
	\,,\\
	\delta p:&\quad 
	D-1 = 0 
	\quad \implies \text{div}\bs{\wh v} =0 
	\,,\\
	\delta\gamma :&\quad 
	\big(\partial_t + \mathcal{L}_{\bs{\wh{v}}}\big)\rho=0
	\,.
\end{split}
	\label{var-eqns-3revX}
\end{align}
Applying $(\p_t +\mathcal{L}_{\bs{\wh{v}}})$ to the first relation in \eqref{var-eqns-3revX} yields
\begin{equation}\label{EP-3rexV}
    (\p_t +\mathcal{L}_{\bs{\wh{v}}})(\bs{\wh{v}}\cdot d\bs{r} + \widehat{w}d\zeta) = d\varpi - \frac{1}{\rho}dp\,,
\end{equation}
so we obtain the following Kelvin circulation theorem,
\begin{equation}\label{PassiveKelvinThm}
\frac{d}{dt}\oint_{c_t} \bs{V}\cdot d\bs{r} 
=
\oint_{c_t} (\p_t +\mathcal{L}_{\bs{\wh{v}}})\big(\bs{V}\cdot d\bs{r}\big)
=
- \oint_{c_t}  \frac{1}{\rho}dp
\,.
\end{equation}
As expected, the momentum per unit mass in the motion equation is $\bs{\wh{v}} + \widehat{w}\nabla\zeta =: \bs{V}$. The result has the same right-hand side as for the two-dimensional inhomogeneous Euler equation. Here, the total momentum now is the sum of the fluid momentum and the wave momentum, whose evolution is obtained as a separate degree of freedom appearing in the third and fourth lines of the equation set \eqref{var-eqns-3revX}.

However, continuing to calculate from \eqref{EP-3rexV} yields a non-acceleration result as in equation \eqref{eq: 2 KelThms}, 
in the sense that the wave momentum evolves passively with the flow of the fluid,
\[
(\p_t +\mathcal{L}_{\bs{\wh{v}}})(\widehat{w}\nabla\zeta)
= -g\,d\zeta + \frac12 d \widehat{w}^2\,,
\]
and the fluid momentum evolves independently of the wave variables,
\[
(\p_t +\mathcal{L}_{\bs{\wh{v}}})(\bs{\wh{v}}\cdot d\bs{r})
=
d\left(\frac12|\bs{\wh v}|^2\right) - \frac{1}{\rho}dp\,.
\]


Thus, the momentum equation in the passive wave case is simply a 2D Euler equation to be considered in tandem with the remaining identities from \eqref{var-eqns-3revX}. Note that if $\nabla\rho\ne0$ then the right-hand side of the previous equation generates circulation in $\bs{\wh{v}}$; so, in this case $\bs{\wh{v}}$ cannot produce potential flow. 

Next, we will pursue the implications when the wave-current minimal coupling (WCMC) parameter $\epsilon$ does not vanish and the wave variables do not interact passively.


\subsection{Derivation of the WCIFS equations for active waves}\label{sec:ClebschVariations}

To derive a WCIFS model system of equations for the motion of free surface with active waves and spatially varying buoyancy, we will apply the free-surface condition \eqref{u3-zetaX} and incompressibility of the $\bs{\wh v}$-flow as constraints in the action integral, while also including the minimal coupling term with nondimensional parameter $\epsilon\ne0$ in the action integral \eqref{ActionIntegral-3revX}. Then, upon restoring dimensionality to the variables in \eqref{ActionIntegral-3revX}, we obtain the following action principle for the  free-surface motion, 
\begin{align}
\begin{split}
	S &=\int \ell(\bs{\wh v},\widehat w, D,\rho,\zeta;\widetilde{\mu},p)dt
	\\& = \int \int  D\rho\left(\frac{1}{2}\big(|\bs{\wh v}|^2+{\widehat w}^2\big)  - g\zeta\right)   
	+ \widetilde{\mu}\Big( \partial_t\zeta  + \bs{\wh v}\cdot\nabla\zeta - {\widehat w} \Big)
	- \epsilon\wh{w}\nabla\zeta\cdot(\widetilde{\mu}\nabla\zeta)
	\\&\hspace{2cm} - p(D-1)+ \widehat{\phi}\big(\partial_t D + \text{div}(D\bs{\wh v})\big) 
	+ \gamma(\partial_t\rho + \bs{\wh v}\cdot\nabla\rho) \,d^2r\,dt\,.
\end{split}
	\label{ActionIntegral-3}
\end{align}
The Lagrange multipliers $\widetilde{\mu}$, $p$, $\wh{\phi}$ and $\gamma$, apply, respectively, the free-surface condition \eqref{u3-zetaX}, incompressibility of the $\bs{\wh v}$-flow, mass preservation, and buoyancy advection.

Hamilton's principle, $\delta S=0$, for the restricted free-surface action integral in \eqref{ActionIntegral-3} yields the following independent relations,
\begin{align}
\begin{split}
	\delta \bs{\wh v}:&\quad 
	D\rho \bs{\wh v}\cdot d\bs{x} + \widetilde{\mu}\,d\zeta =  Dd\widehat{\phi} - {\gamma}d\rho
	\,,\\ 
	\delta \widehat w :&\quad 
	D\rho \widehat w - \widetilde{\mu}\,\big(1+\epsilon|\nabla\zeta|^2\big) =  0
	\,,\\
	\delta \widetilde{\mu}:&\quad 
	 \partial_t\zeta  + \bs{\wh v}\cdot\nabla\zeta - {\widehat w}(1+\epsilon|\nabla\zeta|^2) = 0
	\,,\\
	\delta \zeta:&\quad 
	\partial_t \widetilde{\mu} + \text{div}(\widetilde{\mu}\bs{\wh v})
	=
	 - \,D\rho g + 2\epsilon\,{\rm div} \big( {\widehat w} \, \widetilde{\mu}\, \nabla \zeta\big)
	\,,\\
	\delta\rho :&\quad 	
	\big(\partial_t + \mathcal{L}_{\bs{\wh v}}\big)\left(\frac{\gamma}{D}\right) 
	= \frac12\big(|\bs{\wh v}|^2+{\widehat w}^2\big) - g\zeta  =: {\varpi}
	\,,\\
	\delta D:&\quad 
	\big(\partial_t + \mathcal{L}_{\bs{\wh v}}\big)\widehat{\phi} = \rho{\varpi} - p 
	\,,\\
	\delta \widehat{\phi} :& \quad 
	\partial_t D + \text{div}(D\bs{\wh v}) =0 
	\,,\\
	\delta p:&\quad 
	D-1 = 0 
	\quad \implies \text{div}\bs{\wh v} =0 
	\,,\\
	\delta\gamma :&\quad 
	\big(\partial_t + \mathcal{L}_{\bs{\wh v}}\big)\rho=0
	\,.
\end{split}
	\label{var-eqns-3}
\end{align}

Before enforcing the pressure constraint $D=1$, we write out the fluid motion equation obtained by substituting the variational results into the application of the advective time derivative $(\partial_t+\mathcal{L}_{\bs{\wh v}})$ on the first line of the system \eqref{var-eqns-3}, to find, upon writing  $\widetilde{\mu}= D\rho\widetilde{w}$
\begin{align}
\begin{split}
	D\rho(\partial_t+\mathcal{L}_{\bs{\wh v}})\big( \bs{\wh v}\cdot d\bs{x} + \widetilde{w}\,d\zeta \big) 
	&= Dd(\partial_t+\mathcal{L}_{\bs{\wh v}})\widehat{\phi} 
	- D\left((\partial_t+\mathcal{L}_{\bs{\wh v}})\frac{\gamma}{D}\right)d\rho 
	\\
	&= D\rho\Big(d{\varpi} - \frac1\rho dp\Big) 
	\,.
\end{split}
\label{EulerPoincare-3}
\end{align}
Recall that the Bernoulli function ${\varpi}$ and vertical wave momentum density $\widetilde{\mu}$ are defined as 
\begin{align}
{\varpi} :=  \frac12\big(|\bs{\wh v}|^2+{\widehat w}^2\big) - g\zeta 
\,,\qquad
\widetilde{\mu} := \frac{ D\rho {\widehat w}}{\big(1+\epsilon|\nabla\zeta|^2\big)} = D\rho\widetilde{w}
\,.
	\label{Bernoulli-wavemom-def}
\end{align}

At this point, let us collect the WCIFS equations in terms of (i) fluid variables, comprising velocity, $\bs{\wh v}$, area density, $D$, buoyancy, $\rho$, and (ii) wave variables, comprising surface elevation, $\zeta$ and vertical wave momentum density $\widetilde{\mu}$. The WCIFS equations are,
\begin{align}
\begin{split}
(\partial_t+\mathcal{L}_{\bs{\wh v}})\big( \bs{\wh v}\cdot d\bs{x} + \widetilde{w}\,d\zeta \big)
&= d{\varpi} - \frac1\rho dp
\,,
\\
\partial_t D + \text{div}(D\bs{\wh v}) &=0
\,,
\\
\partial_t\rho + {\bs{\wh v}}\cdot\nabla\rho &=0\,,
\\
\partial_t\zeta + {\bs{\wh v}}\cdot\nabla\zeta
&=
{\widehat w}(1+\epsilon|\nabla\zeta|^2) \,,
\\
\partial_t \widetilde{w} + {\bs{\wh v}}\cdot\nabla \widetilde{w} 
	&=
	 - \,g + 
	\frac{2\epsilon}{D\rho}\,{\rm div}  \big( {\widehat w} \, \widetilde{\mu}\, \nabla \zeta\big)
	\,.
\end{split}
\label{eq: WCIFS}
\end{align}

In its role as a Lagrange multiplier in the action integral \eqref{ActionIntegral-3}, the pressure $p$ enforces the constraint $D=1$. In turn, persistence of the condition $D=1$ along the flow implies that the fluid motion generated by $\bs{\wh v}$ is incompressible. In particular, setting $D=1$ in the continuity equation in \eqref{eq: WCIFS} above implies that the free surface fluid velocity $\bs{\wh v}$ is divergence free, ${\rm div} \bs{\wh v}=0$. The pressure $p$ is then determined by requiring that $\bs{\wh v}$ remain divergence free, which implies  the following elliptic equation for $p$,
\begin{equation}
\begin{aligned}
- (\nabla \cdot \rho^{-1}\nabla ) p 
=
{\rm div}\bigg(&
{\bs{\wh v}}\cdot\nabla \bs{\wh v}   
+ \frac{\widehat w}{1+\epsilon|\nabla\zeta|^2}\nabla\big( \widehat{w}(1+\epsilon|\nabla\zeta|^2) \big) \\
&- \widehat{w}\nabla\widehat{w} 
+ \frac{2\epsilon}{\rho}\,{\rm div}  
	\Big( \rho\, \widetilde{w}^2 \big(1+\epsilon|\nabla\zeta|^2\big) \,\nabla \zeta \Big)\nabla \zeta
\bigg).
\label{pressure-eqn}
\end{aligned}
\end{equation}
Thus, the pressure $p$ depends on the horizontal flow velocity $\bs{\wh v}$ of the surface current and fluid buoyancy $\rho$, as well as the wave elevation $\zeta$ and the vertical velocity $w$. We stress that the flow variables, $(\bs{\wh v},\rho)$, and the wave variables, $(\zeta, \widetilde{w})$, comprise two separate Eulerian degrees of freedom at each point $\bs{r}=(x,y)$ in the two-dimensional domain of flow.

\begin{remark}[Making the action integral stochastic in section \ref{sec: Stoch WCIFS eqns}]
In section \ref{sec: Stoch WCIFS eqns} the action integral \eqref{ActionIntegral-3revX} will be made stochastic, following \cite{Holm2015}, and we will to derive a stochastic generalisation of the water-wave equations.

\end{remark}

\subsection{Comparison of WCIFS system to other known systems}

\begin{remark}[Comparison of system \eqref{eq: WCIFS} to the John-Sclavounos (JS) model equations]
The JS model comprises a dynamical system of ordinary differential equations for the motion of a single particle which is constrained to remain upon the free surface $\zeta(x,y,t)-z=0$, with \emph{prescribed} $\zeta(x,y,t)$. This dynamical system has recently been derived from a variational principle using the Euler-Lagrange methodology \cite{Chandre2016}. This variational principle raises the question of whether the particle dynamics of JS model may be associated with Lagrangian fluid trajectory dynamics in the present continuum framework.

The JS equations give the horizontal fluid particle trajectories $\bs{r}(t) = (x(t),y(t))$ driven by the free surface $z=\zeta(\bs{r},t)$. The equations can be expressed as
\begin{align}
	(1+\zeta_x^2)\ddot x + \zeta_x\zeta_y\ddot y + (\zeta_{tt}+\zeta_{xt}\dot x + \zeta_{yt}\dot y + \zeta_{xx}\dot x^2 + 2\zeta_{xy}\dot x\dot y +\zeta_{yy}\ddot y + g)\zeta_x &= 0, \\
	(1+\zeta_y^2)\ddot y + \zeta_x\zeta_y\ddot x + (\zeta_{tt}+\zeta_{xt}\dot x + \zeta_{yt}\dot y + \zeta_{xx}\dot x^2 + 2\zeta_{xy}\dot x\dot y +\zeta_{yy}\ddot y + g)\zeta_y &= 0.
\end{align}
Note that
\begin{equation*}
	\partial_t(\zeta_t+\zeta_x\dot x + \zeta_y\dot y) + g - \zeta_x\ddot x - \zeta_y\ddot y = \zeta_{tt}+\zeta_{xt}\dot x + \zeta_{yt}\dot y + \zeta_{xx}\dot x^2 + 2\zeta_{xy}\dot x\dot y +\zeta_{yy}\ddot y + g\,.
\end{equation*}
Hence, the JS equations can be re-written in more concise vector notation as
\begin{equation}
	\bs{\ddot r} + \Big((\partial_t+ \bs{\dot r}\cdot \nabla)(\partial_t\zeta + \bs{\dot r}\cdot \nabla \zeta) + g\Big)\nabla\zeta 
	=: \bs{\ddot r}  + \bigg(\frac{D}{Dt} \Big(\frac{D\zeta}{Dt} \Big)  + g\bigg)\nabla\zeta = 0\,,
\label{JS-vec}
\end{equation}
with $D/Dt:=\partial_t+ \bs{\dot y}\cdot \nabla.$
	\end{remark}
\paragraph{Choi's relation.} One may immediately make the connection between the JS equations \eqref{JS-vec} and Choi's relation \eqref{ChoiEqnX}. Naturally, since the JS equations represent a single particle's motion whereas Choi's relation is a statement about continuum flows, \eqref{ChoiEqnX} features a pressure term on the right hand side. However, the two equations are otherwise strikingly similar.

\paragraph{Comparison with the JS equations.} To make the comparison between the system of equations derived in this paper with the JS equations in vector form \eqref{JS-vec}, we combine the last two equations of the system \eqref{eq: WCIFS} to write, 
\begin{align}
g -\,
	\frac{2\epsilon}{D\rho}\,{\rm div}  
	\Big( D\rho\, \widetilde{w}^2 \big(1+\epsilon|\nabla\zeta|^2\big) \,\nabla \zeta \Big)
	=   -\, (\partial_t + {\bs{\wh v}}\cdot\nabla) \left(\frac{\partial_t\zeta + {\bs{\wh v}}\cdot\nabla\zeta}{\big(1+\epsilon|\nabla\zeta|^2\big)^2}\right)
	=   -\, (\partial_t + {\bs{\wh v}}\cdot\nabla)\widetilde w
	\,.
\label{JS-term}
\end{align}
Consequently, the motion equation in system \eqref{eq: WCIFS} may be expressed as 
\begin{equation}\label{CFS-MotionEquation}
\partial_t \bs{\wh v} + {\bs{\wh v}}\cdot\nabla \bs{\wh v}  + \Big( \big(\partial_t + {\bs{\wh v}}\cdot\nabla\big)\widetilde w + g  \Big) \nabla \zeta 
= 
- \frac1\rho \nabla p + \widehat w\nabla\widehat w - \frac{\widehat w}{1+\epsilon|\nabla\zeta|^2}\nabla\big(\widehat{w}(1+\epsilon|\nabla\zeta|^2)\big)
\end{equation}
Thus, the present form of the WCIFS fluid equations \eqref{CFS-MotionEquation} does seem to have some kinematic resemblance to the JS equations, although the two types of dynamics also have major physical and mathematical differences in their interpretations.

For example, one may write the WCIFS motion equation \eqref{CFS-MotionEquation} equivalently in more compact form, as
\begin{equation}\label{CFS-MotionEquation-w}
\partial_t \bs{\wh v} + {\bs{\wh v}}\cdot\nabla \bs{\wh v}  
+ \Big( \big(\partial_t + {\bs{\wh v}}\cdot\nabla\big) \widetilde{w} 
+ g  \Big) \nabla \zeta 
= 
 - \frac1\rho \nabla p - \frac12\widetilde{w}^2 \nabla \big(1+\epsilon|\nabla\zeta|^2\big)^2 \,.
\end{equation}
In this compact form, which is also reminiscent of Choi's relation \eqref{ChoiEqnX}, the geometric, coordinate-free nature of the WCIFS equation begins to emerge upon writing \eqref{CFS-MotionEquation} equivalently as the advective Lie derivative of a 1-form, which also arises in the Kelvin circulation theorem below, cf. \eqref{Kel-circCFS-thm},
\begin{equation}\label{CFS-geomMotEq1}
(\partial_t+\mathcal{L}_{\bs{\wh v}})\Big(\bs{\wh v}\cdot d\bs{x} + \widetilde{w} d \zeta\Big) 
=
d{\varpi} - \frac1\rho dp\,.
\end{equation}
In one dimension with constant buoyancy $\rho=\rho_0$ and $p=p_s=\rho_0 g\zeta$, this formula becomes, 
\begin{equation}\label{CFS-geomMotEq-1D}
(\partial_t+\mathcal{L}_{{v}})\Big({v} d{x} + \widetilde{w} d \zeta\Big) 
=
d{\varpi} - \frac{g}{\rho_0} d\zeta\,,
\end{equation}
where $\varpi$ is defined in \eqref{Bernoulli-wavemom-def}. In this geometric form, the JS and WCIFS models look rather more distant.  

\subsection{Balance relations, Kelvin theorem and potential vorticity}
\begin{remark}[Dimension-free form of motion equation \eqref{CFS-geomMotEq1}]

In terms of the parameters in remark \ref{non-dim-scales}, the dimension-free form of the motion equation \eqref{CFS-geomMotEq1} is given by
\begin{equation}\label{CFS-geomMotEq2}
(\partial_t+\mathcal{L}_{\bs{\wh v}})\Big(\bs{\wh v}\cdot d\bs{x} +  \sigma^2\widetilde{w} d \zeta\Big) 
=
d\bigg(\frac12\big( |\bs{\wh v}|^2 + \sigma^2 {\widehat w}^2 \big)
- \frac{\zeta}{Fr^2} \bigg) 
 -  \frac{1}{Fr^2} \frac1\rho dp\,,
\end{equation}
where $\p_t\zeta+\bs{\wh v}\cdot\nabla\zeta = \widetilde{w} (1+\epsilon\sigma^2|\nabla\zeta|^2)^2 = {\widehat w}(1+ \epsilon\sigma^2 |\nabla\zeta|^2)$.

\paragraph{Balance relations required for significant wave-current interaction.}
For small Froude number, $Fr^2\ll1$, equation \eqref{CFS-geomMotEq2} approaches hydrostatic balance, and for small aspect ratio $\sigma^2\ll1$, equation \eqref{CFS-geomMotEq2} suppresses wave activity. When Froude number $Fr^2$ and aspect ratio $\sigma$ are both of order $O(1)$, then equation \eqref{CFS-geomMotEq2} admit order $O(1)$ significant non-hydrostatic wave activity.

Likewise, the $\widetilde{w}$ equation in \eqref{eq: WCIFS} in dimensionless form for the same scaling parameters becomes
\begin{align}
\sigma^2\big(\partial_t \widetilde{w} + {\bs{\wh v}}\cdot\nabla \widetilde{w} \big)
	=
	 - \frac{1}{Fr^2}
	 + 
	\frac{2\epsilon\sigma^4}{D\rho}\,{\rm div}  \big( D\rho\, {\widehat w} \, \widetilde{w}\, \nabla \zeta\big)
	\,.
\label{EulerPoincare-4w}
\end{align}
The balance between current and wave properties in the dimension-free $\widetilde{w}$ equation \eqref{EulerPoincare-4w} also requires both Froude number $Fr^2$ and aspect ratio $\sigma$ to be of order $O(1)$ for significant wave activity to occur. 

Only the motion equation and the $\widetilde{w}$ equation in \eqref{eq: WCIFS} change their coefficients for these scaling parameters. The coefficients of the others remain unchanged.

\end{remark}

\begin{remark}
	To prove the following Kelvin-Noether circulation theorem for the system of WCIFS equations in \eqref{eq: WCIFS} it is convenient to return to the variational equations in \eqref{var-eqns-3} and the notation introduced in \eqref{EulerPoincare-3} and \eqref{Bernoulli-wavemom-def}.
	\end{remark}
	
\begin{theorem}[Kelvin-Noether theorem for the WCIFS model]\label{KNthm-WCIFS}
For every closed loop $c(\bs{\wh v})$ moving with the WCIFS velocity $\bs{\wh v}$ for the system of WCIFS equations in \eqref{eq: WCIFS} the Kelvin circulation relation holds,
\begin{align}
\frac{d}{dt} \oint_{c(\bs{\wh v})} \Big(\bs{\wh v}\cdot d\bs{x} + \frac{\widetilde\mu}{D\rho}  d \zeta\Big) 
=
\oint_{c(\bs{\wh v})} d{\varpi} - \frac1\rho dp
\,.
	\label{Kel-circCFS-thm}
\end{align}
\end{theorem}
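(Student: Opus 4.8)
The plan is to reduce the claim to the advective transport identity for circulation integrals, once the geometric (Lie-derivative) form of the WCIFS momentum equation is in hand. The natural starting point is the $\delta\bs{\wh v}$ relation in the first line of \eqref{var-eqns-3}, $D\rho\,\bs{\wh v}\cdot d\bs{x} + \widetilde{\mu}\,d\zeta = D\,d\widehat{\phi} - \gamma\,d\rho$, together with the definition $\widetilde{\mu} = D\rho\,\widetilde{w}$ from \eqref{Bernoulli-wavemom-def}. This definition is precisely what makes the circulation one-form in the statement, $\bs{\wh v}\cdot d\bs{x} + (\widetilde{\mu}/D\rho)\,d\zeta$, coincide exactly with the advected one-form $\bs{\wh v}\cdot d\bs{x} + \widetilde{w}\,d\zeta$ appearing in the motion equation.

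First I would apply the advective operator $(\partial_t + \mathcal{L}_{\bs{\wh v}})$ to that $\delta\bs{\wh v}$ relation and reorganise both sides. On the left, the pressure constraint (the $\delta p$ equation) forces $D=1$ and hence ${\rm div}\,\bs{\wh v}=0$; combined with the buoyancy advection equation $(\partial_t+\bs{\wh v}\cdot\nabla)\rho=0$, this makes $D\rho$ an advected scalar, so the prefactor $D\rho$ can be pulled outside the advective derivative, leaving $D\rho\,(\partial_t+\mathcal{L}_{\bs{\wh v}})(\bs{\wh v}\cdot d\bs{x} + \widetilde{w}\,d\zeta)$. On the right, incompressibility again lets $D$ pass through the exterior derivative, while the Lie chain rule of Appendix \ref{app: FluidTransTheory} commutes $(\partial_t+\mathcal{L}_{\bs{\wh v}})$ with $d$; substituting the $\delta D$ equation $(\partial_t+\mathcal{L}_{\bs{\wh v}})\widehat{\phi}=\rho\varpi - p$ and the $\delta\rho$ equation $(\partial_t+\mathcal{L}_{\bs{\wh v}})(\gamma/D)=\varpi$ then gives $D\,d(\rho\varpi - p) - D\varpi\,d\rho$. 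Using $d(\rho\varpi)=\rho\,d\varpi+\varpi\,d\rho$, this collapses to $D\rho\,d\varpi - D\,dp = D\rho(d\varpi - \tfrac1\rho dp)$, which is exactly \eqref{EulerPoincare-3}. Dividing by $D\rho>0$ recovers the coordinate-free motion equation \eqref{CFS-geomMotEq1}.

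The final step is then immediate: integrate \eqref{CFS-geomMotEq1} around a closed loop $c(\bs{\wh v})$ transported by the fluid velocity $\bs{\wh v}$, and apply the material-loop transport identity $\frac{d}{dt}\oint_{c(\bs{\wh v})}\alpha = \oint_{c(\bs{\wh v})}(\partial_t+\mathcal{L}_{\bs{\wh v}})\alpha$ (Appendix \ref{app: FluidTransTheory}) to the one-form $\alpha = \bs{\wh v}\cdot d\bs{x}+\widetilde{w}\,d\zeta = \bs{\wh v}\cdot d\bs{x}+(\widetilde{\mu}/D\rho)\,d\zeta$. This produces $\frac{d}{dt}\oint_{c(\bs{\wh v})}(\bs{\wh v}\cdot d\bs{x}+(\widetilde{\mu}/D\rho)\,d\zeta)=\oint_{c(\bs{\wh v})}(d\varpi - \tfrac1\rho dp)$, which is the asserted relation \eqref{Kel-circCFS-thm}. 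One may note in passing that $\oint_{c(\bs{\wh v})} d\varpi=0$ because $\varpi$ is single-valued, so the genuine circulation source is the baroclinic term $-\oint_{c(\bs{\wh v})}\rho^{-1}dp$; the statement as written simply retains both terms.

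The only substantive work is the algebraic collapse of the right-hand side in the second step, and I expect that to be the main (though modest) obstacle: one must track the Leibniz terms generated when $(\partial_t+\mathcal{L}_{\bs{\wh v}})$ acts on the density-weighted one-form and confirm that the advection laws for $D$, $\rho$, $\gamma$ and $\widehat{\phi}$ are applied consistently — in particular that incompressibility is exactly what allows $D$ to commute past $d$. Since \eqref{EulerPoincare-3} and \eqref{CFS-geomMotEq1} are already recorded in the excerpt, an equally valid and shorter route is to begin directly from \eqref{CFS-geomMotEq1} and invoke only the transport identity, treating the motion equation as given.
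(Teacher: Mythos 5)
Your proposal is correct and follows essentially the same route as the paper's own proof: apply $(\partial_t+\mathcal{L}_{\bs{\wh v}})$ to the $\delta\bs{\wh v}$ relation in \eqref{var-eqns-3}, substitute the $\delta D$ and $\delta\rho$ equations to collapse the right-hand side to $D\rho\,(d\varpi-\tfrac1\rho dp)$, divide by the advected factor $D\rho$ to reach \eqref{CFS-geomMotEq1}, and conclude with the material-loop transport identity of Appendix \ref{app: FluidTransTheory}. Your explicit tracking of the Leibniz terms (using $D=1$, ${\rm div}\,\bs{\wh v}=0$ and advection of $\rho$ to pull $D\rho$ through the advective derivative) is in fact slightly more careful than the paper's presentation, but it is the same argument.
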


\begin{proof}
From the variational equations in \eqref{var-eqns-3}, we have
\begin{align*}
\begin{split}
	(\partial_t+\mathcal{L}_{\bs{\wh v}})\Big(D\rho\bs{\wh v}\cdot d\bs{x} + \widetilde\mu d \zeta\Big)
	&= Dd(\partial_t+\mathcal{L}_{\bs{\wh v}})\widehat{\phi} 
	- D\left((\partial_t+\mathcal{L}_{\bs{\wh v}})\frac{\gamma}{D}\right)d\rho 
	\\
	&= Dd (\rho{\varpi} - p )- D{\varpi} d\rho\,.
\end{split}
\end{align*}
Hence, we find
\begin{align}
(\partial_t+\mathcal{L}_{\bs{\wh v}})\Big(\bs{\wh v}\cdot d\bs{x} + \frac{\widetilde\mu}{D\rho} d \zeta\Big) 
=
d{\varpi} - \frac1\rho dp
\,,
\label{Kel-circCFS-proof}
\end{align}
and the result \eqref{Kel-circCFS-thm} follows from the standard relation for the time derivative of an integral around a closed moving loop, $c(\bs{\wh v})$,
\[
\frac{d}{dt} \oint_{c(\bs{\wh v})} \Big(\bs{\wh v}\cdot d\bs{x} + \frac{\widetilde\mu}{D\rho}  d \zeta\Big) 
=
\oint_{c(\bs{\wh v})} 
(\partial_t+\mathcal{L}_{\bs{\wh v}})\Big(\bs{\wh v}\cdot d\bs{x} + \frac{\widetilde\mu}{D\rho} d \zeta \Big) 
=
\oint_{c(\bs{\wh v})} 
d{\varpi} - \frac1\rho dp
\,.\]
\end{proof}

\begin{remark}[Interpretation of WCIFS as a compound fluid system]
The compound circulation of the WCIFS wave-fluid system in \eqref{eq: WCIFS} obeys the same dynamical equations as the planar incompressible flow description of a single-component flow with horizontal buoyancy gradient, except for two features associated with the wave degrees of freedom. First, the presence of the wave field contributes to the solution for the pressure from the condition that the  velocity of the fluid component remains incompressible. Second, the presence of the wave field is a source of circulation for the fluid component of this compound system. Both of these features are due to the momentum of the waves, defined using the notation $\widetilde{w}$ defined in \eqref{eq: WCIFS} as ${\widetilde\mu}/(D\rho)  \nabla \zeta = \widetilde{w}\nabla \zeta$, which is proportional to the wave slope, $\nabla \zeta$. In particular, the momentum $\widetilde{w}\nabla \zeta$ appears in both the pressure equation in \eqref{pressure-eqn} and the Kelvin-Noether integrand in \eqref{Kel-circCFS-thm}.  
\end{remark}

\begin{corollary}[Total wave-fluid potential vorticity (PV) for WCIFS]\label{corollary:PV}
The evolution equation for the total wave-fluid potential vorticity (PV) follows by taking the exterior derivative of equation \eqref{Kel-circCFS-proof} in the proof of the Kelvin circulation theorem for WCIFS. Namely, 
\begin{align}
(\partial_t+\mathcal{L}_{\bs{\wh v}})\Big({\rm curl}\bs{\wh v} 
+ \nabla \widetilde{w}\times \nabla \zeta\Big) \cdot d\bs{S}
=
-\, \nabla \rho^{-1} \times \nabla p \cdot d\bs{S}
\,.
\label{PV-CFS1}
\end{align}
If we introduce a stream function $\psi$, so that $\bs{\wh v}=\zh \times \nabla \psi$, then the previous equation can be written formally in terms of the 2D Laplacian $\Delta$ and the Jacobian $J(\psi, \rho) := \zh\cdot \nabla \psi\times \nabla \rho=\bs{\wh v}\cdot\nabla\rho$ between functions $\psi$ and $\rho$, then we have 
\begin{align}
\partial_t q + J(\psi \,,\, q )
=
- \,J\Big(\rho^{-1} , \nabla p \Big)
\,,\quad\hbox{with PV defined as}\quad
q:= \Delta \psi  
+ J\big(\widetilde{w} ,  \zeta\big)
\,.
\label{PV-CFS2}
\end{align}

\end{corollary}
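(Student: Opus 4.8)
The plan is to obtain \eqref{PV-CFS1} directly by applying the exterior derivative $d$ to the Kelvin circulation identity \eqref{Kel-circCFS-proof} established in the proof of Theorem \ref{KNthm-WCIFS}. First I would rewrite the circulation one-form using the definition $\widetilde{\mu}=D\rho\widetilde{w}$ from \eqref{Bernoulli-wavemom-def}, so that $\widetilde{\mu}/(D\rho)=\widetilde{w}$ and \eqref{Kel-circCFS-proof} reads $(\partial_t+\mathcal{L}_{\bs{\wh v}})\big(\bs{\wh v}\cdot d\bs{x}+\widetilde{w}\,d\zeta\big)=d{\varpi}-\rho^{-1}dp$. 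The structural fact I will invoke is that the exterior derivative commutes with both $\partial_t$ and the Lie derivative $\mathcal{L}_{\bs{\wh v}}$ (naturality of $d$ together with Cartan's formula), so applying $d$ passes it through the advective operator on the left-hand side. On the right-hand side, $d(d{\varpi})=0$ by $d^2=0$, leaving only $-\,d\big(\rho^{-1}dp\big)=-\,d(\rho^{-1})\wedge dp$.

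Next I would evaluate the resulting two-forms. The vorticity appears from $d(\bs{\wh v}\cdot d\bs{x})=({\rm curl}\,\bs{\wh v})\cdot d\bs{S}$, the wave contribution from $d(\widetilde{w}\,d\zeta)=d\widetilde{w}\wedge d\zeta=(\nabla\widetilde{w}\times\nabla\zeta)\cdot d\bs{S}$, and the baroclinic term from $d(\rho^{-1}dp)=(\nabla\rho^{-1}\times\nabla p)\cdot d\bs{S}$. Assembling these identities produces exactly \eqref{PV-CFS1}, with the total wave-fluid vorticity two-form $\big({\rm curl}\,\bs{\wh v}+\nabla\widetilde{w}\times\nabla\zeta\big)\cdot d\bs{S}$ transported by $(\partial_t+\mathcal{L}_{\bs{\wh v}})$ against the baroclinic source on the right.

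To pass to the scalar form \eqref{PV-CFS2}, I would invoke incompressibility: the pressure constraint forces $D=1$ and hence ${\rm div}\,\bs{\wh v}=0$ (see the $\delta p$ line of \eqref{var-eqns-3}), so $\bs{\wh v}$ admits a stream function $\psi$ with $\bs{\wh v}=\zh\times\nabla\psi$. This gives $\zh\cdot{\rm curl}\,\bs{\wh v}=\Delta\psi$ and $\zh\cdot(\nabla\widetilde{w}\times\nabla\zeta)=J(\widetilde{w},\zeta)$, identifying the scalar PV as $q:=\Delta\psi+J(\widetilde{w},\zeta)$. I would then reduce the two-form equation to its scalar density: for the area form $q\,d\bs{S}$ incompressibility collapses the advective Lie derivative of a top-degree form into pure transport, $(\partial_t+\mathcal{L}_{\bs{\wh v}})(q\,d\bs{S})=\big(\partial_t q+{\rm div}(q\bs{\wh v})\big)\,d\bs{S}=\big(\partial_t q+\bs{\wh v}\cdot\nabla q\big)\,d\bs{S}$, and $\bs{\wh v}\cdot\nabla q=\zh\cdot(\nabla\psi\times\nabla q)=J(\psi,q)$. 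Applying the same dualisation to the right-hand two-form yields the baroclinic source on the right-hand side of \eqref{PV-CFS2}, completing the derivation.

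The calculation involves no deep obstacle; the points requiring care are the commutation of $d$ through the advective operator and the correct dualisation between a two-form and its scalar density (the ``$\cdot\,d\bs{S}$'' notation). The one genuinely load-bearing hypothesis is incompressibility, ${\rm div}\,\bs{\wh v}=0$, which is needed both to introduce the stream function and to collapse $\mathcal{L}_{\bs{\wh v}}$ on the top-degree form into material advection; without it an extra $q\,{\rm div}\,\bs{\wh v}$ term would survive and the PV would no longer be advected as a scalar by $\bs{\wh v}$.
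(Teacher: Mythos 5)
Your proposal is correct and follows exactly the route the paper itself indicates: the corollary's "proof" in the paper is precisely the instruction to take the exterior derivative of \eqref{Kel-circCFS-proof}, use $d^2\varpi=0$ and $d(\rho^{-1}dp)=d\rho^{-1}\wedge dp$, and then dualise to scalars via the stream function under the incompressibility constraint $D=1$, all of which you carry out faithfully (including the correct identification $\widetilde\mu/(D\rho)=\widetilde{w}$ and the collapse of $\mathcal{L}_{\bs{\wh v}}$ on the top-degree form to material advection). The only cosmetic remark is that the paper's right-hand side $J(\rho^{-1},\nabla p)$ should be read as the Jacobian of the two scalar functions $\rho^{-1}$ and $p$, i.e.\ $\zh\cdot\nabla\rho^{-1}\times\nabla p$, which is what your dualisation produces.
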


\begin{remark}
Note that advancing the PV quantity $q$ in time in equation \eqref{PV-CFS2} requires one to advance the entire system of WCIFS equations in \eqref{eq: WCIFS}, as well as the solution of the following  elliptic equation \eqref{pressure-eqn} for the pressure $p$ to complete the evolution algorithm. 
\end{remark}

\subsection{Integral conservation laws for the WCIFS equations}

\paragraph{Spatially varying specific buoyancy.}
The system of equations for the PV and specific buoyancy $(q,\rho^{-1})$ is given by 
 \begin{align}
\partial_t q + J(\psi \,,\, q )
=
- \,J\Big(\rho^{-1} , \nabla p \Big)
\,,\quad
\partial_t \rho^{-1} + J(\psi \,,\, \rho^{-1} )
=
0
\,.
\label{q-rho-eqns}
\end{align}
This system of equations implies that the following integral quantity is conserved under the $(q,\rho^{-1})$ dynamics,
\begin{align}
C_{\Phi,\Psi} := \int \Phi(\rho^{-1}) + q \Psi(\rho^{-1})\, d^2x
\,,
\label{PhiPsiCLs}
\end{align}
for arbitrary differentiable functions $\Phi$ and $\Psi$.

\paragraph{Spatially homogeneous specific buoyancy.}
In the case that the specific buoyancy is initially constant,  $\rho^{-1} = \rho_0^{-1}$, then it will remain constant, and $\nabla \rho^{-1}=0$ will persist throughout the WCIFS domain of flow.  In this case, the $(q,\rho^{-1})$ system \eqref{q-rho-eqns} will reduce to a single equation, $\partial_t q + J(\psi \,,\, q )=0$, describing simple advection of the PV quantity, $q$. Hence, the conserved  integral quantities are the familiar vorticity functionals from the 2D Euler equations, or potential vorticity functionals from the quasigeostrophic (QG) equation. Namely, for a spatially homogeneous initial specific buoyancy, the WCIFS system in \eqref{q-rho-eqns} will conserve the following class of integral quantities 
\begin{align}
C_{\Phi} := \int \Phi(q) \, d^2x
\,,
\label{EnstrophyCL}
\end{align}
for an arbitrary differentiable function $\Phi$.
Thus, the WCIFS integral conservation laws for PV in equations \eqref{PhiPsiCLs} and \eqref{EnstrophyCL} depend on whether the specific buoyancy gradient $(\nabla \rho^{-1})$  vanishes at the initial time.

\paragraph{Energy.}
The conserved WCIFS integrated energy is given by
\begin{align}
e(\bs{\wh v},\rho,\zeta, \widetilde{w}) 
&:=
\int \frac\rho2\big( |\bs{\wh v}|^2 + \widetilde{w}^2\big(1+\epsilon|\nabla\zeta|^2\big)^2\big) 
+ g\rho \zeta \,d^2x\,.
	\label{CFS-erg}
\end{align}
While the energy conservation law may be proven directly from the WCIFS equations in \eqref{eq: WCIFS}, it may be more enlightening to discover this energy via the Legendre transformation of the Lagrangian in the action integral $S$ in \eqref{ActionIntegral-3} and thereby determine the Hamiltonian formulation  and its remarkable properties for the WCIFS system. In particular, the Lie-Poisson bracket in the Hamiltonian formulation of the WCIFS system in the next section will explain the source of the WCIFS conservation laws and their relationships among each other from the viewpoint of the  Hamiltonian structure for the WCIFS system. 

\subsection{Hamiltonian formulation of WCIFS in terms of potential vorticity}

\begin{remark}[WCI FS with constant buoyancy]\label{corollary2:PV}
The simplest form of the WCIFS equations in \eqref{eq: WCIFS} arises when the buoyancy is constant, i.e.,  $\rho=\rho_0$. In that case, the WCIFS equations reduce to 
\begin{align}
\begin{split}
(\partial_t+\mathcal{L}_{\bs{\wh v}})\Big(\bs{\wh v}\cdot d\bs{x} + \widetilde{w} d \zeta\Big) 
&=
d{\varpi} - \frac1\rho_0 dp
\,,
\\
(\partial_t +\mathcal{L}_{\bs{\wh v}})\zeta &=\widehat{w}\big(1+\epsilon|\nabla\zeta|^2\big)=:\widetilde{w}\big(1+\epsilon|\nabla\zeta|^2\big)^2
\\
(\partial_t +\mathcal{L}_{\bs{\wh v}})\widetilde{w} 
	&=
	 - \,g +
	2\epsilon\,{\rm div}  
	\Big( \widetilde{w}^2 \big(1+\epsilon|\nabla\zeta|^2\big) \,\nabla \zeta \Big)
\,,\\
	\hbox{with}\quad \bs{\wh v}&=\zh\times \nabla \psi	
\,,\\   
	\hbox{and}\quad
	(\partial_t +\mathcal{L}_{\bs{\wh v}})\big(1&+\epsilon|\nabla\zeta|^2\big) 
	= 2\epsilon\nabla \zeta\cdot \big(\nabla \left(\widehat{w}\big(1+\epsilon|\nabla\zeta|^2\big)\right) - \zeta_{,j}\nabla \wh v^j\big)
\,,\\   
	\hbox{with}\quad
{\varpi} &:=  \frac12|\bs{\wh v}|^2+\frac12\widetilde{w}^2\big(1+\epsilon|\nabla\zeta|^2\big)^2-g\zeta 
\,.
\end{split}
\label{EulerPoincare-5}
\end{align}
\end{remark}

\begin{corollary}[WCI FS in PV Hamiltonian form]\label{cor:PVbracket}
The fluid dynamical system \eqref{EulerPoincare-5}  for WCIFS with constant buoyancy in terms of PV is a Hamiltonian system whose Poisson bracket is the following sum of a Lie-Poisson bracket for PV as $q:= \Delta \psi + J(\widetilde{w} , \zeta) $ and a canonical bracket for the wave variables $(\zeta,\widetilde{w})$. Namely,
\begin{align}
\frac{df}{dt} = \Big\{ f,e \Big\} =
 \int q\, J \bigg( \frac{\delta f }{ \delta q } , \frac{\delta e }{ \delta q } \bigg)d^2x 
+ \int   \frac{\delta f }{ \delta \zeta } \frac{\delta e }{ \delta \widetilde{w} } 
-  \frac{\delta f }{ \delta \widetilde{w} } \frac{\delta e }{ \delta \zeta } \,d^2x
\,.
	\label{CFS-PB-q}
\end{align}
\end{corollary}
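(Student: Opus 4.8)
The plan is to verify directly that the bracket \eqref{CFS-PB-q} paired with the energy \eqref{CFS-erg} reproduces the constant-buoyancy system \eqref{EulerPoincare-5}, and then to argue that \eqref{CFS-PB-q} is a bona fide Poisson bracket. First I would declare $(q,\zeta,\widetilde{w})$ to be the independent Hamiltonian field variables, with the stream function recovered by inverting the defining relation for the potential vorticity, $\Delta\psi = q - J(\widetilde{w},\zeta)$, so that $\psi = \Delta^{-1}(q - J(\widetilde{w},\zeta))$ and $\bs{\wh v} = \zh\times\nabla\psi$. Normalising $\rho=\rho_0=1$ (harmless, since \eqref{CFS-erg} is homogeneous in $\rho_0$), the Hamiltonian reads $e = \frac12\int|\nabla\psi|^2\,d^2x + \frac12\int\widetilde{w}^2(1+\epsilon|\nabla\zeta|^2)^2\,d^2x + g\int\zeta\,d^2x$, which I would rewrite via $\int|\nabla\psi|^2 = -\int\psi\Delta\psi$ to expose its dependence on the Hamiltonian variables.

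The core computation is the three variational derivatives. The kinetic term gives $\delta e/\delta q = -\psi$ at once. The delicate point, and the main obstacle, is that varying $\zeta$ or $\widetilde{w}$ at fixed $q$ induces a variation of $\psi$ through the constraint, namely $\delta\psi = \Delta^{-1}(-J(\delta\widetilde{w},\zeta) - J(\widetilde{w},\delta\zeta))$; one must carry these induced variations and then use the cyclic identity $\int a\,J(b,c) = \int b\,J(c,a) = \int c\,J(a,b)$ together with integration by parts to collapse the resulting $J(\psi,\cdot)$ terms into advective terms via $J(\psi,\zeta) = \bs{\wh v}\cdot\nabla\zeta$ and $J(\psi,\widetilde{w}) = \bs{\wh v}\cdot\nabla\widetilde{w}$. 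This yields $\delta e/\delta\widetilde{w} = -\bs{\wh v}\cdot\nabla\zeta + \widetilde{w}(1+\epsilon|\nabla\zeta|^2)^2$ and $\delta e/\delta\zeta = \bs{\wh v}\cdot\nabla\widetilde{w} + g - 2\epsilon\,\mathrm{div}(\widetilde{w}^2(1+\epsilon|\nabla\zeta|^2)\nabla\zeta)$; keeping the signs straight and correctly tracking the $\epsilon$-dependent divergence term is where all the care is required.

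Substituting these derivatives into \eqref{CFS-PB-q} and testing against $f = q(\bs{x}_0)$, $f = \zeta(\bs{x}_0)$, and $f = \widetilde{w}(\bs{x}_0)$ in turn, I would read off the equations of motion. The canonical block gives $\partial_t\zeta = \delta e/\delta\widetilde{w}$ and $\partial_t\widetilde{w} = -\delta e/\delta\zeta$, which are precisely the $\zeta$ and $\widetilde{w}$ equations of \eqref{EulerPoincare-5} after writing $(\partial_t+\mathcal{L}_{\bs{\wh v}})$ for scalars as $\partial_t + \bs{\wh v}\cdot\nabla$. The Lie-Poisson block gives $\partial_t q = -J(\psi,q)$, i.e. $\partial_t q + \mathcal{L}_{\bs{\wh v}}q = 0$, which is the constant-buoyancy PV equation already obtained in Corollary \ref{corollary:PV}, the baroclinic source $J(\rho^{-1},\nabla p)$ vanishing because $\nabla\rho^{-1}=0$.

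Finally, to confirm \eqref{CFS-PB-q} is a Poisson bracket I would observe that antisymmetry is manifest, and that the Jacobi identity follows because the bracket is a direct sum of two structures acting on the disjoint variable sets $\{q\}$ and $\{\zeta,\widetilde{w}\}$: the first summand is the standard Lie-Poisson bracket on the dual of the area-preserving vector fields (the 2D Euler vorticity bracket), well known to satisfy Jacobi, while the second is a canonical Darboux bracket, which satisfies it trivially. Since the two pieces share no variables, the mixed contributions in the Jacobi sum vanish and the total bracket inherits the identity. I expect the only genuine labour to lie in the second step, specifically in correctly accounting for the implicit dependence of $\psi$ on the wave variables $(\zeta,\widetilde{w})$ at fixed $q$.
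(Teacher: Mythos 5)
Your proposal is correct and follows essentially the same route as the paper: rewrite the energy as a functional of $(q,\zeta,\widetilde{w})$ using $\psi=\Delta^{-1}(q-J(\widetilde{w},\zeta))$, compute the three variational derivatives (your expressions agree with \eqref{CFS-Ham-q} since $J(\psi,\cdot)=\bs{\wh v}\cdot\nabla(\cdot)$), and substitute into the block-diagonal bracket to recover \eqref{EulerPoincare-5}. Your closing remarks on antisymmetry and the Jacobi identity for the direct sum of the vorticity Lie--Poisson bracket and the canonical wave bracket go slightly beyond what the paper writes out, but are standard and consistent with it.
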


\begin{proof}
We write the energy \eqref{CFS-erg} in terms of PV defined as $q:= \Delta \psi + J(\widetilde{w} , \zeta) $ in  equation \eqref{PV-CFS2}, 
\begin{align}
e(q,\zeta, \widetilde{w}) 
&:=
\int \frac12\Big( \big(q- J(\widetilde{w} , \zeta) \big) (-\Delta^{-1})\big(q- J(\widetilde{w},\zeta)\big) + \widetilde{w}^2\big(1+\epsilon|\nabla\zeta|^2\big)^2 \Big)
+ g \zeta \,d^2x\,.
	\label{CFS-erg-q}
\end{align}
The variational derivatives of the energy $e(q,\zeta, \widetilde{w})$  in \eqref{CFS-erg-q} are given by
\begin{align}
\begin{bmatrix}
\delta e / \delta q  \\ \delta e / \delta \widetilde{w} \\  \delta e / \delta \zeta
\end{bmatrix}
=
\begin{bmatrix}
-\,\psi 
\\ 
 -\,J(\psi,\zeta)+ \widetilde{w}\big(1+\epsilon|\nabla\zeta|^2\big)^2
\\
 J(\psi,\widetilde{w}) 
 + g -\,
	2\epsilon\,{\rm div}  
	\Big( \widetilde{w}^2 \big(1+\epsilon|\nabla\zeta|^2\big) \,\nabla \zeta \Big)
\end{bmatrix}.
	\label{CFS-Ham-q}
\end{align}
Hence, equations \eqref{EulerPoincare-5} become 
\begin{align}
\begin{split}
\frac{\p}{\p t}
\begin{bmatrix}
q \\ \widetilde{w} \\ \zeta
\end{bmatrix}
=
\begin{bmatrix}
\{ q, e \}  \\ \{ \widetilde{w} , e \} \\   \{ \zeta, e \} 
\end{bmatrix}
&= -
\begin{bmatrix}
J(q\,,\,\cdot\,) & 0 & 0
\\
0 & 0 & 1
\\
0 & -1 & 0
\end{bmatrix}
\begin{bmatrix}
\delta e / \delta q  \\ \delta e / \delta \widetilde{w} \\  \delta e / \delta \zeta
\end{bmatrix}
\\&=
-
\begin{bmatrix}
J( \psi , q )  
 \\  
J(\psi,\widetilde{w})  
 + g -\,
	2\epsilon\,{\rm div}  
	\Big( \widetilde{w}^2 \big(1+\epsilon|\nabla\zeta|^2\big) \,\nabla \zeta \Big)
\\ 
J(\psi,\zeta)  - \widetilde{w}\big(1+\epsilon|\nabla\zeta|^2\big)^2
\end{bmatrix}.
\end{split}
	\label{CFS-Ham-q1}
\end{align}

\end{proof}

\begin{proof} One computes the PV bracket between $C_\Phi(q)$ and an arbitrary functional $F(q)$, as 
\[
\Big\{ F, C_\Phi(q) \Big\} =  \int q\, J \bigg( \frac{\delta F }{ \delta q } , \frac{\delta C_\Phi }{ \delta q } \bigg)d^2x
= -\, \int  \frac{\delta F }{ \delta q }\, J \Big( q , \Phi ' (q) \Big)d^2x 
= 0\,,\quad \hbox{for all }F(q)\,,
\]
after an integration by parts in a periodic domain, say. Thus, $C_\Phi(q)$ is a Casimir function for the PV Poisson bracket in \eqref{CFS-Ham-q1}. 

\end{proof}

\begin{proposition}
Sufficient conditions for $(q_e,\zeta_e, \widetilde{w}_e)$ to be an equilibrium solution of \eqref{CFS-Ham-q} arise by requiring that the functional $H_\Phi = e(q,\zeta, \widetilde{w}) + C_\Phi(q)$ would have a critical point at $(q_e,\zeta_e, \widetilde{w}_e)$.
\end{proposition}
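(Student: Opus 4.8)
The plan is to apply the standard energy--Casimir argument. The key structural fact has just been established above: $C_\Phi(q)=\int \Phi(q)\,d^2x$ is a Casimir for the Lie--Poisson part of the bracket in \eqref{CFS-PB-q}, so that $\{F,C_\Phi\}=0$ for every admissible functional $F$. Consequently the dynamics generated by $e$ and by $H_\Phi=e+C_\Phi$ coincide, $\dot F=\{F,e\}=\{F,H_\Phi\}$, and it suffices to show that a critical point of $H_\Phi$ forces the right-hand sides of \eqref{CFS-Ham-q1} to vanish.

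First I would read the three equilibrium conditions directly off \eqref{CFS-Ham-q1}: setting the time derivatives to zero gives $J(q,\delta e/\delta q)=0$, together with $\delta e/\delta\zeta=0$ and $\delta e/\delta\widetilde{w}=0$. Next I would compute the variational derivatives of $H_\Phi$. Since $C_\Phi$ depends on $q$ alone, one has $\delta H_\Phi/\delta\zeta=\delta e/\delta\zeta$, $\delta H_\Phi/\delta\widetilde{w}=\delta e/\delta\widetilde{w}$, and $\delta H_\Phi/\delta q=\delta e/\delta q+\Phi'(q)$. Requiring $H_\Phi$ to have a critical point at $(q_e,\zeta_e,\widetilde{w}_e)$ means that all three of these vanish there. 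The $\zeta$- and $\widetilde{w}$-components then reproduce two of the equilibrium conditions immediately, since the Casimir contributes nothing to them.

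The only point that requires checking is the $q$-equation. The critical-point condition $\delta H_\Phi/\delta q=0$ gives $\delta e/\delta q=-\Phi'(q)$, so that $J(q,\delta e/\delta q)=-J(q,\Phi'(q))$. Because $\Phi'(q)$ is a function of $q$ alone, its gradient $\nabla\Phi'(q)=\Phi''(q)\nabla q$ is everywhere parallel to $\nabla q$, whence $J(q,\Phi'(q))=\zh\cdot\nabla q\times\nabla\Phi'(q)=0$. Thus $J(q,\delta e/\delta q)=0$ and the $q$-equilibrium condition holds automatically. I do not anticipate a genuine obstacle here: the entire content is that the Casimir term is tuned to annihilate the advective PV dynamics, while the canonical wave pair $(\zeta,\widetilde{w})$ is brought to rest by the two conditions $\delta e/\delta\zeta=\delta e/\delta\widetilde{w}=0$, which are left untouched by adding a $q$-only Casimir. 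The mild subtlety worth flagging in the write-up is that these are \emph{sufficient}, not necessary, conditions: a critical point of $H_\Phi$ always yields an equilibrium, but an equilibrium need not arise as a critical point for this particular family of Casimirs.
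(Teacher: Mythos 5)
Your proposal is correct and follows essentially the same route as the paper: the standard energy--Casimir argument, taking the first variation of $H_\Phi = e + C_\Phi$, reading off $\delta e/\delta\zeta = \delta e/\delta\widetilde{w} = 0$ and $\psi_e = \Phi'(q_e)$, and concluding that the right-hand side of \eqref{CFS-Ham-q1} vanishes. You actually spell out the one step the paper leaves implicit --- that $J(q,\Phi'(q)) = \Phi''(q)\,\zh\cdot\nabla q\times\nabla q = 0$ --- and your remark that the conditions are sufficient but not necessary is a worthwhile addition.
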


\begin{proof}
Evaluated at $(q_e,\zeta_e, \widetilde{w}_e)$ a critical point the functional $H_\Phi$ satisfies
\[
\delta H_\Phi = \int (-\psi_e + \Phi'(q_e)\delta q 
+ 
\frac{\delta e}{\delta \zeta}\bigg|_{(q_e,\zeta_e, \widetilde{w}_e)} \delta\zeta
+ 
\frac{\delta e}{\delta \widetilde{w}}\bigg|_{(q_e,\zeta_e, \widetilde{w}_e)} \delta\widetilde{w}\ d^2x
\]
This is sufficient for the right-hand side of equation \eqref{CFS-Ham-q} to vanish and thereby produce an  equilibrium solution. 
\end{proof}

The Lie-Poisson brackets in the Hamiltonian formulations of the system provided here for the case of constant buoyancy, $\rho=\rho_0$, and in the next section for the general case explain the sources of the conservation laws and their relationships among each other from the viewpoint of the  Hamiltonian structure for the system.

\subsection{Hamiltonian formulation of the WCIFS equations}

\paragraph{Legendre transformation.}
By considering the Lagrangian function in the action integral \eqref{ActionIntegral-3}, $\ell(\bs{\wh v},D,\rho,\zeta,w;p)$, one may define the Legendre transformation as the variation with respect to the velocity $\bs{\wh v}$ in \eqref{var-eqns-3}. Namely,
\[
\frac{\delta \ell}{\delta \bs{\wh v}}
= D\rho \bs{\wh v} + \widetilde{\mu}\, \nabla\zeta -  D\nabla\widehat{\phi} + \gamma \nabla\rho 
\,, \quad\hbox{where}\quad \widetilde{\mu} := \frac{D\rho \wh w}{1+\epsilon|\nabla\zeta|^2} = \frac{D\rho(\partial_t+\mathcal{L}_{\bs{\wh v}})\zeta}{\big(1+\epsilon|\nabla\zeta|^2\big)^2}
\,.
\]
Upon defining the fluid momentum as $\bs{m}=D\rho \bs{\wh v}$, the Hamiltonian in these variables is obtained via the following calculation
\begin{align}
\begin{split}
h(\bs{m},D,\rho,\zeta, \widetilde{\mu} ;p) 
&:= \left\langle \bs{m},\bs{\wh v} \right\rangle + \widetilde{\mu}\p_t\zeta - D\p_t \phi + \gamma\p_t\rho
- \ell(\bs{\wh v},D,\rho,\zeta,w;p)
\\&=
\int \frac{|\bs{m}|^2} {2D\rho} + \frac{\widetilde{\mu}^2}{ 2D\rho }\big(1+\epsilon|\nabla\zeta|^2\big)^2 
+ gD\rho \zeta + p(D-1)\,d^2x\,.
\end{split}
	\label{CFS-Ham1}
\end{align}

\paragraph{Conserved energy.}
The Hamiltonian in \eqref{CFS-Ham1} is also the conserved energy for 
the system of equations in \eqref{EulerPoincare-3}.

\paragraph{Variations of the Hamiltonian.}
 In the Hamiltonian variables, the Bernoulli function ${\varpi}$ in \eqref{Bernoulli-wavemom-def} is denoted as
\begin{equation}
	\widetilde{\varpi} = \frac{|\bs{m}|^2}{2D^2\rho^2} + \frac{\widetilde\mu^2\big(1+\epsilon|\nabla\zeta|^2 \big)^2}{2D^2\rho^2} - g\zeta\,.
	\label{Bernoulli-varpi}
\end{equation}
The corresponding variational derivatives of the Hamiltonian  in \eqref{CFS-Ham1} for 
the system of equations in \eqref{EulerPoincare-3} are given by
\begin{equation}
\begin{bmatrix}
 {\delta h}/{\delta {m_j}} \\ {\delta h}/{\delta D} \\ {\delta h}/{\delta \rho} \\
{\delta h}/{\delta \widetilde{\mu}} \\ {\delta h}/{\delta \zeta}
\end{bmatrix}
=  
\begin{bmatrix}
{\frac{m^j}{D\rho}} \\ -\rho\widetilde{\varpi} + p \\
-D\widetilde{\varpi} \\
 \frac{1}{D\rho}\widetilde\mu(1+\epsilon|\nabla\zeta|^2)^2 \\ D\rho g - 2\epsilon\,{\rm div} (\widetilde\mu\widehat w\nabla\zeta) 
 \end{bmatrix}
 =
 \begin{bmatrix}
{v^j} \\  -\rho\widetilde{\varpi} + p \\
 -D\widetilde{\varpi} \\
  \widetilde{w}(1+\epsilon|\nabla\zeta|^2)^2 \\ D\rho g - 2\epsilon\,{\rm div} (\widetilde\mu\widehat w\nabla\zeta) 
 \end{bmatrix}\,.
	\label{Ham-varderivs}
\end{equation}
The system of equations in \eqref{EulerPoincare-3} may now be written in Lie-Poisson form, 
augmented by a symplectic 2-cocycle in the elevation $\zeta$ and its canonical momentum density  $\widetilde\mu$ 
in its \emph{entangled} form as
\begin{align}
\begin{split}
\frac{\p}{\p t}
\begin{bmatrix}
m_i \\ D \\ \rho \\ \widetilde{\mu} \\ \zeta 
\end{bmatrix}
= -
\begin{bmatrix}
\partial_j m_i + m_j \partial_i & D\partial_i & -\rho_{,i}  & \widetilde\mu\partial_i & -\zeta_{,i}
\\
\partial_jD 		&0 & 0  & 0 & 0
\\
\rho_{,j} 			&0 & 0 & 0 & 0
\\
\partial_j\widetilde\mu			&0 & 0 & 0 & 1
\\
\zeta_{,j}			 &0 & 0  & -1 & 0
\end{bmatrix}
\begin{bmatrix}
 {\delta h}/{\delta {m_j}} \\ {\delta h}/{\delta D} \\ {\delta h}/{\delta \rho} \\
{\delta h}/{\delta \widetilde{\mu}} \\ {\delta h}/{\delta \zeta}
\end{bmatrix}
\,.
\end{split}
	\label{CFS-LPB1-brkt}
\end{align}
\begin{remark}[Physical meaning of the model]$\,$\\
The Lie-Poisson structure in \eqref{CFS-LPB1-brkt} reveals the physical meaning of the WCIFS system of equations. Namely, the fluid variables sweep the wave degrees of freedom along the fluid Lagrangian paths, while the wave subsystem evolves and acts back on the fluid circulation as an \emph{internal} force. 
\end{remark}

\begin{remark}[Transformation to the potential flow momentum]$\,$\\
The Poisson operator in the previous formula is block diagonalised by the  transformation $\bs{m}\to \bs{M}=\bs{m}+ \widetilde\mu\nabla \zeta$, which separates it into a direct sum of a Lie-Poisson bracket in $\bs{M},D,\rho$ and a canonical (symplectic) Poisson bracket in $\widetilde\mu$ and $\zeta$. Consequently, the system of equations in \eqref{EulerPoincare-3} may now be written equivalently as a direct sum of a semidirect-product Lie-Poisson bracket in the  fluid variables $(\bs{m},D,\rho)$, plus a symplectic 2-cocycle in the wave variables $(\widetilde\mu,\zeta)$ in its \emph{untangled} form, as
\begin{align}
\begin{split}
\frac{\p}{\p t}
\begin{bmatrix}
M_i \\ D \\ \rho \\ \widetilde{\mu} \\ \zeta 
\end{bmatrix}
= -
\begin{bmatrix}
\partial_j M_i + M_j \partial_i & D\partial_i & -\rho_{,i}  & 0 &0
\\
\partial_jD 		&0 & 0  & 0 & 0
\\
\rho_{,j} 			&0 & 0 & 0 & 0
\\
0			&0 & 0 & 0 & 1
\\
0			 &0 & 0  & -1 & 0
\end{bmatrix}
\begin{bmatrix}
 {\delta h}/{\delta {M_j}} \\ {\delta h}/{\delta D} \\ {\delta h}/{\delta \rho} \\
{\delta h}/{\delta \widetilde{\mu}} \\ {\delta h}/{\delta \zeta}
\end{bmatrix}
\end{split}
	\label{CFS-LPB2-brkt}
\end{align}
Thus, the Poisson bracket block-diagonalises when it is written in terms of the total fluid plus wave momentum, $\bs{M}=\bs{m}+ \widetilde\mu\,\nabla \zeta$.

\end{remark}

The Poisson structure in equation \eqref{CFS-LPB1-brkt} yields the following motion  equation
\begin{align}
\begin{split}
	(\p_t+\mathcal{L}_{\bs{\wh v}})(\bs{m}\cdot d\bs{x}\otimes d^2x) &= -Dd(-\rho\widetilde{\varpi} +p)\otimes d^2x - (D\widetilde{\varpi}) d\rho\otimes d^2x - \widetilde\mu d\big(\widetilde{w}(1+\epsilon|\nabla\zeta|^2)^2\big)\otimes d^2x \\
	&\qquad + \big(Dg\rho - 2\epsilon\,{\rm div} \left(\widetilde\mu\widehat w\nabla\zeta\right)\big)d\zeta\otimes d^2x\,.	
\end{split}
\label{rough-mot-eqn}
\end{align}
If we divide through by $D\rho$, using $(\p_t+\mathcal{L}_{\bs{\wh v}})(D\rho d^2x) = 0$, then we obtain the following motion equation, which agrees with that previously obtained in \eqref{eq: WCIFS},
\begin{align*}
	(\p_t+\mathcal{L}_{\bs{\wh v}})\bs{\wh v}\cdot d\bs{x} 
	&= -\frac{1}{\rho}dp + d\widetilde{\varpi} -\widetilde{w}d\big(\widetilde{w}(1+\epsilon|\nabla\zeta|^2)^2\big)  + gd\zeta
	- \frac{2\epsilon}{D\rho}{\rm div} \left(\widetilde\mu\widehat w\nabla\zeta\right)d\zeta 
	\\
	&= -\frac{1}{\rho}dp + d\left(\frac{|\bs{\wh v}|^2}{2} + \frac{{\widehat w}^2}{2}   \right) 
	-\widetilde{w}d\big(\widetilde{w}(1+\epsilon|\nabla\zeta|^2)^2\big)
	- \frac{2\epsilon}{D\rho}{\rm div} \left(\widetilde\mu\widehat w\nabla\zeta\right)d\zeta.
\end{align*}

\subsection{A one-dimensional WCIFS equation}\label{sec: 1D WCI}

We begin by deriving the one-dimensional equation by applying Hamilton's principle to the following action integral, which is the one-dimensional version of \eqref{ActionIntegral-3},

\begin{align}
\begin{split}
	S = \int \int  D\rho\bigg(\frac{1}{2}\big(\wh v^2+\wh w^2\big)  &- g\zeta\bigg)   
	+ \widetilde{\mu}\Big( \partial_t\zeta  + \wh v\p_x\zeta - {\wh w} \big(1+\epsilon|\p_x\zeta|^2\big)\Big)
	\\
	& + \widehat{\phi}\big(\p_t(D\rho)+\p_x(D\rho \wh v)\big)\,dxdt\,,
\end{split}
	\label{ActionIntegral-1D}
\end{align}
where $\zeta(x,t)$, $\wh{v}(x,t)$, and $\wh{w}(x,t)$ are scalar functions of one-dimensional space and time, $(x,t)$, and we consider the volume form $D\rho$ to be a single variable. Taking variations with respect to each variable gives,
\begin{align}
\begin{split}
	\delta \wh v:&\quad 
	D\rho \wh v\cdot dx + \widetilde\mu d\zeta=  D\rho d\widehat{\phi}
	\,,\\
	\delta(D\rho) :&\quad 	
	(\p_t+\mathcal{L}_{\wh v})\widehat{\phi} = \p_t\widehat{\phi} + {\wh v}\p_x\widehat{\phi}
	= \frac12{\wh v}^2+\frac12{\wh w}^2-g\zeta
	=: \varpi
	\,,\\
	\delta \widehat{\phi} :& \quad 
	(\p_t+\mathcal{L}_{\wh v})(D\rho)=\partial_t (D\rho) + \p_x(D\rho {\wh v}) =0 
	\,,\\
	\delta \widehat w :& \quad
	D\rho \widehat w - \widetilde\mu(1+\epsilon|\p_x\zeta|^2) =0
	\,,\\
	\delta \widetilde\mu :& \quad
	(\p_t+\mathcal{L}_{\wh v})\zeta = \p_t\zeta + {\wh v}\p_x\zeta = \widehat{w}(1+\epsilon|\p_x\zeta|^2)
	\,,\\
	\delta \zeta:&\quad 
 (\p_t+\mathcal{L}_{\wh v})\widetilde\mu =- \,D\rho g + 2\epsilon\p_x \left( \widetilde\mu\widehat w \p_x\zeta\right)
	\,.
\end{split}
	\label{var-eqns-1D}
\end{align}
These relations imply a fluid motion equation
\begin{align*}
	D\rho(\p_t+\mathcal{L}_{\wh v}){\wh v} &= D\rho\,d(\p_t+\mathcal{L}_{\wh v})\widehat{\phi}  - (\p_t+\mathcal{L}_{\wh v})\left(\widetilde\mu\,d\zeta\right) \\
	&= D\rho\, d\varpi - \widetilde\mu d\big( {\wh w}(1+\epsilon|\p_x\zeta|^2) \big) +D\rho gd\zeta-2\epsilon\p_x  \left( D\rho\widetilde w^2(1+\epsilon|\p_x\zeta|^2)\p_x\zeta \right)d\zeta\,,
\end{align*}
or,
\begin{align*}
	(\p_t+\mathcal{L}_{\wh v}){\wh v} &= d\varpi - \frac{\wh w}{1+\epsilon|\p_x\zeta|^2} d\big( {\wh w}(1+\epsilon|\p_x\zeta|^2) \big) + gd\zeta-\frac{2\epsilon}{D\rho}\p_x  \left( D\rho\widetilde w^2(1+\epsilon|\p_x\zeta|^2)\p_x\zeta \right)d\zeta \\
	&= d\left(\frac12\wh v^2 \right) -\frac12 \widetilde w^2d\big((1+\epsilon|\p_x\zeta|^2)^2\big) -\frac{2\epsilon}{D\rho}\p_x  \left( D\rho\widetilde w^2(1+\epsilon|\p_x\zeta|^2)\p_x\zeta \right)d\zeta\,.
\end{align*}
where $\widetilde\mu = D\rho\widetilde w$. Thus we have
\begin{equation}
	\p_t{\wh v}+{\wh v}\p_x{\wh v} = -\frac12 \widetilde w^2\p_x\big((1+\epsilon|\p_x\zeta|^2)^2\big) -\frac{2\epsilon}{D\rho}\p_x  \left( D\rho\widetilde w^2(1+\epsilon|\p_x\zeta|^2)\p_x\zeta \right)\p_x\zeta\,,
\end{equation}
and this equation is to be considered together with
\begin{align*}
	\partial_t D + \p_x(D{\wh v}) &=0\,, \\
	\p_t\rho+{\wh v}\p_x\rho&=0\,, \\
	\p_t\widetilde w + {\wh v}\p_x\widetilde w &= -g +\frac{2\epsilon}{D\rho}\p_x\left( D\rho\widetilde w^2(1+\epsilon|\p_x\zeta|^2)\p_x\zeta \right)\,.
\end{align*}
These one-dimensional WCIFS equations are of interest in their own right and they will be investigated elsewhere. 

\section{Stochastic wave modelling}\label{sec: Stoch WCIFS eqns}
\subsection{Stochastic Advection by Lie Transport (SALT)}

Stochastic advection by Lie transport (SALT) \cite{Holm2015} provides a methodology of stochastically perturbing a continuum model at the level of the action integral. As a result, SALT preserves the Kelvin-Noether circulation theorem. Consider first the three dimensional case where we have a three dimensional fluid velocity field, evaluated on the free surface, denoted by $\bs{\wh u}$. For a deterministic (unconstrained) Lagrangian, $\ell(\bs{\wh u},q)$, depending on the velocity field $\bs{\wh u}$ and advected quantities $q$, we constrain the advected quantities to follow a stochastically perturbed path via a Lagrange multiplier. For models where we are considering incompressible flow, the pressure must act as a Lagrange multiplier to enforce the advected quantity $D$, the volume element, to be constant. More specifically, the advection constraint enforces that the advected quantities obey a stochastic partial differential equation given by
\begin{equation}\label{StochasticAdvectionEquation}
    (\rmd + \mathcal{L}_{\rmd \bs{x}_t})q \coloneqq \rmd q + \mathcal{L}_{\bs{\wh u}}q\,dt + \sum_i\mathcal{L}_{\bs{\tilde \xi}_i}q\circ dW_t^i = 0\,,
\end{equation}
where the vector field $\bs{\wh u}$ has been perturbed in the following way
\begin{equation}\label{Stochastic3DVF}
    \rmd \bs{x}_t = \bs{\wh u}\,dt + \sum_i\bs{\tilde \xi}_i\circ dW_t^i\,.
\end{equation}
After this introduction of this stochastic transport constraint, the action integral becomes a \emph{semi-martingale driven variational principle} \cite{CrisanStreet2020}. Consequently, the pressure Lagrange multiplier must be compatible with the noise introduced in the advection. This is required because one cannot enforce a variable in a stochastic system to remain constant without also requiring the Lagrange multiplier to also be a semi-martingale, in order to control both the deterministic part of the system as well as the random fluctuations. With these constraints, the action integral takes the form
\begin{equation}\label{IncompressibleSALTAction}
    S = \int \ell(\bs{\wh v},q)\,dt + \langle \rmd p, D-1\rangle + \langle \lambda, \rmd q + \mathcal{L}_{\rmd \bs{x}_t}q \rangle\,.
\end{equation}
The application of Hamilton's principle implies an Euler-Poincar\'e equation and, as in \cite{Holm2015} we have a Kelvin-Noether circulation theorem for the stochastic system which is analogous to that of the deterministic system.

For the purposes of our variational wave models, we need a notation for two dimensional advection as well as three dimensional. We recall the notation for the two dimensional velocity field and introduce a new notation for the first two components of the stochastic perturbation as follows
\begin{equation}
    \bs{\wh u} = (\bs{\wh v},\wh w)\,,\quad\hbox{and}\quad \bs{\tilde \xi_i} = (\bs{\xi_i},\xi_{3i})\,.
\end{equation} 
The perturbation of vector field $\bs{\wh v}$ is therefore given by
\begin{equation}\label{Stochastic2DVF}
    \rmd \bs{r}_t = \bs{\wh v}(\bs{r}_t,t)\,dt + \sum_i \bs{\xi_i}(\bs{r}_t)\circ dW_t^i\,.
\end{equation}

\subsection{Stochastic ECWWE}
    Firstly, we derive the free surface boundary condition \eqref{bdyconditions:componentsX} in the stochastic case by applying the operator $\rmd + \mathcal{L}_{\rmd \bs{r}_t}$ to $z-\zeta(\bs{r},t)$ to obtain
    \begin{equation*}
       0=(\rmd + \mathcal{L}_{\rmd \bs{r}_t})(z-\zeta(\bs{r},t)) = \widehat{w}(\bs{r},t)\,dt + \sum_i\xi_{3i}(\bs{r})\circ dW_t^i - \rmd\zeta(\bs{r},t) - \mathcal{L}_{\rmd r}\zeta(\bs{r},t)\,,
    \end{equation*}
    and hence
    \begin{equation}
        \rmd\zeta(\bs{r},t) + \mathcal{L}_{\rmd \bs{r}_t}\zeta(\bs{r},t) 
        = \widehat{w}(\bs{r},t)\,dt + \sum_i\xi_{3i}(\bs{r})\circ dW_t^i \,.
    \end{equation}
    Where the notation $\rmd\bs{r}_t$ in \eqref{Stochastic2DVF} is the path of a Lagrangian coordinate. When we write dependence on $\bs{r}$, we mean that $\bs{r}$ is an Eulerian point which is the pullback of the path defined by \eqref{Stochastic2DVF}. In more informal language, $\bs{r}$ is an Eulerian point along the Lagrangian path $\bs{r}_t$.
    
    We may derive the stochastic ECWW equations by considering the dimensional version of the action integral \eqref{ActionIntegral-FS} where the transport velocity $\bs{\wh v}$ has been perturbed as in \eqref{Stochastic2DVF}. The stochastic action integral is then
    \begin{align}
\begin{split}
	S &= \int \int  
	D\left(\frac{1}{2}\big(|\bs{\wh{v}}|^2+\widehat{w}^2\big)  - g\zeta\right)\,dt   
	+ \lambda\Big( \rmd\zeta  + \mathcal{L}_{\rmd \bs{r}_t}\zeta - \widehat{w}\,dt - \sum_i\xi_{3i}\circ dW_t^i \Big)
	\\
	&\qquad 
	+ \widehat{\phi}\big(\rmd D + \mathcal{L}_{\rmd \bs{r}_t}D\big)
	\,d^2r\,.
\end{split}
	\label{StochasticActionIntegral-FS}
\end{align}
Taking variations of the action integral \eqref{StochasticActionIntegral-FS} yields
\begin{align}
\begin{split}
	\delta \bs{\wh{v}}:&\quad 
	D \bs{\wh{v}}\cdot d\bs{r} + \lambda\,d\zeta 
	=  Dd\widehat{\phi} 
	\quad\Longrightarrow\quad 
	\bs{V}\cdot d\bs{r} :=
	\bs{\wh{v}}\cdot d\bs{r} + \widehat{w}\,d\zeta 
	=  d\widehat{\phi}
	\,,\\ 
	\delta \widehat{w} :&\quad 
	D\widehat{w} - \lambda = 0
	\,,\\
	\delta \lambda:&\quad 
	 \rmd\zeta  + \mathcal{L}_{\rmd \bs{r}_t}\zeta =  \widehat{w}\,dt + \sum_i\xi_{3i}\circ dW_t^i
	\,,\\
	\delta \zeta:&\quad 
	\rmd \lambda + \mathcal{L}_{\rmd \bs{r}_t}\lambda
	= - \,gD\,dt
	\quad\Longrightarrow\quad 
	\partial_t \widehat{w} + \rmd \bs{r}_t\cdot\nabla_{\bs{r}}\widehat{w}
	=
	 - \,g
	\,,\\
	\delta \widehat{\phi} :& \quad 
	\rmd D + \mathcal{L}_{\rmd \bs{r}_t}D =0 
	\,,\\
	\delta D:&\quad 
	\rmd\wh\phi + \mathcal{L}_{\rmd \bs{r}_t}\wh\phi = \rmd\wh\phi + \rmd \bs{r}_t\cdot \nabla_{\bs{r}}\widehat{\phi} 
	= \frac{1}{2}\big(|\bs{\wh{v}}|^2+\widehat{w}^2\big)\,dt  - \zeta\,dt =:\varpi \,dt
	\,.
\end{split}
	\label{var-eqns-StochasticFS}
\end{align}
We may therefore write the stochastic ECWW equations as
\begin{align}
\begin{split}
	\rmd\widehat{\phi}  + \rmd\bs{r}_t\cdot \nabla_{\bs{r}}\widehat{\phi} 
	&= \frac{1}{2}\big(|\widehat{\nabla_{\bs{r}}\phi}|^2
	+ \widehat{w}^2\big)\,dt  
	- g\zeta\,dt
	\,,
	\\
	\rmd\zeta  + \rmd\bs{r}_t\cdot\nabla_{\bs{r}}\zeta 
	&=  \widehat{w}\,dt + \sum_i\xi_{3i}\circ dW_t^i
	\,,
	\\
	\rmd \widehat{w} + \rmd\bs{r}_t\cdot\nabla_{\bs{r}}\widehat{w}
	&=
	 - \,g\,dt
	 \,,
	 \\
	 \rmd D + \mathcal{L}_{\rmd \bs{r}_t}D & =0
	 \,.
\end{split}
	\label{Stochastic-WW-eqns}
\end{align}

As in the deterministic case, these equations imply a Kelvin-Noether theorem as follows
    \begin{align}
\begin{split}
    \rmd \oint_{c(\rmd\bs{r}_t)} (\bs{\wh{v}}\cdot d\bs{r} + \widehat{w}d\zeta) \cdot d\bs{r}
    &= \oint_{c(\rmd\bs{r}_t)}(\rmd +\mathcal{L}_{\rmd\bs{r}_t})
    \big(\bs{\wh{v}}\cdot d\bs{r} + \widehat{w}d\zeta\big)
    \\
    &= \oint_{c(\rmd\bs{r}_t)} (\rmd +\mathcal{L}_{\rmd\bs{r}_t})(\bs{V}\cdot d\bs{r})
    = \oint_{c(\rmd\bs{r}_t)} (\rmd +\mathcal{L}_{\rmd\bs{r}_t})d\widehat{\phi}
    = \oint_{c(\rmd\bs{r}_t)} d\varpi\,dt = 0\,.
\end{split}
	\label{EP-Stochastic-WW-circ-thm}
\end{align}

\subsection{Stochastic WCIFS equations}\label{subsec:Stochastic_WCIFS}

Similarly to the stochastic ECWW equations, we may define stochastic versions of any of the wave-current models we have derived, including the MCWW equations. Here we will demonstrate this for our most complete wave-current model corresponding to the action integral \eqref{ActionIntegral-3}. We may again define the equivalent action integral featuring SALT in order to derive the corresponding stochastic system of equations. 

In the stochastic case, in order to couple the waves and currents we consider the insertion of the stochastic vector field $\rmd x_3 \nabla_{\bs{r}}\zeta$, where $\rmd x_3 = \wh{w}\,dt + \sum_i \xi_{3i} \circ dW_t^i$, into the 1-form $\widetilde\mu\nabla_{\bs{r}}\zeta$. 

The stochastic version of the action integral \eqref{ActionIntegral-3} is therefore given by
\begin{align}
\begin{split}
	S = \int \int & D\rho\left(\frac{1}{2}\big(|\bs{\wh v}|^2+{\widehat w}^2\big)  - g\zeta\right)\,dt   
	- \rmd p(D-1)+ \widehat{\phi}\big(\rmd D + \mathcal{L}_{\rmd\bs{r}_t}D\big) 
	+ \gamma(\rmd \rho + \mathcal{L}_{\rmd\bs{r}_t}\rho)
	\\
	& \widetilde{\mu}\Big( \rmd\zeta  + \mathcal{L}_{\rmd\bs{r}_t}\zeta - {\widehat w} \big(1+\epsilon|\nabla_{\bs{r}}\zeta|^2d\zeta\big)\,dt - \sum_i\xi_{3i}\big(1+\epsilon|\nabla_{\bs{r}}\zeta|^2\big)\circ dW_t^i \Big) \,dxdy\,.
\end{split}
	\label{ActionIntegral-3SALT}
\end{align}

Similar to the application of Hamilton's principle to \eqref{ActionIntegral-3}, variations of \eqref{ActionIntegral-3SALT} are given by
\begin{align}
\begin{split}
	\delta \bs{\wh v}:&\quad 
	D\rho \bs{\wh v}\cdot d\bs{r} + \widetilde{\mu}\,d\zeta =  Dd\widehat{\phi} - {\gamma}d\rho
	\,,\\ 
	\delta \widehat w :&\quad 
	D\rho \widehat w - \widetilde{\mu}\,\big(1+\epsilon|\nabla_{\bs{r}}\zeta|^2\big) =  0
	\,,\\
	\delta \widetilde{\mu}:&\quad 
	 \rmd \zeta  + \mathcal{L}_{\rmd\bs{r}_t}\zeta = \rmd x_3\big(1+\epsilon|\nabla_{\bs{r}}\zeta|^2\big) = {\widehat w}(1+\epsilon|\nabla_{\bs{r}}\zeta|^2)\,dt + \sum_i\xi_{3i}\big(1+\epsilon|\nabla_{\bs{r}}\zeta|^2\big)\circ dW_t^i
	\,,\\
	\delta \zeta:&\quad 
	\rmd \widetilde{\mu} + \mathcal{L}_{\rmd\bs{r}_t}\widetilde{\mu}
	 = - \,D\rho g\,dt + 2\epsilon\,{\rm div}_{\bs{r}} \big( {\rmd x_3} \, \widetilde{\mu}\, \nabla_{\bs{r}} \zeta\big)
	\\
	&\hspace{23mm}=
	 - \,D\rho g\,dt + 2\epsilon\,{\rm div}_{\bs{r}} \big( {\widehat w} \, \widetilde{\mu}\, \nabla_{\bs{r}} \zeta\big)\,dt + \sum_i2\epsilon\,{\rm div}_{\bs{r}} \big( \xi_{3i} \, \widetilde{\mu}\, \nabla_{\bs{r}} \zeta\big)\circ dW_t^i
	\,,\\
	\delta\rho :&\quad 	
	\big(\rmd + \mathcal{L}_{\rmd\bs{r}_t}\big)\left(\frac{\gamma}{D}\right) 
	= \frac12\big(|\bs{\wh v}|^2+{\widehat w}^2\big)\,dt - g\zeta\,dt  =: {\varpi}\,dt
	\,,\\
	\delta D:&\quad 
	\big(\rmd + \mathcal{L}_{\rmd\bs{r}_t}\big)\widehat{\phi} = \rho{\varpi}\,dt - \rmd p 
	\,,\\
	\delta \widehat{\phi} :& \quad 
	\rmd D + \mathcal{L}_{\rmd\bs{r}_t} D =0 
	\,,\\
	\delta p:&\quad 
	D-1 = 0 
	\quad \implies \text{div}_{\bs{r}}\bs{\wh{v}} =0 
	\,,\\
	\delta\gamma :&\quad 
	\big(\rmd + \mathcal{L}_{\rmd\bs{r}_t}\big)\rho=0
	\,.
\end{split}
	\label{var-eqns-3SALT}
\end{align}

We apply the operator $\rmd + \mathcal{L}_{\rmd \bs{r}_t}$ to the first line in \eqref{var-eqns-3SALT}
to find,

\begin{align}
\begin{split}
	D\rho\big(\rmd + \mathcal{L}_{\rmd\bs{r}_t}\big)(\bs{\wh v}\cdot d\bs{r}) 
	&= Dd\big(\rmd + \mathcal{L}_{\rmd\bs{r}_t}\big)\widehat{\phi} 
	- \big(\rmd + \mathcal{L}_{\rmd\bs{r}_t}\big)\gamma d\rho 
	-\big(\rmd + \mathcal{L}_{\rmd\bs{r}_t}\big)(\widetilde\mu d \zeta) \\
	&= D\rho d{\varpi}\,dt - D d(\rmd p) -\widetilde\mu d\bigg(\rmd x_3\big(1+\epsilon|\nabla_{\bs{r}}\zeta|^2\big) \bigg)+ D\rho g\,d\zeta\,dt \\
	&\qquad - 2\epsilon\,{\rm div}\left(\rmd x_3 \, \widetilde{\mu}\, \nabla_{\bs{r}} \zeta \right)d\zeta\,,
\end{split}
\label{EulerPoincare-3SALT}
\end{align}
and thus
\begin{align*}
    (\rmd t + \mathcal{L}_{\rmd\bs{r}_t})(\bs{\wh v}\cdot d\bs{r}) 
	&= -\frac{1}{\rho}d(\rmd p) + \bigg[d\frac{|\bs{\wh v}|^2}{2}  	
	 - \frac12\widetilde{w}^2 d \big(1+\epsilon|\nabla_{\bs{r}}\zeta|^2\big)^2 \bigg] dt
	 \\&\hspace{1cm} - \frac{2\epsilon}{D\rho}\,{\rm div}_{\bs{r}}  
	\left( \rmd x_3 \, \widetilde{\mu}\, \nabla_{\bs{r}} \zeta \right) d\zeta
	\\&\hspace{1cm}
	-\widetilde{w} \sum_i d\left( \xi_{3i}(1+\epsilon|\nabla_{\bs{r}}\zeta|^2) \right) \circ dW_t^i
	\,.
\end{align*}

\begin{remark}[A stochastic Kelvin-Noether theorem]
    We have, from calculations analogous to the deterministic case performed similarly to the above, a stochastic version of the Theorem \ref{KNthm-CFS}. In the stochastic case, this takes the form:
    \begin{align}
        \rmd \oint_{c(\rmd\bs{r}_t)} \Big(\bs{\wh v}\cdot d\bs{x} + \frac{\widetilde\mu}{D\rho}  d \zeta\Big) 
        =
        \oint_{c(\rmd\bs{r}_t)} d{\varpi}\,dt - \frac1\rho d\rmd p
        \,.
    \end{align}
\end{remark}

We may collect the WCIFS SALT equations of motion in \eqref{var-eqns-3SALT}, as follows
\begin{align}
\begin{split}
\rmd \bs{\wh v} + ({\rmd \bs{r}_t}\cdot\nabla_{\bs{r}})\bs{\wh v} 
+ (\nabla_{\bs{r}} {\rmd \bs{r}_t})^T\cdot \bs{\wh v} 
= \nabla_{\bs{r}}\frac{|\bs{\wh v}|^2}{2}\,dt &- \frac1\rho \nabla_{\bs{r}} \rmd p - \frac12\widetilde{w}^2 \nabla_{\bs{r}} \big(1+\epsilon|\nabla_{\bs{r}}\zeta|^2\big)^2\,dt  \\
& - \frac{2\epsilon}{D\rho}\,{\rm div}  \left( {\rmd x_3} \, \widetilde{\mu}\, \nabla_{\bs{r}} \zeta \right)\nabla_{\bs{r}} \zeta \\
&-\widetilde{w} \sum_i \nabla_{\bs{r}}\left( \xi_{3i}(1+\epsilon|\nabla_{\bs{r}}\zeta|^2) \right) \circ dW_t^i \,,
\\
(\rmd + \mathcal{L}_{\rmd \bs{r}_t})(D\,d^2x) &=0
\,,
\\
(\rmd + \mathcal{L}_{\rmd \bs{r}_t})\rho &=0\,,
\\
(\rmd + \mathcal{L}_{\rmd \bs{r}_t})\zeta &=\rmd  x_3\big(1+\epsilon|\nabla_{\bs{r}}\zeta|^2\big)\,,
\\
(\rmd+\mathcal{L}_{\rmd\bs{r}_t})(\widetilde\mu\,d^2x)
	&=
	 \Big(- D\rho g\,dt 
	+ 2\epsilon\,{\rm div}_{\bs{r}} \big( {\rmd x_3} \, \widetilde{\mu}\, \nabla_{\bs{r}} \zeta\big) \Big)\,d^2x
	\,.
\end{split}
\label{EulerPoincare-4SALT}
\end{align}
The properties of these equations will be studied in detail, elsewhere. 

\section{Analytical remarks about variational water-wave models}\label{sec: Stoch analysis}

Recall the equations in (\ref{var-eqns-FS-CWW}), found by varying the action
integral (\ref{ActionIntegral-FS}). These equations may be written in the
form

\begin{equation*}
\begin{split}
\partial _{t}\widehat{\phi }+\boldsymbol{\widehat{v}}\cdot \nabla _{%
\boldsymbol{r}}\widehat{\phi }& =\frac{1}{2}\big(|\boldsymbol{\widehat{v}}%
|^{2}+\widehat{w}^{2}\big)-g\zeta \,, \\
\partial _{t}\zeta +\boldsymbol{\widehat{v}}\cdot \nabla _{\boldsymbol{r}%
}\zeta & =\widehat{w}\,, \\
\partial _{t}\widehat{w}+\boldsymbol{\widehat{v}}\cdot \nabla _{\boldsymbol{r%
}}\widehat{w}& =-\,g \\
\partial _{t}D+\text{div}_{\boldsymbol{r}}(D\boldsymbol{\widehat{v}})& =0\,,
\end{split}%
\end{equation*}%
where 
\begin{align}
\boldsymbol{\widehat{v}}& =\boldsymbol{V}-\widehat{w}\nabla _{\boldsymbol{r}%
}\zeta  \notag \\
& =\nabla _{\boldsymbol{r}}\widehat{\phi }-\widehat{w}\nabla _{\boldsymbol{r}%
}\zeta \,,\quad \hbox{in the irrotational case}\,.  \label{eq: v-hatV}
\end{align}%
Recall that the transport velocity $\boldsymbol{\widehat{v}}$ evolves
according to (\ref{WW-Pressureless2DEuler}), i.e., 
\begin{equation}
{\partial }_{t}\boldsymbol{\widehat{v}}+\boldsymbol{\widehat{v}}\cdot \nabla 
\boldsymbol{\widehat{v}}=0\,,  \label{WW-Pressureless2DEulerb}
\end{equation}%
Of course (\ref{WW-Pressureless2DEulerb}) can be identified as the 2
dimensional inviscid Burger's equation. Under certain conditions on the
initial velocity $\boldsymbol{\widehat{v}}_{0}$, it has unique solution
(possibly only local in time). We sketch below the classical argument for
showing this using the method of characteristics. Define the characteristic
equation given by 
\begin{equation}
\begin{cases}
\frac{d\boldsymbol{r}_{t}}{dt}(\boldsymbol{r})=\boldsymbol{\widehat{v}}(%
\boldsymbol{r}_{t}(\boldsymbol{r}),t)\,,\quad t>0\,, \\ 
\boldsymbol{r}_{t}(\boldsymbol{r})=\boldsymbol{r}\,.%
\end{cases}
\label{IVP-Characteristic}
\end{equation}%
Provided $\boldsymbol{\widehat{v}}$ is sufficiently smooth, the system (\ref%
{IVP-Characteristic}) will have a unique solution. Moreover, from (\ref%
{IVP-Characteristic}) and (\ref{WW-Pressureless2DEulerb}) we deduce, by the
chain rule, that 
\begin{equation*}
\frac{{\partial }}{{\partial }t}\left[ \boldsymbol{\widehat{v}}(\boldsymbol{r%
}_{t}(\boldsymbol{r}),t)\right] =\frac{{\partial }\boldsymbol{\widehat{v}}}{{%
\partial }t}(\boldsymbol{r}_{t}(\boldsymbol{r}),t)+\frac{d\boldsymbol{r}_{t}%
}{dt}(\boldsymbol{r})\cdot \nabla _{\boldsymbol{r}}\boldsymbol{\widehat{v}}(%
\boldsymbol{r}_{t}(\boldsymbol{r}),t)=\left( {\partial }_{t}\boldsymbol{%
\widehat{v}}+\boldsymbol{\widehat{v}}\cdot \nabla _{\boldsymbol{r}}%
\boldsymbol{\widehat{v}}\right) (\boldsymbol{r}_{t}(\boldsymbol{r}),t)=0\,,
\end{equation*}%
so $\boldsymbol{\widehat{v}}=(\wh{v}_1,\wh{v}_2)$ is constant along the characteristics. Thus
the characteristic curves corresponding to (\ref{WW-Pressureless2DEulerb})
are straight lines determined by the initial conditions, given by 
\begin{equation}
\bs{r}_t =
\varphi_{t}\bs{r} \coloneqq \bs{r}
+ \bs{\widehat{v}}_{0}(\boldsymbol{r})t
\,,  \label{straightlines}
\end{equation}
and therefore the follow pull-back relation holds,
\begin{equation}
{\varphi_t}^*\bs{\wh{v}}_t(\bs{r})
:=
\boldsymbol{\widehat{v}}_t({\varphi_t}\bs{r})
=\boldsymbol{\widehat{v}}_{0}(\boldsymbol{r})  \label{constant}
\,,\end{equation}
so that
\begin{equation}
0 = \frac{d \bs{\wh{v}}_0}{dt} = 
\frac{d}{dt}\varphi_t^*\bs{\wh{v}}_t(\bs{r}) 
=
\varphi_t^*\left(\p_t\bs{\wh{v}}_t (\bs{r}) 
+  \frac{\p \bs{\wh{v}}_t(\bs{r}) }{\p\bs{r}}\cdot \bs{\wh{v}}_t(\bs{r})
\right)\,.
\end{equation}
Equations (\ref{straightlines}) and (\ref{constant}) enable us to give an
explicit description of the (classical) solution of (\ref%
{WW-Pressureless2DEulerb}) up to first time $\tau $ at which the characteristic lines cross. The time $\tau $ is the first time
when $\nabla _{\boldsymbol{r}}\left(\varphi_t\bs{r}\right)$ degenerates, in other words the Jacobian of $\varphi_t\bs{r}$ has determinant equal to 0.
Note that equation (\ref{WW-Pressureless2DEulerb}) may have a weak solution beyond $\tau$. The time $\tau$ can be explicitly described in term of the
eigenvalues of the Jacobian $\nabla _{\boldsymbol{r}}\boldsymbol{\widehat{v}}%
_{0}$ of the initial velocity $\boldsymbol{\widehat{v}}_{0}$. We will denote
by $\lambda _{i}(\boldsymbol{r})$, $i=1,2$, the two eigenvalues of $\nabla _{%
\boldsymbol{r}}\boldsymbol{\widehat{v}}_{0}\left( \boldsymbol{r}\right) $.
We introduce the (possibly empty) subset $S$ of the fluid domain $\Omega$ (which we assume to be a closed bounded set
of $\mathbb{R}^{2}$) defined as%
\begin{equation*}
S=\{\boldsymbol{r}\in \Omega:\lambda _{1}(\boldsymbol{r})<0\,,\quad %
\hbox{or}\quad \lambda _{2}(\boldsymbol{r})<0\}.\,
\end{equation*}%
Define $\tau :=\infty $ if $S$ is the empty set, otherwise $\tau =-1/\bar{%
\lambda}$ where\footnote{%
Note that the set $\Omega$ is compact, therefore the two minima $%
\min_{\boldsymbol{r}\in S}\lambda _{1}(\boldsymbol{r})$ as well as $\min_{%
\boldsymbol{r}\in S}\lambda _{2}(\boldsymbol{r})$ are well defined.} 
\begin{equation*}
\bar{\lambda}=\min \left\{ \min_{\boldsymbol{r}\in S}\lambda _{1}(%
\boldsymbol{r}),\min_{\boldsymbol{r}\in S}\lambda _{2}(\boldsymbol{r}%
)\right\}\,,
\end{equation*}
then $\varphi_t:\Omega\rightarrow \Omega$ is a diffeomorphism for any $t\in \lbrack 0,\tau
)$. This statement is immediate after observing that the eigenvalues of $%
\nabla _{\bs{r}}\left(\varphi_t\bs{r}\right)$ are given by $(1+t\lambda _{1}(\bs{r}),1+t\lambda _{2}(\bs{r}))$ and cannot become zero before time $\tau$. In other words, the Lagrangian flow is well defined and
differentiable in both the spatial and temporal variable. Let $\varphi_t^{-1}:\Omega\rightarrow\Omega$ be the inverse of the Lagrangian flow. From (\ref{constant}) we deduce that (\ref{WW-Pressureless2DEulerb}) has a (unique) solution given by the push-forward of the initial velocity ${\varphi_t}_*\bs{\widehat{v}}_{0}(\bs{r})$, i.e.,
\begin{equation*}
\bs{\wh{v}}(\bs{r},t)=\bs{\wh{v}}_{0}(\varphi_t^{-1}\bs{r} )
=: {\varphi_t}_*\boldsymbol{\widehat{v}}_{0}(\boldsymbol{r}),
~~~\boldsymbol{r}\in 
\Omega_{t}\text{, }t\in \lbrack 0,\tau )\text{.}
\end{equation*}%
From (\ref{var-eqns-FS-CWW}), we also deduce that

\begin{equation}
    \begin{aligned}
        \frac{d}{dt}\wh{\phi}(\varphi_t\bs{r},t) &= e(\varphi_t\bs{r},t) \,,\\
        \frac{d}{dt}\zeta (\varphi_t\bs{r},t) &=\wh{w}(\varphi_t\bs{r},t) \,,\\
        \frac{d}{dt}\wh{w}(\varphi_t\bs{r},t) &=-g
        \,,\\
        \frac{d}{dt}\big(D(\varphi_t\bs{r},t)&d^2(\varphi_t\bs{r})\big) = 0\,,
    \end{aligned}
\end{equation}
where $e(\varphi_t\bs{r},t)=\frac{1}{2}\left( \big(|\boldsymbol{\widehat{v}}%
|^{2}+\widehat{w}^{2}\big)-g\zeta \right) \left( \varphi_t\bs{r},t\right) $. It follows that 
the solution for the full system of variables is obtained by integrations 
along the characteristics, as
\begin{equation}
    \begin{aligned}
        \wh{\phi}(\boldsymbol{r},t) &=\wh{\phi}(\varphi_t^{-1}\bs{r},0)+\int_{0}^{t}e(\varphi_{s-t}\bs{r} ,t)\,ds\,,\quad\hbox{where}\quad \varphi_s\varphi_t^{-1} = \varphi_{s-t}\,,\\
        \wh{w}(\boldsymbol{r},t) &=\wh{w}_{0}(\varphi_t^{-1}\bs{r} )-gt\,, \\
        \zeta (\bs{r},t) &=\zeta (\varphi_t^{-1}\bs{r} ,0)+\int_{0}^{t}\wh{w}(\varphi_{s-t}\bs{r},t)\,ds\,, \\
        D(\bs{r},t)d^2\bs{r} 
        &= {\varphi_t}_* \big( D(\bs{r},0)d^2\bs{r}\big)
        = D(\varphi_t^{-1}\bs{r},0)d^2(\varphi_t^{-1}\bs{r})
        \,,\quad\hbox{since}\quad \varphi_0 = \rm{Id}\,.
    \end{aligned}
\end{equation}
Thus, the explicit solution for $\bs{\wh v}$ corresponding to the characteristics also provides an explicit solution for the full system of variables. 

\paragraph{An Eulerian approach.}
For an alternative Eulerian approach, notice that taking the 
curl$_{\boldsymbol{r}}$ of the evolution equation (%
\ref{WW-Pressureless2DEuler}) for the transport velocity $\boldsymbol{%
\widehat{v}}$ defined in equation (\ref{eq: v-hatV}) implies that ${\widehat{%
\omega }}=\mathrm{curl}_{\boldsymbol{r}}\,\boldsymbol{\widehat{v}}=J({%
\widehat{w}},\zeta )$ satisfies, 
\begin{equation}
{\partial }_{t}\widehat{\omega }+\boldsymbol{\widehat{v}}\cdot \nabla _{%
\boldsymbol{r}}\widehat{\omega }=0\,.  \label{eq: wavevorticity}
\end{equation}%
Consequently, the continuity equation for $D$ implies 
\begin{equation}
{\partial }_{t}(D\widehat{\omega })+\mathrm{div}_{\boldsymbol{r}}(D\widehat{%
\omega }\boldsymbol{\widehat{v}})=0\,,  \label{eq: wavePV}
\end{equation}%
and the volume integral $\int D\widehat{\omega }\,d^{2}r$ is preserved for
tangential boundary conditions on $\boldsymbol{\widehat{v}}$. Another way of
writing (\ref{eq: wavePV}) is 
\begin{equation}
({\partial }_{t}+\mathcal{L}_{\boldsymbol{\widehat{v}}})(Dd\widehat{w}\wedge
d\zeta )=0\,.  \label{eq: wavePV2}
\end{equation}%
Explicit solutions for (\ref{eq: wavevorticity}), (\ref{eq: wavePV}) and (
\ref{eq: wavePV2}) can be deduced in a similar manner as above. 

An alternative approach, possibly to show well-posedness in more general
spaces is to attempt an analysis based on energy estimates. For this one
needs to analyze the pair of equations 
\begin{eqnarray*}
{\partial }_{t}\boldsymbol{\widehat{v}}+\boldsymbol{\widehat{v}}\cdot \nabla 
\boldsymbol{\widehat{v}} &=&0\,\, \\
\partial _{t}D_{t}+\text{div}_{\boldsymbol{r}}(D_{t}\boldsymbol{\widehat{v}}%
) &=&0\,
\end{eqnarray*}%
on the domain $\Omega$ of the measure $D_{t}d^{2}r$. For this we
can use as apriori estimates the conserved energy for this system given by 
\begin{equation*}
\begin{split}
h(\boldsymbol{M},D,\lambda ,\zeta )& =\int \boldsymbol{M}\cdot \widehat{%
\boldsymbol{v}}+\lambda {\partial }_{t}\zeta \,d^{2}r-\ell (\boldsymbol{%
\widehat{v}},D,\widehat{\phi },{\widehat{w}},\zeta ;\lambda ) \\
& =\int \frac{1}{2D}\big|\boldsymbol{M}-\lambda \nabla _{\boldsymbol{r}%
}\zeta \big|^{2}+\frac{\lambda ^{2}}{2D}+gD\zeta \,d^{2}r \\
& =\int \left( \frac{1}{2}|\widehat{\boldsymbol{v}}|^{2}+\frac{1}{2}|%
\widehat{w}|^{2}+g\zeta \right) \,D\,d^{2}r \\
& =\int \left( \frac{1}{2}|\widehat{\nabla _{\boldsymbol{r}}\phi }|^{2}+%
\frac{1}{2}|\widehat{w}|^{2}+g\zeta \right) \,D\,d^{2}r\,.
\end{split}%
\end{equation*}%
as well as Sobolev norm estimates deduced from (\ref{var-eqns-FS-CWW}). For
this we introduce the (time-dependent) $L^{p}$ norm of some function $f$
with respect to the measure $Dd^{2}r$ as 
\begin{equation*}
\Vert f\Vert _{D_{t},p}=\left( \int_{\Omega}f^{p}D_{t}d^{2}r\right)
^{1/p}\,.
\end{equation*}%
and can show that $\Vert \boldsymbol{\widehat{v}}\Vert _{D,p}$ is conserved.
Moreover, via a standard Gr\"{o}nwall/Young inequality argument one shows
that $\Vert \widehat{\phi }\Vert _{D,p},$ $\Vert \zeta \Vert _{D,p}$, $\Vert 
\widehat{w}\Vert _{D,p}$ and $\Vert D\Vert _{D,2}$ are controlled. Controls on higher
Sobolev norms are also possible. 

For existence, one follows DiPerna and Lions \cite{DiPerna1989}, to define a
sequence $(\boldsymbol{\widehat{v}}^{n},D_{t}^{n})_{n\in \mathbb{N}}$ by 
\begin{eqnarray*}
{\partial }_{t}\boldsymbol{\widehat{v}}^{n}+\boldsymbol{\widehat{v}}%
^{n-1}\cdot \nabla _{\boldsymbol{r}}\boldsymbol{\widehat{v}}^{n} &=&0\, \\
\partial _{t}D_{t}^{n}+\text{div}_{\boldsymbol{r}}(D_{t}^{n}\boldsymbol{%
\widehat{v}}^{n}) &=&0
\end{eqnarray*}%
For each $\boldsymbol{\widehat{v}}^{n-1}$, we may apply the results from \cite%
{DiPerna1989} (Theorem III.2) to prove existence of $\boldsymbol{\widehat{v}}%
^{n}$. We would need to prove that $\boldsymbol{\widehat{v}}^{n}$ satisfies
the relevant bounds to allow us to apply the theorem again and iterate this
process. We then show that the sequence is relatively compact in a suitably chosen Sobolev space. 

\begin{remark}
    According to remark \ref{subsec:CWWEPressure}, when considering the ECWWE equations with pressure as in section \ref{subsec:CWWEPressure}, the transport velocity evolves according to the 2D Euler equation and the system thus inherits the analytical properties of this equation.
\end{remark}

\section{Future work}\label{sec: Conclude}
The augmented water-wave problem which has been introduced here opens doors for new analytical results, as well as interesting directions for numerical studies of wave-current interaction. For example: 
\begin{enumerate}
\item
Analytical properties of the wave-current interaction equations introduced here are unknown, and the extended version of the classical water-wave equations opens the door for new analytical results for CWWE.
\item
The incorporation of surface tension into this framework is a potentially interesting issue. Other physical approximations commonly made to derive other well known water-wave equations (KdV, KP, etc.) may also be considered within this framework.
\item
Further study of the stochastic ECWW equations in section \ref{sec: Stoch WCIFS eqns} and their stochastic ACWW and WCI FS versions would be worthwhile. In particular, section \ref{sec: Stoch WCIFS eqns} introduces a stochastic version of the well-studied classical water-wave model which has been stochastically perturbed in a way which preserves many of its desirable fluid dynamics properties. In particular, this model would allow us to consider a stochastic CWW theory based on a Dirichlet-Neumann operator for 3D irrotational SALT flows. Introducing noise into the parameter $\epsilon$ in the ACWW and WCI FS models may also be interesting, since doing so would enable investigations of the probabilistic nature of wave-current interactions using stochastic versions of wave-current minimal coupling (WCMC).
\end{enumerate}
We expect to pursue all of these research directions in future work. 

\section*{Acknowledgements}
During this work, O. D. Street has been supported by an EPSRC studentship as part of the Centre for Doctoral Training in the Mathematics of Planet Earth (grant number EP/L016613/1), and both D. D. Holm and D. Crisan have been partially supported by European Research Council (ERC) Synergy grant STUOD - DLV-856408.

We would like to thank our friends and colleagues for their encouraging comments, especially S. Patching, E. Luesink, R. Hu, E. Titi, J. C. McWilliams, and B. Fox-Kemper.

Last, but certainly not least, we would like to place on record our sincere thanks to H. Segur, H. Dumpty and P. Box, as always, for providing wise and light-hearted guidance of our expectations during the course of our mathematical investigations. 

\appendix
\section{Transformation theory for fluid dynamics -- Kelvin theorem}\label{app: FluidTransTheory}

The Kelvin-Noether theorem is the statement of Newton's Law for fluid mass distributed on a material loop.
\begin{align}
\frac{d}{dt}\hspace{-5mm}\oint_{\hspace{5mm}c_t={\phi_t }c_0}\hspace{-5mm} 
\mathbf{v}( t,\mathbf{x})\cdot d \mathbf{x}
= \oint_{c_t }
\underbrace{\ 
\mathbf{f} \cdot d \mathbf{x}\  
}_{\sf Newton's\ Law}
\,.
\label{KN-state}
\end{align}
For a discussion of the geometric mechanics underlying the deterministic case, see, e.g., \cite{HMR1998}. 
For a discussion of the geometric mechanics underlying the stochastic case, see, e.g., \cite{Holm2015,Holm2020}.\medskip

\begin{proof}
The \emph{deterministic} Kelvin-Noether theorem may be proved, as follows. 
Consider a closed loop $c_t $ moving with the material flow as \[c_t = \phi_t c_0\,.\]
The Eulerian velocity of the loop is \[\frac{d}{dt}\phi_t(x) = \phi_t^*u(t,x) = u(t,\phi_t (x))\,.\]
This equation illustrates the operation of ``pull-back'' $\phi_t^*u(t,x) $ of the Eulerian fluid velocity $u(t,x)$ 
by the material flow map $\phi_t $.

Compute the time derivative of the integral of the momentum/mass (impulse) around a time-dependent  loop $c_t={\phi_t }c_0$ moving with the flow map, $\phi_t $, as

\begin{align}
\begin{split}
\frac{d}{dt}\hspace{-5mm}\oint_{\hspace{5mm}c_t={\phi_t }c_0}\hspace{-5mm} 
\mathbf{v}( t,\mathbf{x})\cdot d \mathbf{x}
&= \oint_{c_0} \frac{d}{dt}\Big(\phi_t^*\big(\mathbf{v}( t,\mathbf{x}) \cdot d \mathbf{x}\big) \Big)
\\&=  \oint_{ c_0} 
\underbrace{
 \phi_t^*\Big(( \partial_t+ \mathcal{L}_{u(t,\mathbf{x})}) (\mathbf{v} \cdot d \mathbf{x})\Big)
}_{\sf Lie\ derivative\ defined\ via\ chain\ rule}
\\&=  \oint_{\phi_t c_0=c_t} 
( \partial_t+ \mathcal{L}_{u(t,\mathbf{x})}) (\mathbf{v} \cdot d \mathbf{x})
\\&= \oint_{c_t }
\underbrace{\ 
\mathbf{f} \cdot d \mathbf{x}\  
}_{\sf Newton's\ Law}
{=}  \oint_{ c_0}  \phi_t^*\Big(\underbrace{\ {\mathbf{f} \cdot d \mathbf{x}}\ }_{\sf {Motion\ eqn}}\Big)
\end{split}
\label{KN-proof}
\end{align}
This is the Kelvin-Noether theorem of  \cite{HMR1998}. When the covector field $\mathbf{v}( t,\mathbf{x})$ is interpreted as the momentum per unit mass in the fixed Eulerian inertial frame, then the last line states Newton's Law for fluid mass distributed on a material loop. When the covector field $\mathbf{f}( t,\mathbf{x})=-\nabla p$ is a pressure-gradient force per unit mass in the Eulerian inertial frame, then the last line states Kelvin's theorem for the conservation of circulation in ideal Euler fluid dynamics with spatially homogeneous density.

\end{proof}

Let us delve more deeply into the statement in the second line of the proof of the Kelvin-Noether theorem that ``the Lie derivative is defined via the chain rule". More specifically, the Lie derivative is defined by the time derivative of the pull-back $\phi_t^*$ of the flow map $\phi_t$ acting on the circulation integrand (which is a 1-form) by using the chain rule. The pull-back is also used in the discussion of the Burgers equation in section \ref{sec: Stoch analysis}. Let's do the corresponding calculation for the Kelvin-Noether theorem. 

Integration in time of the pull-back relation in the proof,
\[
\frac{d}{dt}\phi_t(x) = \phi_t^*u(t,x) = u(t,\phi_t (x))
\,,\] 
yields the smooth invertible map, $\phi_t\in {\rm Diff}(M)$, by integration of the characteristic curves of the smooth time-dependent vector field $u_t\in\mathfrak{X}(M)$ acting on smooth functions $f\in C^\infty(M)$ defined on a smooth manifold, $M$. In this situation, one says that the map $\phi_t$ is \emph{generated} by the vector field $u_t$. The pull-back relation can be written equivalently, as a push-forward, denoted as
\[
u_t = {\phi_t}_*\dot{\phi_t} = {\phi_t^{-1}}^*\dot{\phi_t} = \dot{\phi_t}\phi_t^{-1}
\,,\] 
in which the operation of push-forward of a smooth function $f$ by a smooth invertible map $\phi_t$ depending on a parameter $t$ is defined as the inverse of the pull-back, which may be written as ${\phi_t}_*={\phi_t^{-1}}^*$.

We may now understand the first step in the proof above as the change of variables in the loop integral to transform the loop $c_t={\phi_t }c_0$ moving under the flow map $\phi_t$ in the fixed frame, into a fixed loop ${\phi_t }^{-1}c_t=c_0$ in the moving frame of the flow map; while also transforming the integrand $(\mathbf{v}( t,\mathbf{x}) \cdot d \mathbf{x})$ in the loop integral from the fixed frame into the moving frame of the flow map. This transformation of the Kelvin circulation loop integral into the frame in which the moving loop is fixed allows the time derivative to commute with integration around the loop. Consequently, the time derivative comes inside the integral to act on the transforned integrand, which is now in the moving frame of the flow map $\phi_t$. 

The second step in the proof above defines the Lie derivative as \cite{HMR1998}
\begin{align}
\begin{split}
\frac{d}{dt}\Big(\phi_t^*\big(\mathbf{v}( t,\mathbf{x}) \cdot d \mathbf{x}\big) \Big)
&=: 
 \phi_t^*\Big(\big( \partial_t+ \mathcal{L}_{\dot{\phi_t}\phi_t^{-1}}\big) (\mathbf{v} \cdot d \mathbf{x})\Big)
\\&=: 
 \phi_t^*\Big(\big( \partial_t+ \mathcal{L}_{u(t,\mathbf{x})}\big) (\mathbf{v} \cdot d \mathbf{x})\Big)
\\&=: 
 \phi_t^*\bigg(\Big( \partial_t \mathbf{v} + (\mathbf{u}\cdot\nabla)\mathbf{v} 
 + v_j\nabla u^j  \Big)\bigg)
 \cdot d \mathbf{x}
\end{split}
\label{LieDeriv-1form-def}
\end{align}
To finish the proof of the Kelvin-Noether theorem in \eqref{KN-proof}, one transforms the loop integral back into the fixed frame, in which the loop moves with the flow map and the integrand is fixed. 

\begin{remark}
The Lie derivative of a differential k-form has the same expression in any coordinate system, even in a moving coordinate system. In particular, this is true for functions (0-forms), circulation 1-forms and mass density 2-forms in 2D. 
\end{remark}

\paragraph{Lie derivatives in the hat formulation.}
The transformation to the $\wh{f}$ notation in \eqref{hat-notationX} evaluates an arbitrary flow variable $f$ on the free surface,
\begin{align}
\wh{f}(\bs{r},t) = f(\bs{r},z,t)\quad \hbox{on}\quad z = \zeta(\bs{r},t)\,.
	\label{hat-notationAppendix}
\end{align}

For functions (0-forms) the hat-transformation evolves according to 
\begin{align}
\begin{split}
\frac{d}{dt}\phi_t^*\Big(\wh{f}(\bs{r},t)\Big) 
&=
\phi_t^*\Big(\big(\p_t+  \mathcal{L}_{\bs{\wh{v}}}\big)\wh{f}\Big)
= 
\phi_t^*\Big(\p_t\wh{f} +  \bs{\wh{v}}(\bs{r},t)\cdot\nabla_{\bs{r}}\wh{f}\Big)
\\
\Big[\frac{d}{dt}\phi_t^*\Big(f(\bs{x},t)\Big) \Big]_{z=\zeta(\bs{r},t)}
&= 
\Big[\phi_t^*\Big( 
\p_t {f} +  {\,f_z\,}\p_t\zeta
+ \bs{{v}}(\bs{x},t)\cdot
\big(\nabla_{\bs{r}}{f}+ {\,f_z\,}\nabla_{\bs{r}}\zeta(\bs{r},t)\big)\Big)
\Big]_{z=\zeta(\bs{r},t)}
\\&= 
\Big[\phi_t^*\Big( 
\p_t {f} + \bs{{v}}(\bs{x},t)\cdot \nabla_{\bs{r}}{f}
+ {\,f_z\,}\big(\p_t\zeta
+ \bs{{v}}(\bs{x},t)\cdot\nabla_{\bs{r}}\zeta(\bs{r},t)\big)\Big)
\Big]_{z=\zeta(\bs{r},t)}
\\&= 
\Big[\phi_t^*\Big( 
\p_t {f} + \bs{{v}}(\bs{x},t)\cdot \nabla_{\bs{r}}{f}
+ {\,f_z\,}\wh{w}\Big)
\Big]_{z=\zeta(\bs{r},t)}
\\&= 
\Big[\phi_t^*\Big(\big(\p_t+  \mathcal{L}_{\bs{v}}\big){f}\Big)
\Big]_{z=\zeta(\bs{r},t)}
\\&= 
\phi_t^*\Big(\big(\p_t+  \mathcal{L}_{\bs{\wh{v}}}\big)\wh{f}\Big)
\\&= 
\frac{d}{dt}\phi_t^*\Big(\wh{f}(\bs{r},t)\Big) 
\end{split}
\label{LieDeriv-0form}
\end{align}
Thus, equation \eqref{hat-advectionX} is recovered for 0-forms
\[
\big(\p_t+  \mathcal{L}_{\bs{\wh{v}}}\big)\wh{f} 
=
\Big[\big(\p_t+  \mathcal{L}_{\bs{v}}\big){f}\Big]_{z=\zeta(\bs{r},t)}
\]
By the product rule for the pull-back, this calculation also applies to 1-forms and 2-forms, so we have 
\begin{align}
\begin{split}
\frac{d}{dt}\phi_t^*\Big(\bs{\wh{V}}(\bs{r},t)\cdot d\bs{r}\Big) 
&=
\phi_t^*\Big(\big(\p_t+  \mathcal{L}_{\bs{\wh{v}}}\big)\big(\bs{\wh{V}}\cdot d\bs{r}\big)\Big)
\\&= 
\phi_t^*\Big(\big( \p_t\bs{\wh{V}}+  (\bs{\wh{v}}\cdot\nabla_{\bs{r}}\big)\bs{\wh{V}}
+ \wh{V}_j\nabla_{\bs{r}}\wh{v}^j\big)\cdot d\bs{r} \Big)
\\
\Big[\frac{d}{dt}\phi_t^*\Big(\bs{V}(\bs{x},t)\cdot d\bs{x}\Big) \Big]_{z=\zeta(\bs{r},t)}
&= 
\Big[\phi_t^*\Big(\big(\p_t+  \mathcal{L}_{\bs{v}}\big)\big(\bs{V}(\bs{x},t)\cdot d\bs{x}\big)\Big)
\Big]_{z=\zeta(\bs{r},t)}
\\&= 
\Big[\phi_t^*\Big( 
\big(\p_t \bs{V} + \bs{{v}}(\bs{x},t)\cdot \nabla_{\bs{r}}\bs{V}
+ {\,\bs{V}_z\,}\wh{w} + V_j\nabla v^j \big)\cdot d\bs{x} \Big)
\Big]_{z=\zeta(\bs{r},t)}
\\&= 
\phi_t^*\Big(\big(\p_t+  \mathcal{L}_{\bs{\wh{v}}}\big)\big(\bs{\wh{V}}\cdot d\bs{r}\big)\Big)
\\&= 
\frac{d}{dt}\phi_t^*\Big(\bs{\wh{V}}(\bs{r},t)\cdot d\bs{r}\Big) 
\end{split}
\end{align}
Thus, for 1-forms we have a formula which will project the Kelvin theorem onto the free surface. Namely,
\[
\big(\p_t+  \mathcal{L}_{\bs{\wh{v}}}\big)\big(\bs{\wh{V}}\cdot d\bs{r}\big)
=
\Big[\Big(\big(\p_t+  \mathcal{L}_{\bs{v}}\big)\big(\bs{V}(\bs{x},t)\cdot d\bs{x}\big)
\Big]_{z=\zeta(\bs{r},t)}
\]
Finally, for 2-forms we have the continuity equation on the free surface,
\begin{align}
\begin{split}
\Big[\frac{d}{dt}\phi_t^*\big(\wh{\rho}\,d^2r\big)\Big]_{t=0}
&=
\big(\p_t+  \mathcal{L}_{\bs{\wh{v}}}\big)\big(\wh{\rho}\,d^2r\big)
=
\big(\p_t\wh{\rho} + \bs{\wh{v}}\cdot \nabla_{\bs{r}} \wh{\rho} + \wh{\rho}\, \nabla_{\bs{r}}\cdot \bs{\wh{v}}\big)d^2r
\\&=
\big(\p_t\wh{\rho} +  \nabla_{\bs{r}}\cdot (\wh{\rho}\bs{\wh{v}})\big)d^2r
=
0\,.
\end{split}
\label{LieDeriv-contin}
\end{align}


\section{Hamilton's principle for 3D fluid dynamics with a free surface}\label{app:ActionIntBdy-1}

In this appendix, we will re-derive the 3D Euler fluid equations \eqref{motioneqn:componentsX} and \eqref{advectioneqns} from a constrained variational approach for dynamics on a free surface. In this setting, we will be able to continue modelling the free surface equations. 
In particular, in Hamilton's principle we will \emph{constrain} the action integral by applying what we have learned in the present section about the Eulerian equations of irrotational free-surface motion to obtain the ECWWE \eqref{var-eqns-FS} for the two-dimensional velocity fields $\bs{V}$ and $\bs{\wh{v}}$ on the free surface. 

This approach via Hamilton's principle will also enable us to derive equations for fluid dynamic flows on a free surface with vorticity, non-hydrostatic pressure, and spatially varying buoyancy. In the action integral, the wave variables will be regarded as field variables interacting with the fluid variables. After introducing a wave-current ``minimal-coupling'' Ansatz reminiscent of the coupling of a charged fluid to an electromagnetic field \cite{HMR1998}, we will show that this system of equations can be closed and that the wave variables will be able to generate circulation in the fluid. The resulting coupled equations will model a sort of Craik-Leibovich wave-current interaction on the free surface.\bigskip

Consider an action integral defined by
\begin{align}
\begin{split}
	S = \int \int & D\rho\left(\frac{1}{2}|\bs{u}|^2 -gz\right) - p(D-1) - \mu (\partial_t + \bs{u}\cdot\nabla)(\zeta-z) \\
	&\qquad+ \varphi(\partial_t D + \text{div}(D\bs{u})) + \gamma(\partial_t\rho + \bs{u}\cdot\nabla\rho) \,d^3x\,dt
	\,.
\end{split}
	\label{ActionIntegral-1X}
\end{align}
The Lagrange multipliers $\mu$, $\varphi$, and $\gamma$ impose the dynamical constraints in \eqref{advectioneqns} as pioneered in Clebsch \cite{Clebsch1859}. From left to right, the terms in \eqref{ActionIntegral-1X} are: the difference between kinetic and potential energies, the incompressible flow constraint, the Clebsch constraint that the quantity $(\zeta - z)$ is advected (i.e. particles on the surface remain so), and two more Clebsch constraints which impose advection dynamics on $D$ as a density and $\rho$ as a scalar function, respectively.
\begin{remark}
	The action integral in \eqref{ActionIntegral-1X} makes sense physically, as long as the free surface $z=\zeta(\bs{r},t)$ is a graph, so that the magnitude of the elevation slope $|\nabla_{\bs{r}}\zeta|$ remains bounded. Hence, we assume that no wave breaking will occur in the underlying fluid model during the temporal interval of the flow.
\end{remark}

\paragraph{Hamilton's principle.}
Applying Hamilton's principle $\delta S = 0$ to the constrained action integral in \eqref{ActionIntegral-1X} yields the following variations,
\begin{equation*}
\begin{aligned}
	0 = \delta S = \int\int & \delta D \bigg(\rho\left(\frac{1}{2}|\bs{u}|^2 -gz\right)
	- p - (\partial_t\varphi + \bs{u}\cdot \nabla\varphi)\bigg)+ \delta p(D-1) \\
	&\qquad + \delta\rho \bigg( D\left( \frac{1}{2}|\bs{u}|^2-gz \right) - (\partial_t\gamma + \text{div}(\gamma\bs{u}))\bigg) \\
	&\qquad +\delta\bs{u}\cdot(D\rho\bs{u} - \mu\nabla(\zeta-z) - D\nabla\varphi + \gamma\nabla\rho) \\
	&\qquad  + \delta(\zeta-z)(\partial_t\mu + \text{div}(\mu\bs{u})) - (\partial_t(\zeta-z) + \bs{u}\cdot\nabla(\zeta-z))\delta\mu \\
	&\qquad + \delta\varphi(\partial_t D + \text{div}(D\bs{u}))+ \delta\gamma(\partial_t \rho + \bs{u}\cdot\nabla\rho)\,d^3x\,dt\,.
\end{aligned}
\end{equation*}
The variations with respect to each dynamical variable yields the following independent relations, written in the coordinate-free Lie derivative notation discussed in Appendix \ref{app: FluidTransTheory},
\begin{align*}
	\delta D:&\quad (\partial_t + \mathcal{L}_{\bs{u}})\varphi = \rho\left(\frac{1}{2}|\bs{u}|^2-gz\right) - p \\
	\delta\rho :&\quad 	(\partial_t + \mathcal{L}_{\bs{u}})\left(\frac{\gamma}{D}\right) = \frac{1}{2}|\bs{u}|^2-gz \\
	&\negphantom{$\delta D:$}\hspace{-1pt}\left.\begin{aligned}
		\delta \varphi :& \quad 
		(\partial_t + \mathcal{L}_{\bs{u}})(D \,d^3x) =
		(\partial_t D + \text{div}(D\bs{u}))d^3x =0 \\
		\delta p:&\quad D-1 = 0 
	\end{aligned}\right\} \quad \implies \text{div}\bs{u} =0 \\
	\delta\zeta :&\quad (\partial_t+\mathcal{L}_{\bs{u}})\left(\frac{\mu}{D}\right)=0 \\
	\delta\gamma :&\quad (\partial_t+\mathcal{L}_{\bs{u}})\rho=0\\
	\delta\mu:&\quad (\partial_t+\mathcal{L}_{\bs{u}})(\zeta-z) = 0 \\
	\delta \bs{u}:&\quad \rho \bs{u}\cdot d\bs{x} = \frac{\mu}{D}d(\zeta-z) + d\varphi - \frac{\gamma}{D}d\rho\,,
\end{align*}
where $\mathcal{L}_{\bs{u}}$ denotes Lie derivative with respect to the three-dimensional velocity vector field. We have also imposed natural homogeneous boundary conditions.
Assembling these variational equations leads to the following fluid motion equation,
\begin{align*}
	\rho(\partial_t+\mathcal{L}_{\bs{u}})(\bs{u}\cdot d\bs{x}) &= 0 + d(\partial_t+\mathcal{L}_{\bs{u}})\varphi - (\partial_t+\mathcal{L}_{\bs{u}})\left(\frac{\gamma}{D}\right)d\rho \\
	&= d \bigg(\rho\left(\frac{1}{2}|\bs{u}|^2-gz\right) - p \bigg)- \left(\frac{1}{2}|\bs{u}|^2-gz\right)d\rho \\
	&= \rho d\left(\frac{1}{2}|\bs{u}|^2-gz\right) -dp.
\end{align*}

From these considerations, we have the following set of coordinate-free dynamical equations for the incompressible flow of an inhomogeneous fluid,
\begin{align}
\begin{split}
	(\partial_t+\mathcal{L}_{\bs{u}})(\bs{u}\cdot d\bs{x}) &+ \frac{1}{\rho}dp - d\left(\frac{1}{2}|\bs{u}|^2+ gz \right)= 0\,,\\
	(\partial_t+\mathcal{L}_{\bs{u}})(D\,d^3x) &= 0\,,\quad \hbox{with}\quad D=1\,,\\
	(\partial_t+\mathcal{L}_{\bs{u}})\rho &= 0\,,\\
	(\partial_t+\mathcal{L}_{\bs{u}})(\zeta(x,y,t)-z) &= 0\,.
\end{split}
	\label{EulerPoincareX}
\end{align}
 These equations impose the conditions for incompressible flow $\text{div}\bs{u}=0$ and the constraint that fluid parcels initially on the free surface remain on it.
 
 The system of equations in \eqref{EulerPoincareX} may be evaluated on the free surface immediately by using the coordinate-free identities derived for Lie derivatives in appendix \ref{app: FluidTransTheory}. This evaluation results in the following system of equations in the hat notation from the previous section, 
\begin{align}
\begin{split}
	\big(\p_t + \mathcal{L}_{\bs{\wh{v}}}\big)\big(\bs{\wh{v}}\cdot d\bs{r}\big) 
	&+ \frac{1}{\wh{\rho}}d\wh{p} - d\left(\frac{1}{2}|\bs{\wh{v}}|^2 + \frac{1}{2} \wh{w}^2 + g\zeta \right)= 0\,,\\
	\big(\p_t + \mathcal{L}_{\bs{\wh{v}}}\big)\Big[D\,d^3x\Big]_{z=\zeta(x,y,t)} &= 0\,,\quad \hbox{with}\quad D=1\,,\\
	\big(\p_t + \mathcal{L}_{\bs{\wh{v}}}\big)\wh{\rho} &= 0\,,\\
	\big(\p_t + \mathcal{L}_{\bs{\wh{v}}}\big)(\zeta(x,y,t)-\zeta) &= 0\,.
\end{split}
	\label{EulerPoincareX-hat}
\end{align}
Again, one sees that the free surface fluid equations are not closed. As before, they are missing an evolution equation for $\wh{w}$ and a method of computing, $d\wh{p}$, the gradient of the non-hydrostatic pressure. Also, we see that the evaluation of the 3-form volume element on the free surface $z=\zeta(\bs{r},t)$ vanishes identically, so the connection of the pressure to three-dimensional volume preservation vanishes there, too.
 
\paragraph{Vorticity dynamics.}
Taking the differential (i.e., the curl) of the motion equation  in \eqref{EulerPoincareX}  yields the following equation for the vorticity dynamics $\bs{\omega}:={\rm curl}\bs{u}$,
\begin{align}
	(\partial_t+\mathcal{L}_{\bs{u}})(\bs{\omega}\cdot d\bs{S}) 
	&= -d(\rho^{-1})\wedge dp 
	\,.
	\label{eqn:vorticityX}
\end{align}
Together, the buoyancy equation in \eqref{EulerPoincareX} and the vorticity equation \eqref{eqn:vorticityX} yield an advection equation for the potential vorticity (PV) defined as  $q=\bs{\omega}\cdot\nabla\rho$. Now,  
$\bs{\omega}\cdot d\bs{S} \wedge d\rho = \bs{\omega}\cdot\nabla\rho \,d^3x$ and $\text{div}\bs{u}=0$ imply, 
\begin{align}
(\partial_t+\mathcal{L}_{\bs{u}})\big(\bs{\omega}\cdot d\bs{S} \wedge d\rho\big)
= -d(\rho^{-1})\wedge dp\wedge d\rho = 0
\,.
	\label{eqn:PVgeomX}
\end{align}
Thus, non-alignment of gradients of pressure and density results in local  creation of vorticity. 


One expands equation \eqref{eqn:PVgeomX} to find the advection equation $(\partial_t+\mathcal{L}_{\bs{u}}) q=0$ for potential vorticity $q=\bs{\omega}\cdot\nabla\rho$, by computing
\begin{align}
	(\partial_t+\mathcal{L}_{\bs{u}})\big(\bs{\omega}\cdot d\bs{S} \wedge d\rho\big)
	= (\partial_t+\mathcal{L}_{\bs{u}})\big(\bs{\omega}\cdot\nabla\rho \,d^3x\big)
	= \big((\partial_t+\mathcal{L}_{\bs{u}}) q\big) \,d^3x = 0\,.
	\label{eqn:PVX}
\end{align}
The last line in deriving the PV equation \eqref{eqn:PVX} uses the product rule for the Lie derivative and enforces volume preservation arising from the divergence free condition, 
\begin{align}
(\partial_t+\mathcal{L}_{\bs{u}})d^3x = (\text{div}\bs{u})d^3x = 0\,.
	\label{eqn:volumeX}
\end{align}
Thus, the issues of preservation of volume and potential vorticity are linked in projecting the three-dimensional fluid motion onto the free surface; so, they should be solved together using similar considerations.


\section{Legendre transformation to the Hamiltonian for ACWWE }\label{app:LegXform}

This appendix explains how the Legendre transformation of the augmented Lagrangian in the action integral \eqref{ActionIntegral-FSmod} with respect to the sum of the fluid and wave momentum densities 
\begin{align}
\bs{M} = D\bs{\wh{v}} + \lambda \nabla_{\bs{r}} \zeta 
= D \bs{V}
	\label{WW-total-momap-app}
\end{align}
leads to the ACWW Hamiltonian in \eqref{WW-LP-Ham}, which is also the conserved energy for 
the system of ACWWE  in \eqref{EulerPoincare-ACWW}.

The Lagrangian in the action integral \eqref{ActionIntegral-FSmod} is given in dimensional form  by
\begin{align}
\begin{split}
	\ell(\bs{\wh{v}}, D,\widehat{\phi},{\wh w}; \zeta, \lambda)
	&= \int 
	D\Big(\frac{1}{2}\big(|\bs{\wh{v}}|^2+\widehat{w}^2\big)  - g\zeta \Big)   
	+ \lambda\Big( \partial_t\zeta  + \bs{\wh{v}}\cdot\nabla_{\bs{r}}\zeta - \widehat{w} \Big)
	\\
	&\qquad 
	+ \widehat{\phi}\big(\partial_t D + \text{div}_{\bs{r}}(D\bs{\wh{v}})\big)
	- \epsilon \wh{w} \lambda |\nabla_{\bs{r}}\zeta|^2
	\ d^2r\,.
\end{split}
	\label{ActionIntegral-FSmod-app}
\end{align}

Recall from \eqref{var-eqns-FS} and \eqref{Bernoulli-wavemom-defY} that that Bernoulli function ${\varpi}$ and vertical wave momentum density $\lambda$ are defined as 
\begin{align}
{\varpi} :=  \frac12\big(|\bs{\wh v}|^2+\wh{w}^2\big) -  g\zeta
\,,\qquad
\lambda = \frac{ D {\widehat w}}{1+\epsilon|\nabla_{\bs{r}}\zeta|^2} =: D\widetilde{w}
\,.
	\label{Bernoulli-wavemom-defY-app}
\end{align}
Legendre transforming yields the Hamiltonian, 

\begin{align}
\begin{split}
h(\bs{M},D,\lambda,\zeta) 
&= \int \left(
\frac{\delta \ell}{\delta \bs{\wh{v}}}\cdot  \bs{\wh{v}} + \lambda\p_t\zeta - D\p_t\wh{\phi}
\right)d^2r
- \ell(\bs{\wh{v}}, D,\widehat{\phi},{\wh w}; \zeta, \lambda)
\\&=
\int \left(\frac{1}{2} |\bs{\wh{v}}|^2 - \frac{1}{2} \wh{w}^2 
+ \frac{\lambda}{D} \wh{w}(1+\epsilon|\nabla_{\bs{r}}\zeta|^2)
 + g\zeta \right) \,D\,d^2r
\\&=
\int \left(\frac{1}{2} |\bs{\wh{v}}|^2 + \frac{1}{2} \wh{w}^2 + g\zeta \right) \,D\,d^2r
\\&=
\int \frac{1}{2D} \big|\bs{M} - \lambda \nabla_{\bs{r}} \zeta \big|^2
+ \frac{\lambda^2}{2D}(1+\epsilon|\nabla_{\bs{r}}\zeta|^2)^2 + gD\zeta \,d^2r
\\&=
\int \left(\frac{1}{2} |\widehat{\nabla_{\bs{r}}\phi}|^2 
 + \frac{1}{2} \wh{w}^2
 + g\zeta \right) \,D\,d^2r
\,.\end{split}
	\label{WW-LP-Ham-app}
\end{align}
The corresponding variational derivatives of the Hamiltonian $h(\bs{M},D,\lambda,\zeta)$  applied in the Lie-Poisson formulation in equation \eqref{FS-diag-brkt} are now given by
\begin{align}
\begin{split}
\delta h(\bs{M},D,\lambda,\zeta) 
&=
\int \bs{\wh{v}} \cdot \delta \bs{M}
+ \Big(-\frac{1}{2} |\bs{\wh{v}}|^2 - \frac{1}{2} \wh{w}^2 + g\zeta \Big) \,\delta D
\\&\quad
+ \Big( -  \big(\bs{\wh v} - \epsilon\wh{w}\nabla_{\bs{r}}\zeta \big)\cdot\nabla_{\bs{r}}\zeta + \wh{w}\Big) 
\,\delta \lambda
+ \Big({\rm div}_{\bs{r}}  \Big( \lambda \big(\bs{\wh v} - 2\epsilon\wh{w}\nabla_{\bs{r}}\zeta \big)\Big)
	 + g D\Big) \,\delta \zeta 
\ d^2r
\,.\end{split}
\end{align}

\end{document}